\documentclass[letterpaper]{article}

\usepackage[margin=1.5in]{geometry}
\usepackage{graphicx}
\usepackage{latexsym}
\usepackage{amsmath}
\usepackage{amsfonts,bm}
\usepackage{amsthm}
\usepackage{amssymb}
\usepackage{verbatim}
\usepackage{xspace}

\usepackage[linesnumbered,boxed]{algorithm2e}
\usepackage{epsfig}
\usepackage[usenames]{color}

\definecolor{darkgreen}{rgb}{0,0.5,0}
\definecolor{darkblue}{rgb}{0,0,0.8}
\usepackage{hyperref}
\hypersetup{
    unicode=false,          
    colorlinks=true,        
    linkcolor=darkblue,          
    citecolor=darkgreen,        
    filecolor=magenta,      
    urlcolor=cyan           
}

\usepackage[capitalize]{cleveref}

\newtheorem{theorem}{Theorem}
\newtheorem{lemma}[theorem]{Lemma}
\newtheorem{corollary}[theorem]{Corollary}

\newtheorem{example}{Example}

\newtheorem{definition}{Definition}

\newcommand{\func}[2]{{#1}\!\left(#2\right)}

\newcommand{\B}[3]{\mathcal{B}_{#1,#2,#3}}
\newcommand{\Bk}[4]{\B{#1}{#2}{#3}^{#4}}
\newcommand{\Bup}[3]{\Bk{#1}{#2}{#3}{\scriptscriptstyle{\uparrow}}}
\newcommand{\Bdown}[3]{\Bk{#1}{#2}{#3}{\scriptscriptstyle{\downarrow}}}

\newcommand{\M}[1]{M_{#1}}

\newcommand{\requal}[1]{\;\overset{\scriptscriptstyle{#1}}{\scriptscriptstyle{=}}\;}

\newcommand{\vektor}[1]{\ensuremath{\underline{#1}}}
\newcommand{\lp}{\ensuremath{\mathrm{LP}}}

\newcommand{\ijmin}{i_{j_\mathrm{min}}}

\newcommand{\LS}{\ensuremath{\mathcal{LS}}}

\newcommand{\set}[1]{\left\{ #1 \right\}}
\newcommand{\eps}{\ensuremath{\varepsilon}}

\DeclareMathOperator{\Oh}{O}
\DeclareMathOperator{\E}{E}
\DeclareMathOperator{\prob}{Pr}

\title{Local Computation: Lower and Upper Bounds\footnote{This paper
    is based in part on work that has appeared in the following two
    preliminary versions: \emph{What Cannot Be Computed Locally}, In
    \emph{Proceedings of the 23rd ACM Symposium on the Principles of
      Distributed Computing (PODC)}, St. John's, Canada,
    2004~\cite{kuhn-podc04} and \emph{The Price of Being
      Near-Sighted}, In \emph{Proceedings of the 17th ACM-SIAM
      Symposium on Discrete Algorithms (SODA)}, Miami, Florida,
    2006~\cite{kuhn-soda06}. We are grateful to Bar-Yehuda,
    Censor-Hillel, and Schwartzman \cite{baryehuda16} for pointing out
    an error in an earlier draft \cite{previousversion} of this
    paper.}}

\author {
  Fabian Kuhn$^1$,  Thomas Moscibroda$^2$, Roger Wattenhofer$^3$\\[2mm]
   $^1$kuhn@cs.uni-freiburg.de, University of Freiburg, Germany\\
   $^2$moscitho@microsoft.com, Microsoft Research, Beijing, China\\
  $^3$wattenhofer@ethz.ch, ETH Zurich, Switzerland}

\date{}

\begin{document}

\maketitle

\begin{abstract}
  The question of what can be computed, and how efficiently, are at
  the core of computer science. Not surprisingly, in distributed
  systems and networking research, an equally fundamental question is
  what can be computed in a \emph{distributed} fashion. More
  precisely, if nodes of a network must base their decision on
  information in their local neighborhood only, how well can they
  compute or approximate a global (optimization) problem? In this
  paper we give the first poly-logarithmic lower bound on such local
  computation for (optimization) problems including minimum vertex
  cover, minimum (connected) dominating set, maximum matching, maximal
  independent set, and maximal matching. In addition we present a new
  distributed algorithm for solving general covering and packing
  linear programs. For some problems this algorithm is tight with the
  lower bounds, for others it is a distributed approximation scheme.
  Together, our lower and upper bounds establish the local
  computability and approximability of a large class of problems,
  characterizing how much local information is required to solve these
  tasks.
\end{abstract}

\renewcommand{\include}{\input}

\section{Introduction}\label{sec:intro}

Many of the most fascinating systems in the world
are large and complex networks, such as the human society, the
Internet, or the brain. Such systems have in common that they are
composed of a multiplicity of individual entities,
so-called \emph{nodes}; human beings in society, hosts in the
Internet, or neurons in the brain.
Each individual node can directly communicate only to a small number
of neighboring nodes. For instance, most human communication is
between acquaintances or within the family, and neurons are directly
linked with merely a relatively small number of other neurons. On
the other hand, in spite of each node being inherently
``near-sighted,'' i.e., restricted to \emph{local} communication,
the entirety of the system is supposed to work towards some kind of
\emph{global} goal, solution, or equilibrium.

In this work we investigate the possibilities and limitations of
\emph{local computation}, i.e., to what degree local information is
sufficient to solve global tasks. Many tasks can be solved entirely locally, for
instance, how many friends of friends one has. Clearly, only local communication is required to answer this question. Many other tasks are
inherently global, for instance, counting the total number of nodes or determining the diameter of the system. To solve such global
problems, some information must traverse across the entire network.

Are there natural tasks that are in the middle
of these two extremes, tasks that are neither completely local nor
inherently global? In this paper we answer this question
affirmatively. Assume for example that the nodes want to organize
themselves, some nodes should be masters, the others will be slaves.
The rules are that no two masters shall be direct neighbors, but
every slave must have at least one master as direct neighbor. In
graph theory, this problem is known as the \emph{maximal independent
set} (MIS) problem. At first, this problem seems local since the
rules are completely local. Consequently one might hope for a solution where
each node can communicate with its neighbors a few times, and
together they can decide who will become master and who will become
slave. However, as we show in this paper, this intuition is
misleading. Even though the problem can be defined in a purely local way, it cannot be solved
using local information only! No matter how the system tackles the
problem, no matter what protocol or algorithm the nodes use,
non-local information is vital to solve the task. On the other hand,
the problem is also not global: Mid-range information is enough to
solve the problem. As such the MIS problem establishes an example
that is neither local nor global, but in-between these extremes.
As it turns out to be polylogarithmic in the number of nodes, we call it \emph{polylog-local}.
Using \emph{locality-preserving reductions} we are able to show that
there exists a whole class of polylog-local problems.

We show that this class of polylog-local problems also includes
approximation variants of various combinatorial optimization problems, such as minimum vertex cover,
minimum dominating set, or maximum matching. In such problems, each
node must base its decision (for example whether or not to join the
dominating set) only on information about its local neighborhood,
and yet, the goal is to collectively achieve a good approximation to
the globally optimal solution. Studying such \emph{local
approximation algorithms} is particularly interesting because it
sheds light on the trade-off between the amount of available local
information and the resulting global optimality. Specifically, it
characterizes the amount of information needed in distributed
decision making: what can be done with the information that is
available within some fixed-size neighborhood of a node. Positive
and negative results for local algorithms can thus be interpreted as
information-theoretic upper and lower bounds; they give insight into
the value of information.

We believe that studying the fundamental possibilities and
limitations of local computation is of interest to theoreticians in
approximation theory, distributed computing, and graph theory.
Furthermore, our results may be of interest for a wide range of
scientific areas, for instance dynamic systems that change over
time. Our theory shows that small changes in a dynamic system may
cause an intermediate (or polylog-local) ``butterfly effect,'' and it
gives non-trivial bounds for self-healing or self-organizing
systems, such as self-assembling robots. It also establishes bounds
for further application areas, initially in engineering and
computing, possibly extending to other areas studying large-scale
systems, e.g., social science, finance, neural networks, or ant
colonies.

\subsection{Model and Notation}\label{sec:model}

\textbf{Local Computations: }We consider a distributed system in
which distributed decision makers at the nodes of a graph must base
their computations and decisions on the knowledge about their local
neighborhoods in the graph. Formally, we are given a graph
$G=(V,E)$, $|V|=n$, and a parameter $k$ ($k$ might depend on $n$ or
some other property of $G$). At each node $v\in V$ there is an
independent agent (for simplicity, we identify the agent at node $v$
with $v$ as well). Every node $v\in V$ has a unique identifier
$id(v)$\footnote{All our
  results hold for any possible ID space including the standard case
  where IDs are the numbers $1,\dots,n$.} and possibly some additional
input. We assume that each node $v\in V$ can learn the complete
neighborhood $\Gamma_k(v)$ up to distance $k$ in $G$ (see below for
a formal definition of $\Gamma_k(v)$). Based on this information,
all nodes need to make independent computations and need to
individually decide on their outputs without communicating with each
other. Hence, the output of each node $v\in V$ can be computed as a
function of it's $k$-neighborhood $\Gamma_k(v)$.

\textbf{Synchronous Message Passing Model: }The described
graph-theoretic local computation model is equivalent to the classic
\emph{message passing} model of distributed computing. In this
model, the distributed system is modeled as a point-to-point
communication network, described by an undirected graph $G=(V,E)$,
in which each vertex $v\in V$ represents a node (host, device,
processor, \ldots) of the network, and an edge $(u,v)\in E$ is a
bidirectional communication channel that connects the two nodes.
Initially, nodes have no knowledge about the network graph; they
only know their own identifier and potential additional inputs. All
nodes wake up simultaneously and computation proceeds in synchronous
\emph{rounds}. In each round, every node can send one, arbitrarily
long message to each of its neighbors. Since we consider
point-to-point networks, a node may send different messages to
different neighbors in the same round. Additionally, every node is
allowed to perform local computations based on information obtained
in messages of previous rounds. Communication is reliable, i.e.,
every message that is sent during a communication round is correctly
received by the end of the round.  An algorithm's \emph{time
complexity} is defined as the number of communication rounds until
all nodes terminate.\footnote{Notice that this synchronous message passing model captures many practical systems, including for example, Google's Pregel system, a practically implemented computational model suitable for computing problems in large graphs~\cite{pregel}. }

The above is a standard model of distributed computing and is
generally known as the LOCAL model~\cite{pelegbuch,linial92}. It is
the strongest possible model when studying the impact of
locally-restricted knowledge on computability, because it focuses
entirely on the locality of distributed problems and abstracts away
other issues arising in the design of distributed algorithms (e.g.,
need for small messages, fast local computations, congestion,
asynchrony, packet loss, etc.). It is thus the most fundamental
model for proving lower bounds on local computation~\cite{linial92};
because any lower bound is a true consequence of locality
restrictions.

\textbf{Equivalence of Time Complexity and Neighborhood-Information:
}There is a one-to-one correspondence between the \emph{time
complexity of distributed algorithms} in the LOCAL model and the
graph theoretic notion of \emph{neighborhood-information}. In
particular, a distributed algorithm with time-complexity $k$ (i.e.,
in which each node performs $k$ communication rounds) is equivalent
to a scenario in which distributed decision makers at the nodes of a
graph must base their decision on (complete) knowledge about their
$k$-hop neighborhood $\Gamma_k(v)$ only. This is true because with
unlimited sized messages, every node $v\in V$ can easily collect all
IDs and interconnections of all nodes in its $k$-hop neighborhood in
$k$ communication rounds.
On the other hand, a node $v$ clearly cannot obtain any information
from a node at distance $k+1$ or further away, because this
information would require more than $k$ rounds to reach $v$. Thus,
the LOCAL model relates distributed computation to the
\emph{algorithmic theory of the value of information} as studied for
example in \cite{papadimitriou-stoc93}: the question of \emph{how
much
  local knowledge} is required for distributed decision makers to
solve a global task or approximate a global goal is equivalent to
the question of \emph{how many communication rounds} are required by a
distributed algorithm to solve the task.

\textbf{Notation: } For nodes $u,v\in V$ and a graph $G=(V,E)$,
we denote the shortest-path distance between $u$ and $v$ by
$d_G(u,v)$. Let $\Gamma_k(v)$ be the $k$-hop neighborhood of a
node $v\in V$. Formally, we define $\Gamma_k(v)
:= \set{u\in V:d_G(u,v)\leq k}$. We also use the shortcut
$\Gamma_v:=\Gamma_1(v)$, that is, $\Gamma_v$ is the (inclusive) neighborhood of
$v$. In a local computation with $k$-hop neighborhood information (or
equivalently, in any distributed algorithm with time complexity $k$),
each node has a \emph{partial view} of the graph and must base its
algorithm's outcome solely on information obtained in
$\Gamma_k(v)$. Formally, let $\mathcal{T}_{v,k}$ be the topology seen
by $v$ after $k$ rounds in a distributed algorithm, i.e.,
$\mathcal{T}_{v,k}$ is the graph induced by the $k$-neighborhood of
$v$ where edges between nodes at exactly distance $k$ are excluded.
The \emph{labeling} (i.e., the assignment of identifiers to nodes) of
$\mathcal{T}_{v,k}$ is denoted by
$\mathcal{L}(\mathcal{T}_{v,k})$. The \emph{view} of a node $v$ is the
pair
$\mathcal{V}_{v,k}:=(\mathcal{T}_{v,k},\mathcal{L}(\mathcal{T}_{v,k})
)$. Any deterministic distributed algorithm can be regarded as a
function mapping $(\mathcal{T}_{v,k},\mathcal{L}(\mathcal{T}_{v,k}))$
to the possible outputs. For randomized algorithms, the outcome of $v$
is also dependent on the randomness computed by the nodes in
$\mathcal{T}_{v,k}$.


\subsection{Problem Definitions}
\label{sec:problems}


In this paper, we study several standard combinatorial optimization
problems (and their natural relaxations) that intuitively appear to
be local, yet turn out to be neither completely local nor global.
Specifically, we consider the following standard optimization
problems in graphs:
\begin{itemize}
\item \textbf{Minimum Vertex Cover (MVC):} Given a graph $G=(V,E)$,
  find a minimum vertex subset $S\subseteq V$, such that for each edge in
  $E$, at least one of its endpoints is in $S$.
\item \textbf{Minimum Dominating Set (MDS):} Given a graph $G=(V,E)$,
  find a minimum vertex subset $S\subseteq V$, such that for each
  node $v\in V$, either $v\in S$ or at least one neighbor of $v$ must
  be in $S$.
\item \textbf{Minimum Connected Dominsting Set (MCDS):} Given a graph
  $G=(V,E)$, find a minimum dominating set $S\subseteq V$, such that
  the graph $G[S]$ induced by $S$ is connected.
\item \textbf{Maximum Matching (MaxM):} Given a graph $G=(V,E)$, find a
  maximum edge subset $T\subseteq E$, such that no two edges in $T$
  are adjacent.
\end{itemize}
In all these cases, we consider the respective problem on the network
graph, i.e., on the graph representing the network. In addition to the
above mentioned problems, we study their natural linear programming
relaxations as well as a slightly more general class of linear programs
(LP) in a distributed context. Consider an LP and its corresponding
dual LP in the following canonical forms:

\hspace*{\fill}\parbox{45mm}{
  \begin{align*}
    \min && \vektor{c}^\mathrm{T}\vektor{x} & \\
    \text{s.\ t.} && A\cdot\vektor{x} & \ge \vektor{b} \\
    && \vektor{x} & \geq \vektor{0}.
  \end{align*}}\hfill
\parbox{1cm}{\hfill(P)}\ \
\hspace*{\fill}\parbox{45mm}{
  \begin{align*}
    \min && \vektor{b}^\mathrm{T}\vektor{y} & \\
    \text{s.\ t.} && A^{\mathrm{T}}\cdot\vektor{y} & \le \vektor{c} \\
    && \vektor{y} & \geq \vektor{0}.
  \end{align*}}\hfill
\parbox{1cm}{\hfill(D)}

We call an LP in form (P) to be in primal canonical form (or just in
canonical form) and an LP in form (D) to be in dual canonical form. If
all the coefficients of $\vektor{b}$, $\vektor{c}$, and $A$ are
non-negative, primal and dual LPs in canonical forms are called
\emph{covering} and \emph{packing} LPs, respectively. The
relaxations of vertex cover and dominating set are covering LPs,
whereas the relaxation of matching is a packing LP.

While there is an obvious way to interpret graph problems such as
vertex cover, dominating set, or matching as a distributed problem,
general LPs have no immediate distributed meaning.  We use a natural
mapping of an LP to a network graph, which was introduced
in~\cite{papadimitriou-stoc93} and applied in~\cite{bartal97}. For
each primal variable $x_i$ and for each dual variable $y_j$, there
are nodes $v_i^p$ and $v_j^d$, respectively. We denote the set of
primal variables by $V_p$ and the set of dual variables by $V_d$.
The network graph $G_{\lp}=(V_p\dot{\cup} V_d,E)$ is a bipartite
graph with the edge set
\[
E\ :=\ \left\{(v_i^p,v_j^d)\in V_p\times V_d\,\big|\ a_{ji}\not=0\right\},
\]
where $a_{ji}$ is the entry of row $j$ and column $i$ of $A$. We
define $n_p:=|V_p|$ and $n_d:=|V_d|$, that is, $A$ is a $(n_d\times
n_p)$-matrix. Further, the maximum primal and dual degrees are denoted
by $\Delta_p$ and $\Delta_d$, respectively. In most real-world
examples of distributed LPs and their corresponding combinatorial
optimization problems, the network graph is closely related to the
graph $G_{\lp}$ such that any computation on $G_{\lp}$ can efficiently
be simulated in the actual network.

In the context of local computation, each node $v\in V$ has to
independently decide whether it joins a vertex cover or dominating
set, which of its incident edges should participate in a matching,
or what variable its corresponding variable gets assigned when
solving an LP. Based on local knowledge, the nodes thus seek to
produce a feasible \emph{approximation} to the global optimization
problem. Depending on the number of rounds nodes communicate---and
thus on the amount of local knowledge available at the nodes---, the
quality of the solution that can be computed differs. We seek to
understand the trade-off between the amount of local knowledge (or
communication between nodes) and the resulting approximation to the
global problem.

In addition to these optimization problems, we also consider
important binary problems, including:
\begin{itemize}
\item \textbf{Maximal Independent Set (MIS):} Given a graph $G=(V,E)$,
  select an inclusion-maximal vertex subset $S\subseteq V$, such that
  no two nodes in $S$ are neighbors.
\item \textbf{Maximal Matching (MM):} Given a $G=(V,E)$, select an
  inclusion-maximal edge subset $T\subseteq E$, such that no two edges
  in $T$ are adjacent.
\end{itemize}
For such problems, we are interested in the question, how much local
information is required such that distributed decision makers are able
to compute fundamental graph-theoretic structures, such as an MIS or
an MM. Whereas most of the described combinatorial optimization
problems are NP-hard and thus, unless $\mathrm{P}=\mathrm{NP}$, even
with global knowledge, algorithms can compute only approximations to
the optimum, an MIS or an MM can trivially be computed with global
knowledge. The question is thus how much \emph{local} knowledge is
required to solve these tasks.


\subsection{Contributions}

Our main results are a lower bound on the distributed
approximability of the minimum vertex cover problem in \Cref{sec:lower} as well as a generic algorithm for covering and
packing LPs of the form (P) and (D) in \Cref{sec:upper},
respectively. Both results are accompanied by various extensions and
adaptations to the other problems introduced in \Cref{sec:problems}. It follows from our discussion that these
results imply strong lower and upper bounds on the amount of local
information required to solve/approximate global tasks.

For the MVC lower bound, we show that for every $k>0$, there exists a
graph $G$ such that every $k$-round distributed algorithm for the MVC
problem has approximation ratios at least
\[
\Omega\left(\frac{n^{c/k^2}}{k}\right)\;\ \text{and}\quad
\Omega\left(\frac{\Delta^{1/(k+1)}}{k}\right)
\]
for a positive constant $c$, where $n$ and $\Delta$ denote
the number of nodes and the highest degree of $G$, respectively.
Choosing $k$ appropriately, this implies that to achieve
a constant approximation ratio, every MVC
algorithm requires at least $\Omega\big(\sqrt{\log n/\log\log n}\big)$ and
$\Omega\big(\log\Delta/\log\log \Delta\big)$ rounds,
respectively. All bounds also hold for randomized algorithms. Using
reductions that preserve the locality properties of the considered
graph, we show that the same lower bounds also hold for the
distributed approximation of the minimum dominating set and maximum
matching problems. Because MVC and MaxM are covering and packing
problems with constant integrality gap, the same lower bounds are
also true for general distributed covering and packing LPs of the
form (P) and (D). Furthermore, using locality-preserving reductions,
we also derive lower bounds on the amount of local information
required at each node to collectively compute important structures
such as an MIS or a maximal matching in the network graph.
Finally, a simple girth argument can be used to show that for the
connected dominating set problem, even stronger lower bounds are
true. We show that in $k$ rounds, no algorithm can have an
approximation ratio that is better than $n^{c/k}$ for some positive
constant $c$. This implies that for a polylogarithmic approximation
ratio, $\Omega(\log(n)/\log\log(n))$ rounds are needed.

We show that the above lower bound results that depend on $\Delta$ are
asymptotically almost tight for the MVC and
MaxM problem by giving an algorithm that obtains $\Oh(\Delta^{c/k})$
approximations with $k$ hops of information for a positive constant
$c$. That is, a constant approximation to MVC can be computed with
every node having $\Oh(\log\Delta)$-hop information and any
  polylogarithmic approximation ratio can be achieved in
  $\Oh(\log\Delta/\log\log\Delta)$ rounds. In recent work, it has been
shown that also a constant approximation can be obtained in time
$\Oh(\log\Delta/\log\log\Delta)$ and thus as a function of $\Delta$,
our MVC lower bound is also tight for contant approximation ratios \cite{baryehuda16}. Our main
upper bound result is a distributed algorithm to solve general
covering and packing LPs of the form (P) and (D). We show that with
$k$ hops of information, again for some positive constant $c$, a
$n^{c/k}$-approximation can be computed. As a consequence, by
choosing $k$ large enough, we also get a distributed approximation
scheme for this class of problems. For $\eps>0$, the algorithm
allows to compute an $(1+\eps)$-approximation in $\Oh(\log(n)/\eps)$
rounds of communication. Using a distributed randomized rounding
scheme, good solutions to fractional covering and packing problems
can be converted into good integer solutions in many cases. In
particular, we obtain the currently best distributed dominating set
algorithm, which achieves a $(1+\eps)\ln\Delta$-approximation for
MDS in $\Oh(\log(n)/\eps)$ rounds for $\eps>0$. Finally, we extend
the MDS result to connected dominating sets and show that up to
constant factors in approximation ratio and time complexity, we can
achieve the same time-approximation trade-off as for the MDS problem
also for the CDS problem.


\section{Related Work}\label{sec:relwork}

\textbf{Local Computation: }Local algorithms have first been studied
in the Mid-1980s \cite{luby86,cole86}. The basic motivation was the
question whether one can build efficient network algorithms, where
each node only knows about its immediate neighborhood. However, even
today, relatively little is known about the fundamental limitations
of local computability. Similarly, little is known about \emph{local
approximability}, i.e., how well combinatorial optimization problems
can be approximated if each node has to decide individually based
only on knowledge available in its neighborhood.

Linial's seminal $\Omega(\log^*\!n)$ time lower bound for
constructing a maximal independent set on a ring~\cite{linial92} is
virtually the only non-trivial lower bound for local
computation.\footnote{There are of course numerous lower bounds and
impossibility results in distributed computing~\cite{fich03}, but
they apply to computational models where locality is not the key
issue. Instead, the restrictive factors are usually aspects such as
bounded message size~\cite{elkin-stoc04,DasSarma2012}, asynchrony, or faulty
processors.} Linial's lower bound shows that the non-uniform
$O(\log^* \! n)$ coloring algorithm by Cole and Vishkin
\cite{cole86} is asymptotically optimal for the ring. It has
recently been extended to other problems
\cite{czygrinow08fast,lenzen08leveraging}. On the other hand, it was
later shown that there exist non-trivial problems that can indeed be
computed \emph{strictly locally}. Specifically, Naor and Stockmeyer
present locally checkable labelings which can be computed in
constant time, i.e., with purely local information~\cite{naor93}.

There has also been significant work on (parallel) algorithms for
approximating packing and covering problems that are faster than
interior-point methods that can be applied to general LPs
(e.g.~\cite{fleischer00,plotkin95,young-focs01}). However, these
algorithms are not local as they need at least some global
information to work.\footnote{In general, a local algorithm provides
an efficient algorithm in the PRAM model of parallel computing, but
a PRAM algorithm is not necessarily local
~\cite{wattenhofer-disc04}.} The problem of approximating positive
LPs using only local information has been introduced
in~\cite{papadimitriou-podc91,papadimitriou-stoc93}. The first
algorithm achieving a constant approximation for general covering
and packing problems in polylogarithmic time is described in
\cite{bartal97}. Distributed (approximation) algorithms targeted for
specific covering and packing problems include algorithms for the
minimum dominating set problem
\cite{dubhashi03,jia-podc01,rajagopalan98,kuhn-podc03} as well as
algorithms for maximal matchings and maximal independent sets
\cite{alon86,israeli86,luby86}. We also refer to the survey
in~\cite{elkin04survey}.

While local computation was always considered an interesting and
elegant research question, 
several new application domains, such as overlay
or sensor networks, have reignited the attention to the area. 
Partly driven by these new application domains,
and partly due to the lower bounds presented in this paper, research
in the last five years has concentrated on restricted graph
topologies, such as unit disk graphs, bounded-growth graphs, or planar graphs. A
survey covering this more recent work is~\cite{suomela11survey}.

\textbf{Self-Organization \& Fault-Tolerance: }Looking at the wider
picture, one may argue that local algorithms even go back to the
early 1970s when Dijkstra introduced the concept of
\emph{self-stabilization}
\cite{dijkstra73self-stabilization,dijkstra74self-stabilizing}. A
self-stabilizing system must survive arbitrary failures, including
for instance a total wipe out of volatile memory at all nodes. The
system must self-heal and eventually converge to a correct state
from any arbitrary starting state, provided that no further faults
occur.

It seems that the world of self-stabilization (which is
asynchronous, long-lived, and full of malicious failures) has
nothing in common with the world of local algorithms (which is
synchronous, one-shot, and free of failures). However, as shown 20
years ago, this perception is
incorrect~\cite{awerbuch88dynamic,AfekKY90,awerbuch91distributed};
indeed it can easily be shown that the two areas are related.
Intuitively, this is because (i)~asynchronous systems can be made
synchronous, (ii)~self-stabilization concentrates on the case after
the last failure, when all parts of the system are correct again,
and (iii)~one-shot algorithms can just be executed in an infinite
loop. Thus, efficient self-stabilization essentially boils down to
local algorithms and hence, local algorithms are the key to
understanding fault-tolerance \cite{SSS2009}.

Likewise, local algorithms help to understand \emph{dynamic
networks}, in which the topology of the system is constantly
changing, either because of churn (nodes constantly joining or
leaving as in peer-to-peer systems), mobility (edge changes because
of mobile nodes in mobile networks), changing environmental
conditions (edge changes in wireless networks), or algorithmic
dynamics (edge changes because of algorithmic decisions in overlay
networks). In dynamic networks, no node in the network is capable of
keeping up-to-date global information on the network. Instead, nodes
have to perform their intended (global) task based on local
information only. In other words, all computation in these systems
is inherently local! By using local algorithms, it is guaranteed
that dynamics only affect a restricted neighborhood.
 Indeed, to the best of our knowledge, local algorithms yield the best solutions when it comes to
  dynamics. Dynamics also play a natural role in the area of self-assembly (DNA
computing, self-assembling robots, shape-shifting systems, or
claytronics), and as such it is not surprising that local algorithms
are being considered a key to understanding self-assembling systems
\cite{sterling09,claytronics}.

\textbf{Other Applications: }Local computation has also been
considered in a non-distributed (sequential) context. One example
are \emph{sublinear time algorithms}, i.e., algorithms that cannot
read the entire input, but must give (estimative) answers based on
samples only. For example, the local algorithms given in
\Cref{sec:upper} are used by Parnas and Ron \cite{parnas07}
to design a sublinear- or even constant-time sequential
approximation algorithms. In some sense the local algorithm plays
the role of an oracle that will be queried by random sampling, see
also \cite{nguyen08constant-time}.

There has recently been significant interest in the database community about 
the Pregel system~\cite{pregel}, a practically implemented computational model suitable for computing problems in large graphs. All our lower bounds directly apply to Pregel, i.e., they show how many iterations are required to solve certain tasks; while our upper bounds provide optimal or near-optimal algorithms in a Pregel-like message-passing system.

Finally, the term ``local(ity)'' is used in various different
contexts in computer science. The most common use may be
\emph{locality of reference} in software engineering. The basic idea
is that data and variables that are frequently accessed together
should also be physically stored together in order to facilitate
techniques such as caching and pre-fetching. At first glance, our
definition of locality does not seem to be related at all with
locality in software engineering. However, such a conclusion may be
premature. One may for instance consider a multi-core system where
different threads operate on different parts of data, and sometimes
share data. Two threads should never manipulate the same data at the
same time, as this may cause inconsistencies. At runtime, threads
may figure out whether they have conflicts with other threads,
however, there is no ``global picture''. One may model such a
multi-thread system with a virtual graph, with threads being nodes,
and two threads having a conflict by an edge between the two nodes.
Again, local algorithms (in particular maximal independent set or
vertex coloring) might help to efficiently schedule threads in a
non-conflicting way. At this stage, this is mostly a theoretical
vision \cite{schneider09isaac}, but with the rapid growth of
multi-core systems, it may get practical sooner than expected.


\newcommand{\tree}[1]{\left[#1\right]}
\newcommand{\trmul}{\cdot}
\newcommand{\tradd}{\uplus}
\newcommand{\bigtradd}{\biguplus}

\section{Local Computation: Lower Bound}
\label{sec:lower}

The proofs of our lower bounds are based on the \emph{timeless
indistinguishability} argument \cite{fischer85,lamport82}. In $k$
rounds of communication, a network node can only gather
information about nodes which are at most $k$ hops away and hence,
only this information can be used to determine the computation's
outcome. If we can show that within their $k$-hop neighborhood many
nodes see exactly the same graph topology; informally speaking,
all these nodes are equally qualified to join the MIS, dominating
set, or vertex cover. The challenge is now to construct the graph
in such a way that selecting the wrong subset of these nodes is
ruinous.

We first construct a hard graph for the MVC problem because i) it has
a particularly simple combinatorial structure, and ii) it appears to
be an ideal candidate for local computation. At least when only
requiring relatively loose approximation guarantees, intuitively, a
node should be able to decide whether or not to join the vertex cover
using information from its local neighborhood only; very distant nodes
appear to be superfluous for its decision. Our proof shows that this
intuition is misleading and even such a seemingly simple problem such
as approximating MVC is not purely local; it cannot be approximated
well in a constant number of communication rounds. Our \emph{hardness
  of distributed approximation} lower bounds for MVC holds even for
\emph{randomized algorithms} as well as for the \emph{fractional}
version of MVC. We extend the result to other problems in
\Cref{sec:reductions}.

\textbf{Proof Outline: }The basic idea is to
construct a graph $G_k=(V,E)$, for each positive integer $k$. In
$G_k$, there are many neighboring nodes that see exactly the same
topology in their $k$-hop neighborhood, that is, no distributed
algorithm with running time at most $k$ can distinguish between
these nodes. Informally speaking, both neighbors are equally
qualified to join the vertex cover. However, choosing the wrong
neighbors in $G_k$ will be ruinous.

$G_k$ contains a bipartite subgraph $S$ with node set $C_0\cup
C_1$ and edges in $C_0\times C_1$ as shown in \Cref{fig:ct1}. Set $C_0$ consists of $n_0$ nodes each of which has
$\delta_0$ neighbors in $C_1$. Each of the
$n_0\cdot\frac{\delta_0}{\delta_1}$ nodes in $C_1$ has $\delta_1$,
$\delta_1 > \delta_0$, neighbors in $C_0$. The goal is to
construct $G_k$ in such a way that all nodes in $v\in S$ see the
same topology $\mathcal{T}_{v,k}$ within distance $k$. In a
globally optimal solution, all edges of $S$ may be covered by
nodes in $C_1$ and hence, no node in $C_0$ needs to join the
vertex cover. In a local algorithm, however, the decision of
whether or not a node joins the vertex cover depends only on its
local view, that is, the pair
$(\mathcal{T}_{v,k},\mathcal{L}(\mathcal{T}_{v,k}))$. We show that
because adjacent nodes in $S$ see the same $\mathcal{T}_{v,k}$,
every algorithm adds a large portion of nodes in $C_0$ to its
vertex cover in order to end up with a feasible solution. 
This yields suboptimal local decisions and hence, a suboptimal
approximation ratio. Throughout the proof, $C_0$ and $C_1$ denote
the two sets of the bipartite subgraph $S$.

The proof is organized as follows. The structure of $G_k$ is
defined in \Cref{sec:graph}. In
\Cref{sec:construction}, we show how $G_k$ can be
constructed without small cycles, ensuring that each node sees a
tree within distance $k$. \Cref{sec:equality} proves
that adjacent nodes in $C_0$ and $C_1$ have the same view
$\mathcal{T}_{v,k}$ and finally, \Cref{sec:analysis}
derives the local approximability lower bounds.

\subsection{The Cluster Tree}\label{sec:graph}
The nodes of graph $G_k=(V,E)$ can be grouped into disjoint sets
which are linked to each other as bipartite graphs. We call these
disjoint sets of nodes \emph{clusters}. The structure of $G_k$ is
defined using a directed tree $CT_k=(\mathcal{C},\mathcal{A})$
with doubly labeled arcs
$\ell:\mathcal{A}\rightarrow\mathbb{N}\times\mathbb{N}$. We refer
to $CT_k$ as the \emph{cluster tree}, because each vertex $C\in
\mathcal{C}$ represents a cluster of nodes in $G_k$. The
\emph{size} of a cluster $|C|$ is the number of nodes the cluster
contains. An arc $a=(C,D) \in \mathcal{A}$ with
$\ell(a)=(\delta_C,\delta_D)$ denotes that the clusters $C$ and
$D$ are linked as a bipartite graph, such that each node $u \in C$
has $\delta_C$ neighbors in $D$ and each node $v\in D$ has
$\delta_D$ neighbors in $C$. It follows that $|C|\cdot\delta_C =
|D|\cdot\delta_D$. We call a cluster \emph{leaf-cluster} if it is
adjacent to only one other cluster, and we call it
\emph{inner-cluster} otherwise.

\begin{definition}\label{defn:clustertree}
The cluster tree $CT_k$ is recursively defined as follows:
  \begin{eqnarray*}
    CT_1\!\!& := &\!\!(\mathcal{C}_1, \mathcal{A}_1), \quad
    \mathcal{C}_1\ :=\  \{C_0,C_1,C_2,C_3\} \\
    \mathcal{A}_1\!\!& := &\!\!\{(C_0,C_1),(C_0,C_2),(C_1,C_3)\} \\
    \ell(C_0,C_1)\!\!& := &\!\!(\delta_0,\delta_1), \quad
    \ell(C_0,C_2)\ :=\ (\delta_1,\delta_2), \\
    \ell(C_1,C_3)\!\!& := &\!\!(\delta_0,\delta_1)
  \end{eqnarray*}
Given $CT_{k-1}$, we obtain $CT_k$ in two steps:
  \begin{itemize}
    \item For each inner-cluster $C_i$, add a new leaf-cluster $C'_i$ with
    $\ell(C_i,C'_i):=(\delta_k,\delta_{k+1})$.
  \item For each leaf-cluster $C_i$ of $CT_{k-1}$ with
    $(C_{i'},C_i)\in\mathcal{A}$ and
    $\ell(C_{i'},C_i)=(\delta_p,\delta_{p+1})$, add $k\!-\!1$ new
    leaf-clusters $C'_j$ with
    $\ell(C_i,C'_j):=(\delta_j,\delta_{j+1})$ for $j=0\ldots k, j\neq
    p+1$.
\end{itemize}
Further, we define $|C_0|=n_0$ for all $CT_k$.
\end{definition}

\begin{figure}[t]
  \begin{center}
    \includegraphics[width=0.9\columnwidth]{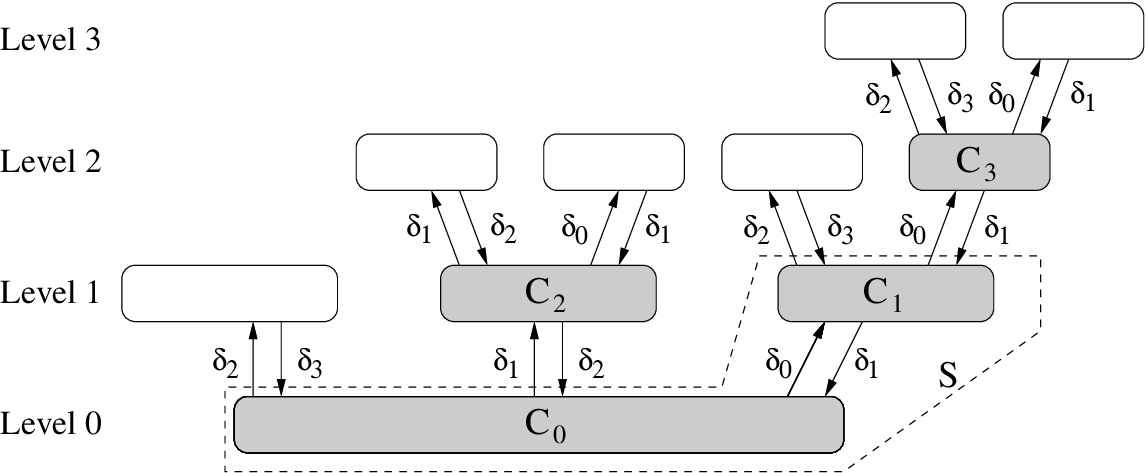}
    \caption{Cluster-Tree $CT_2$.}\label{fig:ct1}
  \end{center}
\end{figure}

\Cref{fig:ct1} shows $CT_2$. The shaded subgraph corresponds
to $CT_1$. The labels of each arc $a\in\mathcal{A}$ are of the
form $\ell(a)=(\delta_l,\delta_{l+1})$ for some $l\in \{0,\ldots
,k\}$. Further, setting $|C_0|=n_0$ uniquely determines the size
of all other clusters. In order to simplify the upcoming study of
the cluster tree, we need two additional definitions. The
\emph{level} of a cluster is the distance to $C_0$ in the cluster
tree (cf.~\Cref{fig:ct1}). The \emph{depth} of a cluster $C$
is its distance to the furthest leaf in the subtree rooted at $C$.
Hence, the depth of a cluster plus one equals the height of the
subtree corresponding to $C$. In the example of
\Cref{fig:ct1}, the depths of $C_0$, $C_1$, $C_2$, and $C_3$
are $3$, $2$, $1$, and $1$, respectively.

Note that $CT_k$ describes the general structure of $G_k$, i.e., it
defines for each node the number of neighbors in each cluster.
However, $CT_k$ does not specify the actual adjacencies. In the
next subsection, we show that $G_k$ can be constructed so that
each node's local view is a tree.

\subsection{The Lower-Bound Graph}\label{sec:construction}

In \Cref{sec:equality}, we will prove that the topologies
seen by nodes in $C_0$ and $C_1$ are identical. This task is greatly
simplified if each node's topology is a tree (rather than a general
graph) because we do not have to worry about cycles. The \emph{girth}
of a graph $G$, denoted by $g(G)$, is the length of the shortest cycle
in $G$. In the following, we show that it is possible to construct
$G_k$ with girth at least $2k+1$ so that in $k$ communication rounds,
all nodes see a tree.\footnote{The high-girth construction we use in
  this paper is based on the notion of graph lifts. For the original
  proof in~\cite{kuhn-podc04}, we used an alternative method based on
  a bipartite graph family of high girth developed by Lazebnik and
  Ustimenko~\cite{lazebnik95a}). Both techniques yield equivalent
  results, but the construction using graph lifts is easier.}

For the construction of $G_k$, we start with an arbitrary instance
$G_k'$ of the cluster tree which may have the minimum possible
girth~$4$. An elaboration of the construction of $G_k'$ is deferred to
\Cref{sec:analysis}. For now, we simply assume that $G_k'$
exists and we show how to use it to obtain $G_k$.  We start with some
basic definitions.  For a graph $H=(W,F)$, a graph
$\tilde{H}=(\tilde{W},\tilde{F})$ is called a \emph{lift} of $H$ if
there exists a \emph{covering map} from $\tilde{H}$ to $H$. A
covering map from $\tilde{H}$ to $H$ is a graph homomorphism
$\varphi:\tilde{W}\to W$ that maps each $1$-neighborhood in
$\tilde{H}$ to a $1$-neighborhood in $H$. That is, for each $v_0\in
\tilde{W}$ with neighbors $v_1,\dots,v_d\in \tilde{W}$, the neighbors
of $\varphi(v_0)$ in $W$ are $\varphi(v_1),\dots,\varphi(v_d)$ (such
that $\varphi(v_i)\neq\varphi(v_j)$ for $i\neq j$). Observe that given
a graph $G_k'$ that satisfies the specification given in
\Cref{sec:graph}, any lift $G_k$ of $G_k'$ also satisfies
the cluster tree specification.  In order to show that $G_k$ can be
constructed with large girth, it therefore suffices to show that there
exists a lift $\tilde{G}_k'$ of $G_k'$ such that $\tilde{G}_k'$ has large girth. In
fact, we will see that for every graph $H$, there exists a lift
$\tilde{H}$ such that $\tilde{H}$ has large girth (and such that the
size of $\tilde{H}$ is not too large). We start with two simple
observations.

\begin{lemma}\label{lemma:subgraphlift}
  Let $H=(W,F)$ be a graph and assume that $H'$ is a subgraph of $H$
  and $\tilde{H}$ is a lift of $H$. Then, there exists a lift
  $\tilde{H}'$ of $H'$ such that $\tilde{H'}$ is a subgraph of $\tilde{H}$.
\end{lemma}
\begin{proof}
  Let $\varphi$ be a covering map from $\tilde{H}$ to $H$. We construct
  $\tilde{H}'$ in the straightforward way. For every node $x\in
  V(\tilde{H})$, we add node $x$ to the node set $V(\tilde{H}')$ of
  $\tilde{H}'$ if and only if $\varphi(x)$ is a node of $H'$. Further,
  for every edge $\set{x,y}\in E(\tilde{H})$, we add $\set{x,y}$ as an
  edge to graph $\tilde{H}'$ if and only if $x\in V(\tilde{H})$, $y\in
  V(\tilde{H})$, and $\set{\varphi(x),\varphi(y)}$ is an edge of $H'$.
\hspace*{\fill}\end{proof}

\begin{lemma}\label{lemma:liftsgirth}
  Let $H=(W,F)$ be a graph and assume that
  $\tilde{H}=(\tilde{W},\tilde{F})$ is a lift of $H$. Then, the girth
  of $\tilde{H}$ is at least as large as the girth of $H$.
\end{lemma}
\begin{proof}
  Consider any cycle $\tilde{C}=(x_0,x_2,\dots,x_{\ell-1})$ of $\tilde{H}$
  (that is, for $i\in\set{0,\dots,\ell-1}$, $\set{x_i,x_{(i+1) \!\! \mod \ell}}$
  is an edge of $\tilde{H}$). Let $\varphi$ be a covering map from
  $\tilde{H}$ to $H$. Because $\varphi$ is a covering map, the nodes
  $\varphi(x_0),\varphi(x_1),\dots,\varphi(x_{\ell-1}),\varphi(x_0)$ form a closed
  walk of length $\ell$ on $H$. Therefore, the cycle $\tilde{C}$
  induces a cycle $C$ in $H$ of length at most $\ell$.
\hspace*{\fill}\end{proof}

We further use the following three existing results.

\begin{lemma}\label{lemma:regularsupergraph}\cite{regularsupergraph}
  Let $H=(W,F)$ be a simple graph and assume that $\Delta(H)$ is the
  largest degree of $H$. Then, there exists a simple
  $\Delta(H)$-regular graph $H'$ such that $H$ is a subgraph
  of $H'$ and $|V(H')| \leq |V(H)|+\Delta(H)+2$.
\end{lemma}

\begin{lemma}\label{lemma:girthgraphs}\cite{erdossachs}
  For any $d\geq 3$ and any $g\geq 3$, there exist $d$-regular
  (simple) graphs with girth at least $g$ and $d^{(1+o(1))g}$ nodes.
\end{lemma}

\begin{lemma}\label{lemma:commonlifts}\cite{angluin81}
  Consider some integer $d\geq 1$ and let $H_1=(W_1,F_1)$ and
  $H_2=(W_2,F_2)$ be two $d$-regular graphs. Then, there exists a
  graph $\tilde{H}=(\tilde{W},\tilde{F})$ such that $\tilde{H}$ is a
  lift of $H_1$ and a lift of $H_2$, and such that the number of nodes
  of $\tilde{H}$ is at most $|V(\tilde{H})|\leq 4\cdot
  |V(H_1)|\cdot|V(H_2)|$.
\end{lemma}

Combining the above lemmas, we have all the tools needed to construct
$G_k$ with large girth as summarized in the following lemma.

\begin{lemma}\label{lemma:largegirth}
  Assume that for given parameters $k$ and
  $\delta_0,\dots,\delta_{k+1}$, there exists an instance $G_k'$ of
  the cluster tree with $n'$ nodes. Then, for every $g\geq 4$, there
  exists an instance $G_k$ of the cluster tree with girth $g$ and
  $O(n'\cdot \Delta^{(1+o(1))g})$ nodes, where $\Delta$ is the maximum
  degree of $G_k'$ (and hence also of $G_k$).
\end{lemma}
\begin{proof}
  We consider the instance $G_k'$ of the cluster tree. Consider any
  lift $\tilde{G}_k'$ of $G_k'$ and let $\varphi$ be a covering map from
  $\tilde{G}_k'$ to $G_k'$. If $G_k'$ follows the cluster tree structure
  given in \Cref{defn:clustertree}, $\tilde{G}_k'$ also follows the
  structure for the same parameters $k$ and
  $\delta_0,\dots,\delta_{k+1}$. To see this, given a cluster $C'$ of
  $G_k'$, we define the corresponding cluster $\tilde{C}'$ of $\tilde{G}_k'$ to
  contain all the nodes of $\tilde{G}_k'$ that are mapped into $C'$ by
  $\varphi$. The graph $\tilde{G}_k'$ then satisfies \Cref{defn:clustertree} (each node has the right number of neighbors
  in neighboring clusters) because $\varphi$ is a covering map. To
  prove the lemma, it is therefore sufficient to show that there
  exists a lift $\tilde{G}_k'$ of $G_k'$ such that $\tilde{G}_k'$ has girth at least $g$
  and such that $\tilde{G}_k'$ has at most $O(n'\Delta^{(1+o(1))g})$ nodes.

  We obtain such a lift $\tilde{G}_k'$ of $G_k'$ by applying the above
  lemmas in the following way. First of all, by 
  \Cref{lemma:regularsupergraph}, there exists a $\Delta$-regular
  supergraph $\bar{G}_k'$ of $G_k'$ with $O(n')$ nodes. Further, by
  \Cref{lemma:girthgraphs}, there exists a $\Delta$-regular graph
  $H$ with girth at least $g$ and $\Delta^{g(1+o(1))}$ nodes. Using
  \Cref{lemma:commonlifts}, there exists a common lift
  $\tilde{\bar{G}}_k'$ of $\bar{G}_k'$ and $H$ such that
  $\tilde{\bar{G}}_k'$ has at most
  $4|V(\bar{G}_k')||V(H)|=O(n'\Delta^{g(1+o(1))})$ nodes. By \Cref{lemma:liftsgirth}, because $\tilde{\bar{G}}_k'$ is a lift of
  $H$, the girth of $\tilde{\bar{G}}_k'$ is at least $g$. Further,
  because $\tilde{\bar{G}}_k'$ is a lift of $\bar{G}_k'$ and because
  $G_k'$ is a subgraph of $\bar{G}_k'$, by \Cref{lemma:subgraphlift} there exists a subgraph $\tilde{G}_k'$ of
  $\tilde{\bar{G}}_k'$ such that $\tilde{G}_k'$ is a lift of $G_k'$,
  which proves the claim of the lemma.  
\hspace*{\fill}\end{proof}

\subsection{Equality of Views}\label{sec:equality}
In this subsection, we prove that two adjacent nodes in clusters
$C_0$ and $C_1$ have the same \emph{view}, i.e., within distance
$k$, they see exactly the same topology $\mathcal{T}_{v,k}$.
Consider a node $v \in G_k$. Given that $v$'s view is a tree, we
can derive its \emph{view-tree} by recursively following all
neighbors of $v$. The proof is largely based on the observation
that corresponding subtrees occur in both node's view-tree.

Let $C_i$ and $C_j$ be adjacent clusters in $CT_k$ connected by
$\func{\ell}{C_i,C_j}=(\delta_l,\delta_{l+1})$, i.e., each node in
$C_i$ has $\delta_l$ neighbors in $C_j$, and each node in $C_j$
has $\delta_{l+1}$ neighbors in $C_i$. When traversing a node's
view-tree, we say that we \emph{enter} cluster $C_j$ (resp.,
$C_i$)  over \emph{link} $\delta_l$ (resp., $\delta_{l+1}$) from
cluster $C_i$ (resp., $C_j$).

Let $T_u$ be the tree topology seen by some node $u$ of degree $d$ and
let $T_1,\dots,T_d$ be the topologies of the subtrees of the $d$
neighbors of $u$. We use the following notation to describe the
topology $T_u$ based on $T_1,\dots,T_d$:
\[
T_u := \tree{T_1\tradd T_2\tradd\dots\tradd T_d} = \tree{\bigtradd_{i=1}^d T_i}.
\]
Further, we define the following abbreviation:
\[
d\trmul T := \underbrace{T\tradd T\tradd\dots\tradd T}_{d\text{ times}}.
\]
If $\mathcal{T}_1,\dots,\mathcal{T}_d$ are sets of trees, we use
$\tree{\mathcal{T}_1\tradd\dots\tradd\mathcal{T}_d}$ to denote the set
of all view-trees $\tree{T_1\tradd\dots\tradd T_d}$ for which
$T_i\in\mathcal{T}_i$ for all $i$. We will also mix single trees and
sets of trees, e.g., $\tree{\mathcal{T}_1\tradd T}$ is the set of all
view-trees $\tree{T_1\tradd T}$ with $T_1\in \mathcal{T}_1$.

\begin{definition}\label{defn:B}
  The following nomenclature refers to subtrees in the view-tree of
  a node in $G_k$.
  \begin{itemize}
  \item $\M{i}$ is the subtree seen upon
    entering cluster $C_0$ over a link $\delta_i$.
  \item $\Bup{i}{d}{\lambda}$ denotes the set of subtrees that
    are seen upon entering a cluster
    $C\in \mathcal{C} \setminus \{C_0\}$ on level $\lambda$ over a
    link $\delta_i$ from level $\lambda-1$, where
    $C$ has depth $d$.
  \item $\Bdown{i}{d}{\lambda}$ denotes the set of subtrees that
    are seen upon entering a cluster
    $C\in \mathcal{C} \setminus \{C_0\}$ on level $\lambda$ over a
    link $\delta_i$ from level $\lambda+1$, where
    $C$ has depth $d$.
 \end{itemize}
\end{definition}

In the following, we will frequently abuse notation and write
$\B{i}{d}{\lambda}$ when we mean some tree in $\B{i}{d}{\lambda}$.
The following example should clarify the various definitions.
Additionally, you may refer to the example of $G_3$ in Figure~7.2.
\begin{example}\label{ex:example}Consider $G_1$. Let $V_{C_0}$ and
  $V_{C_1}$ denote the view-trees of nodes in $C_0$ and $C_1$,
  respectively: \\
  \begin{align*}
    V_{C_0} & \in \tree{\delta_0\trmul\Bup{0}{1}{1} \tradd
      \delta_1\trmul\Bup{1}{0}{1}} &
    V_{C_1} & \in \tree{\delta_0\trmul\Bup{0}{0}{2} \tradd \delta_1\trmul \M{1}}\\
    \Bup{0}{1}{1}& \subseteq \tree{\delta_0\trmul\Bup{0}{0}{2} \tradd
      (\delta_1-1)\trmul \M{1}} &
    \Bup{0}{0}{2} & \subseteq \tree{(\delta_1-1)\trmul\Bdown{1}{1}{1}}\\
    \Bup{1}{0}{1} & \subseteq \tree{(\delta_2-1)\trmul \M{2}} &
    \M{1} & \in \tree{(\delta_0-1)\trmul\Bup{0}{1}{1} \tradd
      \delta_1\trmul\Bup{1}{0}{1}}\\
    \M{2} & \in \tree{\delta_0\trmul\Bup{0}{1}{1} \tradd
      (\delta_1-1)\trmul\Bup{1}{0}{1}} &
    & \dots
  \end{align*}
\end{example}
We start the proof by giving a set of rules which describe the
subtrees seen at a given point in the view-tree. We call these
rules \emph{derivation rules} because they allow us to
\emph{derive} the view-tree of a node by mechanically applying the
matching rule for a given subtree.

\begin{figure}[p]
\begin{center}
  \includegraphics[width=1\textwidth]{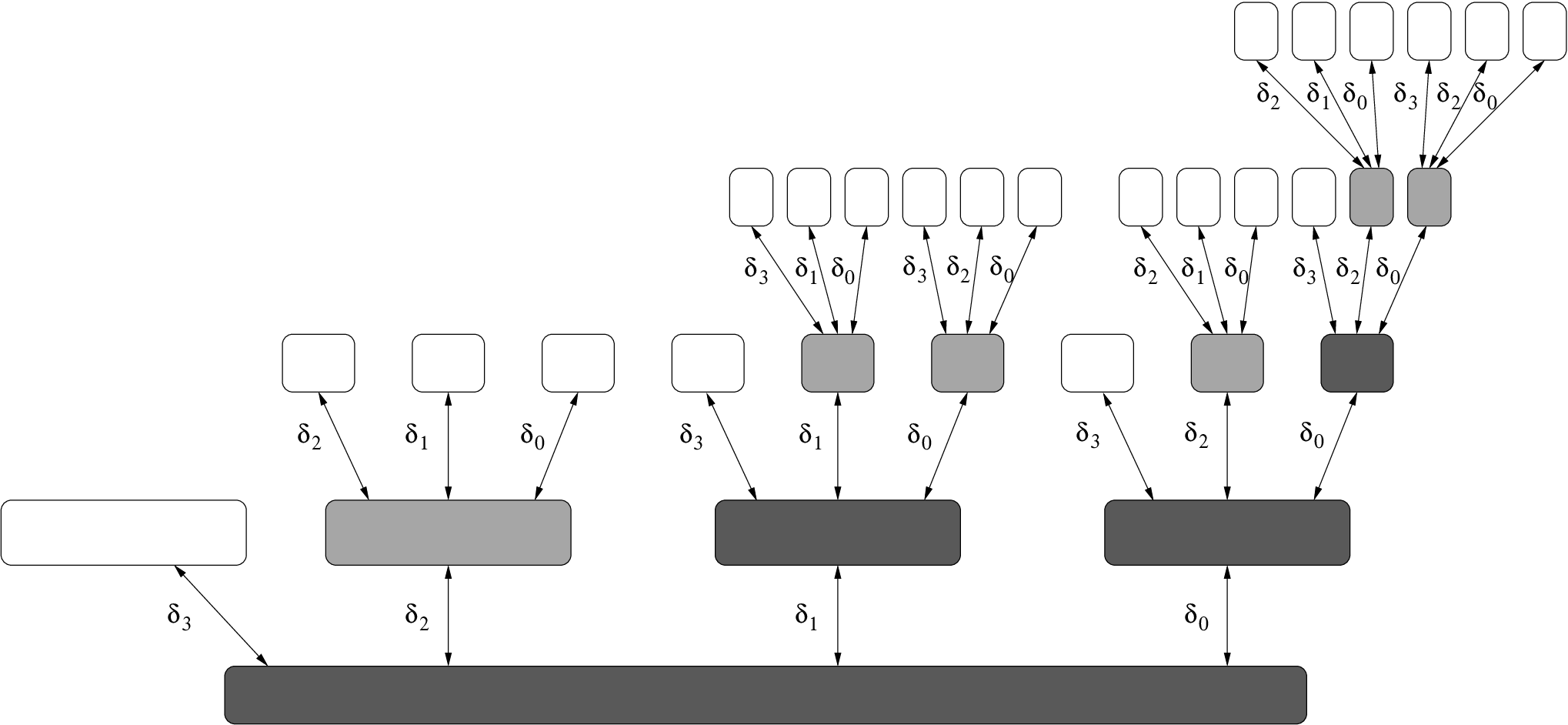}
  \\[12mm]
  \includegraphics[width=1\textwidth]{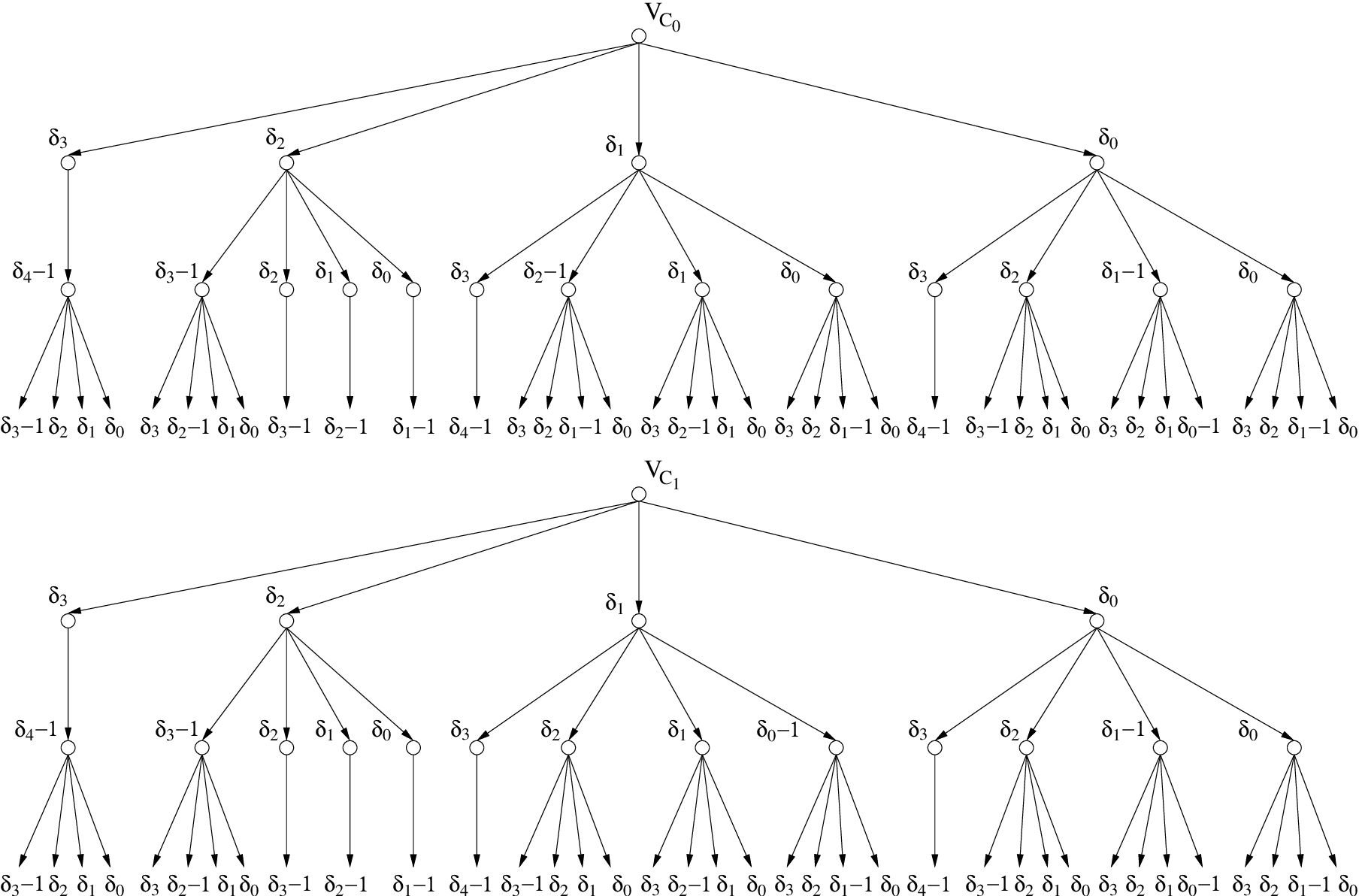}\label{fig:large}
  \\[6mm]
  \caption{The Cluster Tree $CT_3$ and the corresponding
    view-trees of nodes in $C_0$ and $C_1$. The cluster trees $CT_1$
    and $CT_2$ are shaded dark and light, respectively. The labels
    of the arcs of the cluster tree represent the number of
    higher-level cluster. The labels of the reverse links are
   omitted. In the view-trees, an arc labeled with $\delta_i$
    stands for $\delta_i$ edges, all connecting to identical
    subtrees.}
\end{center}
\end{figure}

\begin{lemma}\label{lemma:structure}
The following derivation rules hold in $G_k$:
\begin{align*}
    \M{i} & \in \tree{
      (\delta_{i-1}-1)\trmul\Bup{i-1}{k-i+1}{1} \tradd
      \bigtradd_{j\in\set{0,\dots,k}\setminus\set{i-1}}\!\!\!\!\!\!\!
      \delta_j\trmul\Bup{j}{k-j}{1}}\\
    \Bup{i}{d}{1} & \subseteq  \tree{\mathcal{F}_{\set{i+1},d,1} \tradd
      \mathcal{D}_{d,1} \tradd (\delta_{i+1}-1)\trmul \M{i+1}}\\
    \Bdown{i}{k-i}{1} & \subseteq \tree{\mathcal{F}_{\set{i-1,i+1},k-i,1} \tradd
      \mathcal{D}_{k-i,1} \tradd \delta_{i+1}\trmul \M{i+1} \tradd
      (\delta_{i-1}-1)\trmul \Bup{i-1}{k-i-1}{2}}\\
    \Bup{i-2}{k-i}{2} & \subseteq \tree{\mathcal{F}_{\set{i-1},k-i,2} \tradd
      \mathcal{D}_{k-i,2} \tradd
      (\delta_{i-1}-1)\trmul\Bdown{i-1}{k-i+1}{1}}
    \quad(i\geq2),
 \end{align*}
  where $\mathcal{F}$ and $\mathcal{D}$ are defined as
  \begin{align*}
    \mathcal{F}_{W,d,\lambda} & := \bigtradd_{j\in\set{0,\dots,k-d+1}\setminus W}\!\!\!\!\!
    \delta_j\trmul\Bup{j}{d-1}{\lambda+1}\\
    \mathcal{D}_{d,\lambda} & := \bigtradd_{j=k-d+2}^k
    \delta_j\trmul\Bup{j}{k-j}{\lambda+1}.
\end{align*}
\end{lemma}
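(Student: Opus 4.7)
My plan is to reduce all four derivation rules to a direct reading of the cluster tree $CT_k$ together with the girth bound of Lemma \ref{lm:girthlemma}. First I would invoke that lemma to conclude that, since $G_k$ has girth at least $2k+1$, the ball of radius $k$ around any node $v$ is a tree; therefore the view-tree of $v$ really is a tree and the ``$\tradd$'' decomposition of a subtree into children subtrees is unambiguous. This reduces the lemma to an accounting exercise over $CT_k$: for each cluster $C$ at level $\lambda$, enumerate the link labels to its neighbors in $CT_k$ and the depths of those neighbor clusters, then collect what a node in $C$ sees on the ``other side'' of each link.

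The structural fact I would prove (by induction on $k$, following Definition~\ref{defn:clustertree}) is the following claim about $CT_k$: an inner cluster $C$ at level $\lambda$ of depth $d$ has exactly one neighbor via each link $\delta_j$ for $j\in\{0,\dots,k\}$, and the neighbor reached via $\delta_j$ has depth exactly $d-1$ for the ``old'' links $j\in\{0,\dots,k-d+1\}$ (i.e.\ those that existed when $C$ first became inner), and depth $k-j$ for the ``new'' leaf links $j\in\{k-d+2,\dots,k\}$ (these are the single leaves appended in later recursion steps to this inner cluster). The level of each neighbor is either $\lambda-1$ (the parent) or $\lambda+1$ (the children). This claim encodes precisely the split into the sets $\mathcal{F}_{W,d,\lambda}$ and $\mathcal{D}_{d,\lambda}$ used in the lemma, where $W$ is the link-index of the neighbor through which we have entered $C$ (so that link must be excluded from the $\mathcal{F}$ summation).

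With this structural claim in hand, each derivation rule falls out by reading off what a node $v\in C$ sees through each of its neighbors, with the single edge we came in on removed. For $\M{i}$, the node is in $C_0$ (which is always inner of depth $k$, so $d-1=k-0+1-1=k$ when $\lambda=0$), the incoming link on $C_0$'s side is $\delta_{i-1}$ (since the opposite side has $\delta_i$), which gives $(\delta_{i-1}-1)$ back-edges to level-$1$ clusters of depth $k-i+1$, and each remaining link $\delta_j$ (for $j\neq i-1$) leads to a depth-$(k-j)$ cluster. For $\Bup{i}{d}{1}$, a node in the level-$1$ cluster of depth $d$ entered from $C_0$ via $\delta_i$ has $(\delta_{i+1}-1)$ back-edges each contributing an $\M{i+1}$, and its remaining links split into $\mathcal{F}_{\{i+1\},d,1}$ (the old links going to inner children of depth $d-1$) and $\mathcal{D}_{d,1}$ (the leaves attached in later recursion rounds). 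The two rules for $\Bdown{i}{k-i}{1}$ and $\Bup{i-2}{k-i}{2}$ are analogous: one simply has to identify which of the neighbors of the current cluster plays the role of the incoming edge, and re-apply the structural claim.

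The main obstacle, and essentially the only subtle step, is proving the structural claim about the precise depth and label multiset at each cluster of $CT_k$. The construction in Definition~\ref{defn:clustertree} treats the link $\delta_{p+1}$ (where $\delta_p$ is the incoming link from the parent) differently when a leaf is promoted to an inner cluster, and the set of ``old'' links $\{0,\dots,k-d+1\}$ for a cluster of depth $d$ has to be matched with the iteration of the recursion at which that cluster was created. I would formalize this by an induction on $k$ where the inductive hypothesis records, for every cluster of $CT_{k-1}$, the multiset of outgoing link labels and the depth of each neighbor; the transition to $CT_k$ then adds exactly one new leaf labelled $(\delta_k,\delta_{k+1})$ at every old inner cluster, promotes each old leaf to an inner cluster with the prescribed $k$ new leaves, and a small calculation verifies that the depths behave as claimed. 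Once this bookkeeping is in place, the four derivation rules follow directly, without any further combinatorial effort.
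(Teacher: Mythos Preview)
Your proposal is correct and follows essentially the same route as the paper's own proof. The paper, too, argues by induction on $k$ over Definition~\ref{defn:clustertree}, distinguishes exactly your ``old'' versus ``new'' links (calling them \emph{fixed-depth} and \emph{diminishing-depth} subtrees, respectively), and then reads off each of the four rules by accounting for the special back-edge in each case; your separation of a standalone structural claim about $CT_k$ before deriving the rules is slightly more systematic, but the underlying argument is identical.
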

\begin{proof}
  We first show the derivation rule for $\M{i}$. By \Cref{defn:B}, $\M{i}$ is the subtree seen upon entering the cluster
  $C_0$ over a link $\delta_i$. Let us therefore first derive a rule
  for the view-tree $V_{C_0}$ of $C_0$ in $G_k$.  We show by induction
  on $k$ that $V_{C_0} \in\tree{
    \bigtradd_{j\in\set{0,\dots,k}}\delta_j\trmul\Bup{j}{k-j}{1}}$. It
  can be seen in \Cref{ex:example} that the rule holds for
  $k=1$. For the induction step, we build $CT_{k+1}$ from $CT_{k}$ as
  defined in \Cref{defn:clustertree}. $C_0$ is an inner
  cluster and therefore, one new cluster with view trees of the form
  $\Bup{k+1}{0}{1}$ is added. The depth of all other subtrees
  increases by 1 and thus the rule for $V_{C_0}$ follows.  If we
  enter $C_0$ over link $\delta_i$, there will be only $\delta_{i-1}
  -1$ edges left to return to the cluster from which we had entered
  $C_0$. Consequently, $\M{i}$ is the same as $V_{C_0}$ but with only
  $\delta_{i-1}-1$ subtrees of the form $\Bup{i-1}{k-i+1}{1}$.

  The remaining rules follow along similar lines. Let $C_i$ be a
  cluster with entry-link $\delta_i$ which was first created in
  $CT_r$, $r<k$. Note that in $CT_k$, the depth of $C_i$ is $d=k-r$  because each
  subtree increases its depth by one in each ``round''. According to
  the second building rule of \Cref{defn:clustertree}, $r$
  new neighboring clusters (subtrees) are created in $CT_{r+1}$.  More
  precisely, a new cluster is created for all entry-links
  $\delta_0\ldots \delta_r$, except $\delta_i$. We call these subtrees
  \emph{fixed-depth} subtrees $F$. If the subtree with root $C_i$ has
  depth $d$ in $CT_k$, the fixed-depth subtrees have depth $d-1$. In
  each $CT_{r'}, \; r'\in\{r+2,\ldots ,k\}$, $C_i$ is an inner-cluster
  and hence, one new neighboring cluster with entry-link $\delta_{r'}$
  is created. We call these subtrees \emph{diminishing-depth} subtrees
  $D$. In $CT_k$, each of these subtrees has grown to depth $k-r'$.

  We now turn our attention to the differences between the three
  rules. They stem from the exceptional treatment of level $1$, as
  well as the predicates $\uparrow$ and $\downarrow$. In
  $\Bup{i}{d}{1}$, the link $\delta_{i+1}$ returns to $C_0$, but
  contains only $\delta_{i+1}-1$ edges in the view-tree.

  In $\Bdown{i}{k-i}{1}$, we have to consider two special cases. The
  first one is the link to $C_0$. For a cluster on level $1$ with
  depth $d$ and entry-link (from $C_0$) $\delta_j$, the equality
  $k=d+j$ holds and therefore, the link to $C_0$ is $\delta_{i+1}$ and
  thus, $\M{i+1}$ follows. Secondly, because in $\Bdown{i}{k-i}{1}$ we
  come from a cluster $C'$ on level $2$ over a $\delta_i$ link, there
  are only $\delta_{i-1}-1$ links back to $C'$ and therefore there are
  only $\delta_{i-1}-1$ subtrees of the form
  $\Bup{i-1}{k-i-1}{2}$. (Note that since we entered the current
  cluster from a higher level, the link leading back to where we came
  from is $\delta_{i-1}$, instead of $\delta_{i+1}$). The depths of
  the subtrees in $\Bdown{i}{k-i}{1}$ follow from the above
  observation that the view $\Bdown{i}{k-i}{1}$ corresponds to the
  subtree of a node in the cluster on level $1$ that is entered on
  link $\delta_{i+1}$ from $C_0$.

  Finally in $\Bup{i-2}{k-i}{2}$, we again have to treat the returning
  link $\delta_{i-1}$ to the cluster on level $1$ specially. Consider
  a view-tree $\Bup{j}{d}{x}$ of depth $d$. All level $x+1$ subtrees
  that are reached by links $\delta_{j'}$ with $j'\leq k-d+1$ are
  fixed-depth subtrees of depth $d-1$. Because $\Bup{i-2}{k-i}{2}$ is
  reached through link $\delta_{i-1}$ from the cluster on level $1$
  and because $i-1\leq k-(k-i)=i$, $\Bup{i-2}{k-i}{2}$ is a
  fixed-depth subtree of its level $1$ parent. Thus, we get that the depth
  of the $\delta_{i-1}-1$ subtrees $\Bdown{i-1}{k-i+1}{1}$ is $k-i+1$.
\hspace*{\fill}\end{proof}

Note that we do not give general derivation rules for all
$\Bup{i}{d}{x}$ and $\Bdown{i}{d}{x}$ because they are not needed in
the following proofs. Next, we define the notion of
\emph{$r$-equality}. Intuitively, if two view-trees are $r$-equal,
they have the same topology within distance $r$.

\begin{definition}\label{defn:requal}
  Let $\mathcal{V}\subseteq\tree{\bigtradd_{i=1}^d{\mathcal{T}_i}}$ and
  $\mathcal{V}'\subseteq\tree{\bigtradd_{i=1}^d{\mathcal{T}'_i}}$ be sets of
  view-trees. Then, $\mathcal{V}$ and $\mathcal{V}'$ are $r$-equal if
  there is a permutation $\pi$ on $\set{1,\dots,d}$ such that
  $\mathcal{T}_i$ and $\mathcal{T}'_{\pi(i)}$ are$(r-1)$-equal for all
  $i\in\set{1,\dots,d}$:
  \begin{equation}
    \mathcal{V} \requal{r} \mathcal{V}' \; \Longleftarrow \;
    \mathcal{T}_i \requal{r-1} \mathcal{T}'_{\pi(i)} \;,
    \; \forall i\in \{1,\ldots,d\}. \nonumber
  \end{equation}
  Further, all (sets of) subtrees are $0$-equal, i.e.,
  $\mathcal{T}\requal{0}\mathcal{T}'$ for all $\mathcal{T}$ and
  $\mathcal{T}'$.
\end{definition}

Using the notion of $r$-equality, we can now define what we actually
have to prove. We will show that in $G_k$, $V_{C_0} \requal{k}
V_{C_1}$ holds. Since the girth of $G_k$ is at least $2k+1$, this is
equivalent to showing that each node in $C_0$ sees exactly the same
topology within distance $k$ as its neighbor in $C_1$. We first
establish several helper lemmas. For two collections of sub-tree sets
$\beta=\mathcal{T}_1\tradd\dots\tradd\mathcal{T}_d$ and
$\beta'=\mathcal{T}'_1\tradd\dots\tradd\mathcal{T}'_d$, we say that
$\beta\requal{r}\beta'$ if $\mathcal{T}_i\requal{r}\mathcal{T}'_i$ for
all $i\in[d]$.

\begin{lemma}\label{lemma:equality}
  Let $\beta=\bigtradd_{i=1}^t\mathcal{T}_i$ and $\beta
  '=\bigtradd_{i=1}^t\mathcal{T}'_i$ be collections of sub-tree sets
  and  let
  \begin{equation*}
    V_{v} \in \tree{
      \beta \tradd
      \bigtradd_{i\in I}\delta_i\trmul\Bup{i}{d_i}{x_i}
    }
   \quad\text{and}\quad
    V_{v'} \in \tree{
      \beta' \tradd
      \bigtradd_{i\in I}\delta_i\trmul\Bup{i}{d_i'}{x_i'}
    }
  \end{equation*}
  for a set of integers $I$ and integers $d_i$, $d_i'$, $x_i$, and
  $x_i'$ for $i\in I$. Let $r\geq 0$ be an integer. If for all $i\in
  I$, $d_i=d_i'$ or $r\leq 1+\min\set{d_i,d_i'}$, it holds that
  \begin{equation*}
    \mathcal{V}_{v_1} \requal{r} \mathcal{V}_{v_2} \; \Longleftarrow  \; \beta
    \requal{r-1} \beta '.
  \end{equation*}
\end{lemma}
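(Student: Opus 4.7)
The plan is to induct on $r$. The base case $r=0$ holds immediately, since Definition~\ref{defn:requal} stipulates that all sets of subtrees are $0$-equal.

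For $r\geq 1$, I would apply Definition~\ref{defn:requal} directly: to conclude $V_v\requal{r}V_{v'}$ it suffices to exhibit a permutation pairing the children of $V_v$ with those of $V_{v'}$ so that each matched pair is $(r-1)$-equal. The natural choice pairs $\mathcal{T}_i$ with $\mathcal{T}'_i$ on the $\beta$-part, and pairs the $\delta_i$ copies of $\Bup{i}{d_i}{x_i}$ with the $\delta_i$ copies of $\Bup{i}{d_i'}{x_i'}$ for each $i\in I$. The hypothesis $\beta\requal{r-1}\beta'$ immediately delivers $(r-1)$-equality for the $\beta$-pairs, so the lemma reduces to showing
\[
\Bup{i}{d_i}{x_i}\requal{r-1}\Bup{i}{d_i'}{x_i'}\qquad\text{for each }i\in I.
\]

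For this residual claim I would unfold both subtrees via Lemma~\ref{lemma:structure} and re-invoke the present lemma inductively with parameter $r-1$. When $d_i=d_i'$, the two expansions produce matching children with matching depths, the depth condition is trivial, and $(r-1)$-equality follows. When $d_i\neq d_i'$, the expansions agree on most children except over the ``overlap'' indices $j\in\{k-\max\{d_i,d_i'\}+2,\ldots,k-\min\{d_i,d_i'\}+1\}$, where one side contributes a fixed-depth child of depth $\min\{d_i,d_i'\}-1$ and the other a diminishing-depth child of depth $k-j$. Any special boundary terms stemming from the explicit level-$1$ and level-$2$ clauses of Lemma~\ref{lemma:structure} (the $\M{i+1}$ and $\Bdown{i-1}{k-i+1}{1}$ terms) can be absorbed into the $\beta$-component of the recursive invocation after pairing them across the two sides.

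The main obstacle is verifying that the depth hypothesis of the lemma propagates through this recursive step. For a mismatched overlap pair I need $(r-1)\leq 1+\min\{\min\{d_i,d_i'\}-1,\,k-j\}$, equivalently $r\leq\min\{\min\{d_i,d_i'\}+1,\,k-j+2\}$. The first inequality is exactly the hypothesis $r\leq 1+\min\{d_i,d_i'\}$. The second follows because the overlap satisfies $j\leq k-\min\{d_i,d_i'\}+1$, so $k-j+2\geq\min\{d_i,d_i'\}+1\geq r$. Hence the recursive call's hypotheses are satisfied and the induction closes.
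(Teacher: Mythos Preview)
Your induction-on-$r$ scheme has a genuine gap. The crux of your argument is the recursive step: to establish $\Bup{i}{d_i}{x_i}\requal{r-1}\Bup{i}{d_i'}{x_i'}$ you propose to ``unfold both subtrees via Lemma~\ref{lemma:structure} and re-invoke the present lemma.'' But Lemma~\ref{lemma:structure} only supplies derivation rules for the very specific cases $\M{i}$, $\Bup{i}{d}{1}$, $\Bdown{i}{k-i}{1}$, and $\Bup{i-2}{k-i}{2}$; the paper says explicitly that general rules for $\Bup{i}{d}{x}$ and $\Bdown{i}{d}{x}$ are \emph{not} given. Since the levels $x_i,x_i'$ in the lemma's hypothesis are arbitrary, your unfolding step has no lemma to appeal to once $x_i\ge 3$, and the recursion cannot proceed as written.

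A second, related problem is your treatment of the ``special boundary terms.'' You propose to absorb the $\M{i+1}$ term (which appears only at level~$1$) and the $\Bdown{i-1}{k-i+1}{1}$ term (which appears only from level~$2$) into the $\beta$-component ``after pairing them across the two sides.'' But if $x_i=1$ and $x_i'=3$, the two expansions contain structurally different return links---one goes to $C_0$ and produces an $\M{\cdot}$ term, the other goes back to a level-$2$ cluster and produces a $\Bdown{\cdot}{\cdot}{\cdot}$ term. Declaring these paired and $(r-2)$-equal is precisely the nontrivial content you are trying to prove, so this is circular unless you supply a separate argument.

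The paper sidesteps both issues by arguing directly from the recursive construction in Definition~\ref{defn:clustertree} rather than from the derivation rules: since a cluster of depth $d$ was created at step $k-d$, two $\Bup{i}{\cdot}{\cdot}$ subtrees of depths $d_i,d_i'$ have grown identically up to depth $\min\{d_i,d_i'\}$ regardless of their levels. The paper then handles paths that re-enter the root cluster after $s\le r$ hops by observing that the residual views differ from $V_v,V_{v'}$ only in one multiplicity ($\delta_{i_0}$ becomes $\delta_{i_0}-1$), leaving $\beta,\beta'$ untouched, and iterates until $r-s=0$. This structural argument is what replaces the missing general derivation rules; your proof needs either those rules or an equivalent construction-level argument to close.
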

\begin{proof}
  Assume that the roots of the subtree of $V_{v}$ and $V_{v'}$ are in
  clusters $C$ and $C'$, respectively. W.l.o.g., we assume that
  $d'\leq d$. Note that we have $d'\geq 1+\min_{i\in
    I}\min\set{d_i,d_i'}$ and thus $r\leq d'$. In the construction
  process of $G_k$, $C$ and $C'$ have been created in steps $k-d$ and
  $k-d'$, respectively.

  By \Cref{defn:clustertree}, all subtrees with depth
  $d^*<d'$ have grown identically in both views $V_v$ and
  $V_{v'}$. The remaining subtrees of $V_{v'}$ were all created in
  step $k-d'+1$ and have depth $d'-1$. The corresponding subtrees in
  $V_{v}$ have at least the same depth and the same structure up to
  that depth. Hence paths of length at most $d'$ which start at the
  roots of $V_{v}$ and $V_{v}$, go into one of the subtrees in
  $\Bup{i_0}{d_{i_0}}{x_{i_0}}$ and $\Bup{i_0}{d_{i_0}'}{x_{i_0}'}$
  for $i_0\in I$ and do not return to clusters $C$ and $C'$ must be
  identical. Further, consider all paths which, after $s\leq d'$ hops, return
  to $C$ and $C$ over link $\delta_{i_0+1}$. After these $s$ hops,
  they return to the original cluster and see views
  \begin{align*}
    V'_{v} &\in \tree{
      \beta\tradd(\delta_{i_0}-1)\trmul\Bup{i_0}{d_{i_0}}{x_{i_0}}
      \tradd\bigtradd_{i\in I\setminus\set{i_0}}
      \delta_i\trmul\Bup{i}{d_i}{x_i}
    }
    \quad\text{and}\\
    V'_{v'} &\in \tree{
      \beta'\tradd(\delta_{i_0}-1)\trmul\Bup{i_0}{d_{i_0}'}{x_{i_0}'}
      \tradd\bigtradd_{i\in I\setminus\set{i_0}}
      \delta_i\trmul\Bup{i}{d_i'}{x_i'}
    },
  \end{align*}
  differing from $V_{v}$ and $V_{v'}$ only in having $\delta_{i_0}-1$
  instead of $\delta_{i_0}$ subtrees in $\Bup{i_0}{d_{i_0}}{x_{i_0}}$
  and $\Bup{i_0}{d_{i_0}'}{x_{i_0}'}$, respectively. This does not
  affect $\beta$ and $\beta '$ and therefore,
  \begin{equation}
    V_{v_1} \requal{r} V_{v_2} \;  \Longleftarrow V'_{v_1}\requal{r-s}
    \; V'_{v_2} \;\wedge \;\beta \;\requal{r-1}\; \beta ' \; ,\; s>1.
    \nonumber
  \end{equation}
  Note that $s\leq d'$ implies $s\leq r$. Thus, the same argument can
  be repeated until $r-s=0$ and because $V'_{v}\requal{0} \;
  V'_{v'}$, we can conclude that $V_{v}\requal{r}V_{v}$ and thus
  the lemma follows.
\hspace*{\fill}\end{proof}

\begin{figure}[ht]
  \begin{center}
    \includegraphics[width=0.65\columnwidth]{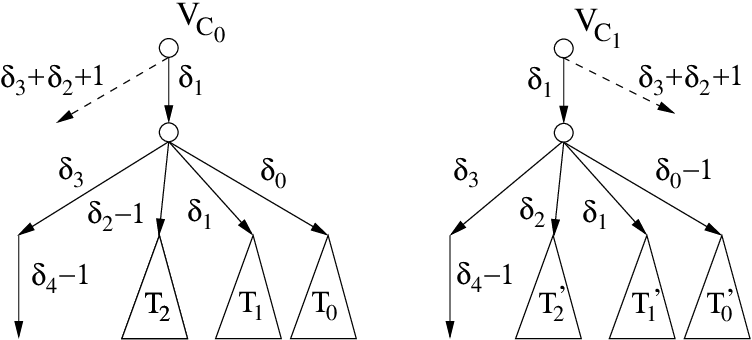}
    \caption{The view-trees $V_{C_0}$ and $V_{C_1}$ in $G_3$ seen upon
    using link $\delta_1$.}\label{fig:equality}
  \end{center}
\end{figure}

\Cref{fig:equality} shows a part of the view-trees of nodes
in $C_0$ and $C_1$ in $G_3$. The figure shows that the subtrees
with links $\delta_0$ and $\delta_2$ cannot be matched directly to
one another because of the different placement of the $-1$. It
turns out that this inherent difference appears in every step of
our theorem. However, the following lemma shows that the subtrees
$T_0$ and $T_2$ ($T'_0$ and $T'_2$) are equal up to the required
distance and hence, nodes are unable to distinguish them. It is
this crucial property of our cluster tree, which allows us to
``move'' the ``$-1$'' between links $\delta_i$ and
$\delta_{i+2}$ and enables us to derive the main theorem.

\begin{lemma}\label{lemma:changemin}
  For all $i\in\set{2,\dots,k}$, we have
  \[
  M_i \requal{k-i-1} \Bup{i-2}{k-i}{2}.
  \]
\end{lemma}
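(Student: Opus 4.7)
I would prove the slightly stronger claim that $\M{i} \requal{r} \Bup{i-2}{k-i}{2}$ holds for all $r$ with $0 \leq r \leq k-i-1$, by induction on $r$. The base cases $r \in \{0,1\}$ are immediate: $0$-equality is automatic by Definition~\ref{defn:requal}, and $1$-equality only demands that the two roots have equal degree, which follows from both expansions in Lemma~\ref{lemma:structure} yielding the same total root degree $\sum_{j=0}^{k} \delta_j - 1$.

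For the inductive step at level $r \geq 2$, I first expand $\M{i}$ and $\Bup{i-2}{k-i}{2}$ via Lemma~\ref{lemma:structure} and pair the resulting children by their entry-link index $j$. Every $j \in \{0,\ldots,k\} \setminus \{i-1\}$ contributes $\delta_j$ copies of some $\Bup{j}{\cdot}{\cdot}$-subtree on each side, with possibly different depths and levels. Crucially, whenever the two depths disagree, the smaller of them is exactly $k-i-1$, so $r \leq k-i-1 < k-i = 1+(k-i-1)$ and the depth hypothesis of Lemma~\ref{lemma:equality} is satisfied. The only slot in which the subtree \emph{types} differ is $j=i-1$: here $\M{i}$ contributes $(\delta_{i-1}-1)\trmul \Bup{i-1}{k-i+1}{1}$ while $\Bup{i-2}{k-i}{2}$ contributes $(\delta_{i-1}-1)\trmul\Bdown{i-1}{k-i+1}{1}$. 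Declaring these as the $\beta/\beta'$ parts of Lemma~\ref{lemma:equality} reduces the goal to
\[
  \Bup{i-1}{k-i+1}{1} \requal{r-1} \Bdown{i-1}{k-i+1}{1}.
\]

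I then expand these two subtrees via Lemma~\ref{lemma:structure} and compare slot by slot. They agree on every child except for a single \emph{swapped} pair: $\Bup{i-1}{k-i+1}{1}$ contributes $\delta_{i-2}$ copies of $\Bup{i-2}{k-i}{2}$ along its $\delta_{i-2}$ link and only $(\delta_i-1)$ copies of $\M{i}$ along its $\delta_i$ return-link to $C_0$, whereas $\Bdown{i-1}{k-i+1}{1}$ contributes $(\delta_{i-2}-1)$ copies of $\Bup{i-2}{k-i}{2}$ together with the full $\delta_i$ copies of $\M{i}$. Absorbing all depth-matched children, including the common $(\delta_{i-2}-1)$ copies of $\Bup{i-2}{k-i}{2}$ and the common $(\delta_i-1)$ copies of $\M{i}$, into the shared $\Bup{j}{d_j}{x_j}$ part of Lemma~\ref{lemma:equality}, the residual $\beta/\beta'$ pair consists of exactly one uncancelled copy of $\Bup{i-2}{k-i}{2}$ against one uncancelled copy of $\M{i}$. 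The lemma therefore reduces the current subgoal to $\M{i} \requal{r-2} \Bup{i-2}{k-i}{2}$, which is precisely the inductive hypothesis at the strictly smaller level $r-2$, closing the induction.

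The main technical subtlety is the off-by-one bookkeeping in the derivation rules of Lemma~\ref{lemma:structure}: the ``$-1$'' attached to $\delta_{i+1}$ in the $\Bup$-rule sits on a different link than the ``$-1$'' attached to $\delta_{i-1}$ in the $\Bdown$-rule, and the induction only closes because those two shifts land on exactly the two subtree types ($\M{i}$ and $\Bup{i-2}{k-i}{2}$) whose equality is the object of the lemma, letting them pair up with each other and invoke the inductive hypothesis.
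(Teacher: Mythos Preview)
Your proposal is correct and follows essentially the same approach as the paper: expand both sides via Lemma~\ref{lemma:structure}, invoke Lemma~\ref{lemma:equality} to reduce to $\Bup{i-1}{k-i+1}{1}\requal{r-1}\Bdown{i-1}{k-i+1}{1}$, expand again to expose the swapped ``$-1$'', and reduce to $\M{i}\requal{r-2}\Bup{i-2}{k-i}{2}$. The only cosmetic difference is that you frame the descent as an explicit induction on $r$, whereas the paper runs the two reductions alternately until the required equality level drops to $0$.

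One small remark on the second reduction: you phrase it as ``absorbing the common $(\delta_i-1)$ copies of $\M{i}$ into the shared $\Bup{j}{d_j}{x_j}$ part of Lemma~\ref{lemma:equality}'', but $\M{i}$ is not a $\Bup{\cdot}{\cdot}{\cdot}$-subtree, so Lemma~\ref{lemma:equality} does not literally apply there. This is harmless, because at that step the $\mathcal{F}$, $\mathcal{D}$, $(\delta_i-1)\trmul\M{i}$ and $(\delta_{i-2}-1)\trmul\Bup{i-2}{k-i}{2}$ blocks are \emph{identical} on both sides, so Definition~\ref{defn:requal} alone (matching identical children to each other and the single leftover $\Bup{i-2}{k-i}{2}$ to the single leftover $\M{i}$) already yields the reduction to $\M{i}\requal{r-2}\Bup{i-2}{k-i}{2}$. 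The paper makes exactly this move without invoking Lemma~\ref{lemma:equality} at that point.
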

\begin{proof}
  By \Cref{lemma:structure}, we have
  \begin{align*}
    \M{i} & \in \tree{
      \bigtradd_{j\in\set{0,\dots,k}\setminus\set{i-1}}\!\!\!\!\!\delta_j\trmul\Bup{j}{k-j}{1}
      \tradd (\delta_{i-1}-1)\trmul\Bup{i-1}{k-i+1}{1}
    } \\
    \Bup{i-2}{k-i}{2} & \subseteq \tree{
      \bigtradd_{\substack{j\in\set{0,\dots,i+1}\\\setminus\set{i-1}}}\!\!\!\!\!\!\!\!\!\!
      \delta_j\trmul\Bup{j}{k-i-1}{3} \tradd
      \!\!\bigtradd_{j\in\set{i+2,\dots,k}}\!\!\!\!\!\!\!\!\!\!
      \delta_j\trmul\Bup{j}{k-j}{3} \tradd
      (\delta_{i-1}-1)\trmul\Bdown{i-1}{k-i+1}{1}
    }
  \end{align*}
  Consider the subtrees of the form $\Bup{j}{d}{x}$ for $j\neq i-1$ in
  both cases. For $j\leq i+1$, the depths of the subtrees are $k-j$
  and $k-i-1\leq k-j$, respectively. For $j>i+1$, the depths are $k-j$
  in both cases. We can therefore apply \Cref{lemma:equality} to
  get that
  \begin{equation}\label{eq:structurestep}
    M_i \requal{k-i-1} \Bup{i-2}{k-i}{2} \Longleftarrow
    \Bup{i-1}{k-i+1}{1} \requal{k-i-2} \Bdown{i-1}{k-i+1}{1}.
  \end{equation}
  To show that the right-hand side of \cref{eq:structurestep} holds,
  we plug $\Bup{i-1}{k-i+1}{1}$ and $\Bdown{i-1}{k-i+1}{1}$ into \Cref{lemma:structure} and use the derivation
  rules:
  \begin{align*}
    \Bup{i-1}{k-i+1}{1} & \subseteq  \tree{\mathcal{F}_{\set{i},k-i+1,1} \tradd
      \mathcal{D}_{k-i+1,1} \tradd (\delta_i-1)\trmul
      \M{i}}\\
    & = \tree{\mathcal{F}_{\set{i-2,i},k-i+1,1} \tradd
      \mathcal{D}_{k-i+1,1} \tradd (\delta_i-1)\trmul
      \M{i} \tradd \delta_{i-2}\trmul\Bup{i-2}{k-i}{2}}\\
    \Bdown{i-1}{k-i+1}{1} & \subseteq \tree{\mathcal{F}_{\set{i-2,i},k-i+1,1} \tradd
      \mathcal{D}_{k-i+1,1} \tradd \delta_{i}\trmul \M{i} \tradd
      (\delta_{i-2}-1)\trmul \Bup{i-2}{k-i}{2}}
  \end{align*}
  The two expressions are equal except for the placement of the
  ``$-1$''. Therefore, we get
  $\Bup{i-1}{k-i+1}{1}\requal{k-i-2}\Bdown{i-1}{k-i+1}{1}$ if
  $M_i\requal{k-i-3}\Bup{i-2}{k-i}{2}$. Hence, we have shown that
  \[
  \M{i} \requal{k-i-1} \Bup{i-2}{k-i}{2}
  \Longleftarrow
  \Bup{i-1}{k-i+1}{1} \requal{k-i-2} \Bdown{i-1}{k-i+1}{1}
  \Longleftarrow \M{i} \requal{k-i-3} \Bup{i-2}{k-i}{2}.
  \]

  This process can be continued using exactly the same rules until the
  requirement becomes that either
  \[
  \Bup{i-1}{k-i+1}{1} \requal{0} \Bdown{i-1}{k-i+1}{1}
  \quad\text{or}\quad
  \M{i} \requal{0} \Bup{i-2}{k-i}{2},
  \]
  which is always true.
\hspace*{\fill}\end{proof}

Finally, we are ready to prove the main theorem.

\begin{theorem}\label{thm:equality}
  Consider graph $G_k$. Let $V_{C_0}$ and $V_{C_1}$ be the view-trees
  of two adjacent nodes in clusters $C_0$ and $C_1$,
  respectively. Then, $V_{C_0} \requal{k} V_{C_1}$.
\end{theorem}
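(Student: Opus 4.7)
Using Lemma~\ref{lemma:structure}, $V_{C_0} \in \tree{\bigtradd_{j=0}^{k}\delta_j\trmul\Bup{j}{k-j}{1}}$. By an analogous derivation at the level-$1$ cluster $C_1$ of depth $k$, whose root has $\delta_1$ unused return-edges to $C_0$ (each giving an $\M{1}$ subtree), $\delta_0$ edges to the original child $C_3$ (giving $\Bup{0}{k-1}{2}$), and $\delta_j$ edges for $j\geq 2$ to the level-$2$ clusters added in later iterations (giving $\Bup{j}{k-j}{2}$), one obtains $V_{C_1} \in \tree{\delta_1\trmul\M{1}\tradd\delta_0\trmul\Bup{0}{k-1}{2}\tradd\bigtradd_{j=2}^{k}\delta_j\trmul\Bup{j}{k-j}{2}}$. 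The plan is to establish $V_{C_0} \requal{k} V_{C_1}$ by constructing a permutation matching these children and reducing pairwise equalities to Lemmas~\ref{lemma:equality} and~\ref{lemma:changemin}.

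First I would pair children by link-count $\delta_j$: match $\delta_j\trmul\Bup{j}{k-j}{1}$ with $\delta_j\trmul\Bup{j}{k-j}{2}$ for $j\geq 2$, match $\delta_0\trmul\Bup{0}{k}{1}$ with $\delta_0\trmul\Bup{0}{k-1}{2}$, and match $\delta_1\trmul\Bup{1}{k-1}{1}$ with $\delta_1\trmul\M{1}$. Invoking Lemma~\ref{lemma:equality} with the matched $\Bup$-families playing the role of the $\Bup{i}{d_i}{x_i}$ collections and with $\beta := \delta_1\trmul\Bup{1}{k-1}{1}$ and $\beta' := \delta_1\trmul\M{1}$ cancels the matched pairs. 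The depth condition $r \leq 1+\min\{d_i,d_i'\}$ is satisfied in every case (the only non-trivial check is $j=0$, where $k \leq 1+(k-1)$), so the theorem reduces to the single residual equality $\Bup{1}{k-1}{1} \requal{k-1} \M{1}$.

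To establish this residual equality, I would expand both sides with Lemma~\ref{lemma:structure}. The two expansions share nearly the same multiset of children, differing only in the placement of the `$-1$' between the $\delta_0$ and $\delta_2$ branches, and in the fact that $\Bup{1}{k-1}{1}$ contains a $(\delta_2 - 1)\trmul\M{2}$ term while $\M{1}$ contains $\delta_2\trmul\Bup{2}{k-2}{1}$ in the corresponding slot. Applying Lemma~\ref{lemma:changemin} at $i=2$ gives $\M{2} \requal{k-3} \Bup{0}{k-2}{2}$, which enables a swap bringing the two expansions into a form where Lemma~\ref{lemma:equality} applies; iterating this swap-and-peel cycle, exactly as in the bounce argument of the proof of Lemma~\ref{lemma:changemin}, drops the required equality depth by a bounded amount per iteration and terminates at depth $0$, which is trivial.

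The hard part will be the irregular case of the outermost pair $\Bup{1}{k-1}{1}$ vs.\ $\M{1}$: because Lemma~\ref{lemma:changemin} requires $i\ge 2$, it cannot be invoked at the top level, and the `$-1$' mismatch must first be pushed one expansion deeper (into the $\M{2}$ subtree of $\Bup{1}{k-1}{1}$) before it can be resolved. Tracking the equality-depth decay through this cascade and confirming that $(k-1)$-equality survives at the top is the delicate bookkeeping step of the proof.
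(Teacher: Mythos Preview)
Your approach is essentially the paper's: you carry out the same expansion of $V_{C_0}$ and $V_{C_1}$, apply Lemma~\ref{lemma:equality} in the same way to reduce the claim to $\Bup{1}{k-1}{1}\requal{k-1}\M{1}$, and then resolve the residual ``$-1$'' mismatch by an iterative argument using Lemmas~\ref{lemma:equality} and~\ref{lemma:changemin}.

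The one place where the paper is sharper is the formulation of the iteration. The paper does not just ``swap and peel''; it states and proves by induction on $s$ the explicit claim
\[
\Bup{k-s}{s}{1}\ \requal{s}\ \M{k-s}\qquad\text{for all }s\in\set{0,\dots,k-1},
\]
and the desired $\Bup{1}{k-1}{1}\requal{k-1}\M{1}$ is simply the case $s=k-1$. In the inductive step, the crucial match of $(\delta_{k-s+1}-1)\trmul\M{k-s+1}$ against the $\Bup{k-s+1}{s-1}{1}$ block on the other side comes from the \emph{induction hypothesis} (which gives $(s-1)$-equality), with Lemma~\ref{lemma:changemin} handling only the single leftover term. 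Your plan to first invoke Lemma~\ref{lemma:changemin} at $i=2$ to swap $\M{2}$ for $\Bup{0}{k-2}{2}$ falls one short in depth---that lemma only yields $(k-3)$-equality where $(k-2)$ is required---so the ``delicate bookkeeping'' you anticipate is precisely the need for this explicit inductive invariant rather than a bare iteration of Lemma~\ref{lemma:changemin}. Once you phrase the iteration this way the depth accounting goes through cleanly.
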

\begin{proof}
  By the construction of $G_k$, the view-trees of $V_{C_0}$ and
  $V_{C_1}$ can be written as
  \begin{align*}
    V_{V_0} & \in \tree{\bigtradd_{j=0}^k
      \delta_j\trmul\Bup{j}{k-j}{1}}\quad\text{and}\\
    V_{V_1} & \in \tree{\delta_1\trmul M_1 \tradd
      \bigtradd_{j\in\set{0,\dots,k}\setminus\set{1}}\delta_j\trmul\Bup{j}{k-j}{2}}.
  \end{align*}
  It follows that $V_{C_0} \requal{k} V_{C_1} \; \Longleftarrow \;
  \Bup{1}{k-1}{1} \requal{k-1} \M{1}$ by 
  \Cref{lemma:equality}. To prove that
  $\Bup{1}{k-1}{1}\requal{k-1}\M{1}$, we show that
  \begin{equation}\label{eq:equalityinduction}
    \Bup{k-s}{s}{1}\requal{s}\M{k-s}\quad\text{for all}\quad s\in\set{0,\dots,k-1}.
  \end{equation}
  We show \Cref{eq:equalityinduction} by induction on
  $s$. The statement is trivially true for $s=0$ because any two trees
  are $0$-equal. For the induction step, consider the derivation rules
  for $\Bup{k-s}{s}{1}$ and $\M{k-s}$:
  \begin{align*}
    \Bup{k-s}{s}{1} & \subseteq  \tree{
      \bigtradd_{j=0}^{k-s} \delta_j\trmul\Bup{j}{s-1}{2}\tradd
      \bigtradd_{j=k-s+2}^k \delta_j\trmul\Bup{j}{k-j}{2}
      \tradd (\delta_{k-s+1}-1)\trmul \M{k-s+1}
    }\\
    \M{k-s} & \in \tree{
      \bigtradd_{j\in\set{0,\dots,k}\setminus\set{k-s-1}} \delta_j\trmul\Bup{j}{k-j}{1}
      \tradd (\delta_{k-s-1}-1)\trmul \Bup{k-s-1}{s+1}{1}
    }
  \end{align*}
  Consider the subtrees of the form $\Bup{j}{d}{x}$ for
  $j\not\in\set{k-s-1,k-s+1}$ in both expressions. For $j\leq k-s$,
  the depths of the subtrees are $s-1< k-j$ and $k-j$,
  respectively. For $j>k-s+1$, the depth is $k-j$ in both
  cases. Hence, we can apply \Cref{lemma:equality} to get
  \begin{eqnarray}   \label{eq:equalityindstep}
    \lefteqn{\Bup{k-s}{s}{1} \requal{s} \M{k-s} \Longleftarrow}\\
    && a\trmul\Bup{k-s-1}{s+1}{2} \tradd
    (b-1)\trmul\M{k-s+1}
    \requal{s-1}
   (a-1)\trmul\Bup{k-s-1}{s+1}{1} \tradd
    b\trmul\Bup{k-s+1}{s-1}{1}
    \nonumber
  \end{eqnarray}
  for $a=\delta_{k-s+1}$ and $b=\delta_{k-s-1}$.  By 
  \Cref{lemma:changemin}, we have
  $\Bup{k-s-1}{s+1}{2}\requal{s-1}\M{k-s+1}$ and thus the right-hand
  side of \cref{eq:equalityindstep} is true by the induction
  hypothesis. This concludes the induction to show
  \cref{eq:equalityinduction} and thus also the proof of the theorem.
\hspace*{\fill}\end{proof}

\paragraph*{Remark}
As a side-effect, the proof of \Cref{thm:equality} highlights
the fundamental significance of the \emph{critical path}
$P=(\delta_1,\delta_2,\ldots,\delta_k)$ in $CT_k$. After following
path $P$, the view of a node $v\in C_0$ ends up in the leaf-cluster
neighboring $C_0$ and sees $\delta_{i+1}$ neighbors.  Following the
same path, a node $v'\in C_1$ ends up in $C_0$ and sees
$\sum_{j=0}^{i}{\delta_j}-1$ neighbors. There is no way to match these
views. This inherent inequality is the underlying reason for the way
$G_k$ is defined: It must be ensured that the critical path is at
least $k$ hops long.

\subsection{Analysis} \label{sec:analysis}
In this subsection, we derive the lower bounds on the
approximation ratio of $k$-local MVC algorithms. Let $OPT$ be an
optimal solution for MVC and let $ALG$ be the solution computed by
any algorithm. The main observation is that adjacent nodes in the
clusters $C_0$ and $C_1$ have the same view and therefore, every
algorithm treats nodes in both of the two clusters the same way.
Consequently, $ALG$ contains a significant portion of the nodes of
$C_0$, whereas the optimal solution covers the edges between $C_0$
and $C_1$ entirely by nodes in $C_1$.

\begin{lemma}\label{lemma:labrand}
  Let $ALG$ be the solution of any distributed (randomized) vertex
  cover algorithm which runs for at most $k$ rounds. When applied to
  $G_k$ as constructed in \Cref{sec:construction} in the
  worst case (in expectation), $ALG$ contains at least half of the
  nodes of $C_0$.
\end{lemma}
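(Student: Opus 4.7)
The plan is to derive the lemma as a short symmetry/averaging consequence of Theorem~\ref{thm:equality}. Fix any distributed vertex cover algorithm $ALG$ with time complexity at most $k$. I augment the input with a uniformly random bijective ID assignment to the nodes of $G_k$ (drawn independently of the algorithm's own random bits); this additional source of randomness will be used only for the analysis. Every node $w$'s output is a function of its view $\mathcal{V}_{w,k} = (\mathcal{T}_{w,k}, \mathcal{L}(\mathcal{T}_{w,k}))$ together with the random bits of nodes in $\mathcal{T}_{w,k}$, so the probability that $w \in ALG$ is determined entirely by the distribution of this labeled, randomized view.

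Next, consider any edge $\{u,v\}$ of $G_k$ with $u \in C_0$ and $v \in C_1$. Theorem~\ref{thm:equality} gives $V_{C_0} \requal{k} V_{C_1}$, and Lemma~\ref{lm:girthlemma} guarantees girth at least $2k+1$, so both $\mathcal{T}_{u,k}$ and $\mathcal{T}_{v,k}$ are in fact trees and are isomorphic as unlabeled rooted trees. Because IDs are assigned uniformly at random over the whole graph, the induced labeling of $\mathcal{T}_{u,k}$ is a uniformly random injection from $V(\mathcal{T}_{u,k})$ into the ID space, and likewise for $\mathcal{T}_{v,k}$; combined with the i.i.d.\ local random bits, the two labeled randomized views have identical distributions. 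Consequently $\Pr[u \in ALG] = \Pr[v \in ALG]$, where the probability is taken over both the random IDs and the algorithm's randomness.

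Since $ALG$ returns a feasible vertex cover, $\Pr[u \in ALG] + \Pr[v \in ALG] \geq \Pr[u \in ALG \text{ or } v \in ALG] = 1$. The symmetry then forces $\Pr[u \in ALG] \geq 1/2$ for every $u \in C_0$ (any such $u$ has at least one neighbor in $C_1$ to play the role of $v$). By linearity of expectation, $\E[|ALG \cap C_0|] \geq |C_0|/2$, where the expectation is over both sources of randomness. For a randomized algorithm this is already the claimed bound in expectation; for a deterministic algorithm, a standard averaging argument yields a single labeling under which $|ALG \cap C_0| \geq |C_0|/2$, which is the worst-case statement.

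The main subtlety will be justifying cleanly that the two distributions of labeled views at $u$ and $v$ coincide: the isomorphism of $\mathcal{T}_{u,k}$ and $\mathcal{T}_{v,k}$ is only guaranteed for the $k$-truncated neighborhood, not for the whole graph, so I must emphasize that the output function depends only on this truncation. A second point worth spelling out is that, although the joint distribution of $(\mathbf{1}_{u\in ALG},\mathbf{1}_{v\in ALG})$ may be complicated because the $k$-neighborhoods of $u$ and $v$ overlap, the argument uses only the marginal equality $\Pr[u \in ALG] = \Pr[v \in ALG]$, for which the pairwise view isomorphism from Theorem~\ref{thm:equality} suffices.
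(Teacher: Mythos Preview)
Your proof is correct and follows essentially the same argument as the paper: randomize the ID assignment, use Theorem~\ref{thm:equality} together with the girth bound from Lemma~\ref{lm:girthlemma} to conclude that adjacent $C_0$/$C_1$ nodes have identically distributed views, derive $\Pr[u\in ALG]\ge 1/2$ from feasibility of the cover, and finish by linearity of expectation and averaging. One small remark: for randomized algorithms the lemma asks for the bound in expectation on a \emph{fixed} worst-case labeling, so the averaging step over labelings is needed there as well, not only in the deterministic case; the paper phrases this step via Yao's minimax principle.
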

\begin{proof}
  Let $v_0\in C_0$ and $v_1\in C_1$ be two arbitrary, adjacent nodes
  from $C_0$ and $C_1$. We first prove the lemma for deterministic
  algorithms.  The decision whether a given node $v$ enters the vertex
  cover depends solely on the topology $\mathcal{T}_{v,k}$ and the
  labeling $\mathcal{L}(\mathcal{T}_{v,k})$. Assume that the
  labeling of the graph is chosen uniformly at random. Further, let
  $p_0^\mathcal{A}$ and $p_1^\mathcal{A}$ denote the probabilities
  that $v_0$ and $v_1$, respectively, end up in the vertex cover when
  a deterministic algorithm $\mathcal{A}$ operates on the randomly
  chosen labeling. By \Cref{thm:equality}, $v_0$ and $v_1$ see
  the same topologies, that is,
  $\mathcal{T}_{v_0,k}=\mathcal{T}_{v_1,k}$.  With our choice of
  labels, $v_0$ and $v_1$ also see the same distribution on the
  labelings $\mathcal{L}(\mathcal{T}_{v_0,k})$ and
  $\mathcal{L}(\mathcal{T}_{v_1,k})$. Therefore it follows that
  $p_0^\mathcal{A}=p_1^\mathcal{A}$.

  We have chosen $v_0$ and $v_1$ such that they are neighbors in
  $G_k$. In order to obtain a feasible vertex cover, at least one of
  the two nodes has to be in it. This implies
  $p_0^\mathcal{A}+p_1^\mathcal{A}\ge1$ and therefore
  $p_0^\mathcal{A}=p_1^\mathcal{A}\ge1/2$. In other words, for all
  nodes in $C_0$, the probability to end up in the vertex cover is at
  least $1/2$. Thus, by the linearity of expectation, at least half of
  the nodes of $C_0$ are chosen by algorithm $\mathcal{A}$. Therefore,
  for every deterministic algorithm $\mathcal{A}$, there is at least
  one labeling for which at least half of the nodes of
  $C_0$ are in the vertex cover.\footnote{In fact, since at most
    $|C_0|$ such nodes can be in the vertex cover, for at least $1/3$
    of the labelings, the number exceeds $|C_0|/2$.}

  The argument for randomized algorithms is now straight-forward using
  Yao's minimax principle. The expected number of nodes chosen by a
  randomized algorithm cannot be smaller than the expected number of
  nodes chosen by an optimal deterministic algorithm for an
  arbitrarily chosen distribution on the labels.
  \hspace*{\fill}\end{proof}

\Cref{lemma:labrand} gives a lower bound on the number of
nodes chosen by any $k$-local MVC algorithm. In particular, we
have that $E[|ALG|]\ge |C_0|/2=n_0/2$. We do not know $OPT$, but
since the nodes of cluster $C_0$ are not necessary to obtain a
feasible vertex cover, the optimal solution is bounded by
$|OPT|\le n-n_0$. In the following, we define
\begin{equation}\label{eq:deltai}
  \delta_i := \delta^i\ \; , \; \forall i\in\{0,\dots,k+1\}
\end{equation}
for some value $\delta$. Hence, $\delta_0=1$ and for all $i\in\{0,\dots,k\}$, we
have $\delta_{i+1}/\delta_{i}=\delta$.

\begin{lemma}\label{lemma:nofnodes}
  If $\delta>k+1$, the number of nodes $n$ of $G_k$ is
  \[
  n\ \le\ n_0\left(1+\frac{k+1}{\delta-(k+1)}\right)
  \]
  and the largest degree $\Delta$ of $G_k$ is
  $\delta_{k+1}=\delta^{k+1}$.
\end{lemma}
\begin{proof}
  Consider a cluster $C$ of size $|C|$ on some level $\ell$ and some
  neighbor cluster $C'$ on level $\ell+1$. For some
  $i\in\{0,\dots,k\}$, all nodes in $C$ have $\delta_i$ neighbors in
  cluster $C'$ and all nodes in $C'$ have $\delta_{i+1}$ neighbors in
  cluster $C$. We therefore have
  $|C|/|C'|=\delta_{i+1}/\delta_i =\delta$ and for every $i$, $C$ can
  have at most $1$ such neighboring cluster on level $\ell+1$. The
  total number of nodes in level $\ell+1$ clusters that are neighbors
  of $C$ can therefore be bounded as $|C|\cdot (k+1)/\delta$.  Hence,
  the number of nodes decreases by at least a factor of
  $\delta/(k+1)$ on each level. For
  $\delta>k+1$, the total number of nodes can thus be
  bounded by
  \[
  \sum_{i=0}^\infty n_0\cdot\left(\frac{k+1}{\delta}\right)^i =
  n_0\cdot\frac{\delta}{\delta-(k+1)}.
  \]

  For determining the largest degree of $G_k$, observe that if in some
  cluster $C$, each node has $\delta_{k+1}$ neighbors in a neighboring
  cluster $C'$, $C'$ is the only neighboring cluster of $C$. Further,
  for each cluster $C$ and each $i\in\set{0,\dots,k+1}$, there is at
  most one cluster $C'$ such that nodes in $C$ have $\delta_i$
  neighbors in $C'$. The largest degree $\Delta$ of $G_K$ can
  therefore be computed as
  \[
  \Delta = \max\set{\delta_{k+1}, \sum_{i=0}^k \delta_i}.
  \]
  Because for each $i$, $\delta_{i+1}/\delta_i= \delta>2$,
  we have $\sum_{i=0}^k\delta_i < 2\delta_k =\delta_{k+1}$.
\hspace*{\fill}\end{proof}

It remains to determine the relationship between $\delta$ and $n_0$
such that $G_k$ can be realized as described in \Cref{sec:construction}. There, the construction of $G_k$ with large
girth is based on a smaller instance $G_k'$ where girth does not
matter. Using \cref{eq:deltai} (i.e.,
$\delta_i:=\delta^i$), we can now tie up this loose end
and describe how to obtain $G_k'$. Let $C_i$ and $C_j$ be two adjacent
clusters with $\ell(C_a,C_b)=(\delta_i,\delta_{i+1})$. We require that
$|C_a|/|C_b|=\delta_{i+1}/\delta_i=\delta$. Hence, $C_i$ and $C_j$ can simply
be connected by as many complete bipartite graphs
$K_{\delta_i,\delta_{i+1}}$ as necessary.

To compute the necessary cluster sizes to do this, let $c_\ell$ be the
size of the smallest cluster on level $\ell$. We have $c_0=n_0'$ and
$c_{\ell-1}/c_{\ell}=\delta_{k+1}/\delta_k=\delta$. Hence, the size
of the smallest cluster decreases by a factor $\delta$ when going
from some level to the next one. The smallest cluster $C_{\min}$ of
$G_k'$ is on level $k+1$. Because each node in the neighboring cluster
of $C_{\min}$ on level $k$ has $\delta_k$ neighbors in $C_{\min}$,
$C_{\min}$ needs to have size at least $\delta_k$. We thus choose
$C_{\min}$ of size $c_{k+1}=|C_{\min}|=\delta_k$. From
$c_{\ell-1}/c_{\ell}=\delta$, we thus get
\begin{equation*}
  n_0' = c_0 = c_{k+1}\cdot \delta^k =
  \delta_k\cdot \delta^k =
  \delta^{2k}.
\end{equation*}

If we assume that $\delta>2(k+1)$, we have $n'\le 2n_0'$, by 
\Cref{lemma:nofnodes}. Applying \Cref{lemma:largegirth} from
\Cref{sec:construction}, we can then construct $G_k$ with
girth $2k+1$ such that $n = O(n' \Delta^{(1+o(1))(2k+1)})$, where
$\Delta=\delta^{k+1}$ is the largest degree of $G_k'$ and $G_k$. Putting everything
together, we obtain
\begin{equation}\label{eq:nofnodesGk}
  n\ =\ O\left( n_0'\Delta^{(2k+1)(1+o(1))}\right)\
  \ =\ O\left(\delta^{4k^2(1+o(1))}\right).
\end{equation}
\begin{theorem}\label{thm:mvc-lowerbound}
  For every integer $k>0$,
  there are graphs $G$, such that in $k$ communication rounds in the
  LOCAL model, every distributed algorithm for the minimum vertex
  cover problem on $G$ has approximation ratios at least
  \[
  \Omega\left(\frac{n^{\frac{1-o(1)}{4k^2}}}{k}\right)\;\ \text{and}\quad
  \Omega\left(\frac{\Delta^{\frac{1}{k+1}}}{k}\right),
  \]
  where $n$ and $\Delta$ denote
  the number of nodes and the highest degree in $G$, respectively.
\end{theorem}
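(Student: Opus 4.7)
The plan is to combine the four building blocks already established, namely Lemma~\ref{lemma:labrand}, the trivial upper bound on $|OPT|$, Lemma~\ref{lemma:nofnodes}, and the size bound \eqref{eq:nofnodesGk}, to extract an approximation ratio as a function of the free parameter $\delta$, and then to invert the relationship between $\delta$ and the quantities $n$ and $\Delta$ of $G_k$.

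First I would observe that the nodes of $C_0$ are not needed in a vertex cover of $G_k$: all edges incident to $C_0$ can be covered by $C_1$, and all remaining edges have at least one endpoint outside of $C_0$. Hence $|OPT|\le n-n_0$. Combined with Lemma~\ref{lemma:labrand}, which gives $\mathbb{E}[|ALG|]\ge n_0/2$, and Lemma~\ref{lemma:nofnodes}, which gives $n-n_0\le 2n_0/(\delta-2)$ whenever $\delta>2$, this yields
\[
\frac{\mathbb{E}[|ALG|]}{|OPT|} \;\ge\; \frac{n_0/2}{2n_0/(\delta-2)} \;=\; \frac{\delta-2}{4}\;=\;\Omega(\delta)
\]
for $\delta\ge 4$. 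So the entire theorem reduces to expressing $\delta$ in terms of $n$ and of $\Delta$, with appropriate slack.

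Next I would invert \eqref{eq:nofnodesGk}, namely $n\le 2^{4k^3+4k}\delta^{4k^2}$, to obtain $\delta\ge n^{1/(4k^2)}/2^{(k+1/k)}$. Similarly, since the largest degree in $G_k$ is $\Delta=\delta_{k+1}=2^{k(k+1)/2}\delta^{k+1}$ by \eqref{eq:deltai}, I get $\delta\ge \Delta^{1/(k+1)}/2^{k/2}$. Plugging these into the approximation-ratio bound yields ratios of $\Omega\!\bigl(n^{1/(4k^2)}\bigr)$ and $\Omega\!\bigl(\Delta^{1/(k+1)}\bigr)$, where the hidden constants depend only on $k$.

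Finally, to match the statement with the $\eps$ slack, I would argue that for any fixed $\eps>0$ we can choose $\delta$ (and hence $G_k$) sufficiently large. Concretely, as $\delta$ grows, both $n$ and $\Delta$ grow polynomially in $\delta$ with exponents depending only on $k$, so the $k$-dependent prefactors $2^{-(k+1/k)}$ and $2^{-k/2}$ are eventually dominated by $n^{-\eps/k^2}$ and $\Delta^{-\eps/(k+1)}$, respectively. This converts $\Omega(n^{1/(4k^2)})$ into $\Omega(n^{(1/4-\eps)/k^2})$ and $\Omega(\Delta^{1/(k+1)})$ into $\Omega(\Delta^{(1-\eps)/(k+1)})$. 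The main obstacle is not really conceptual but bookkeeping: one needs to make sure that the implicit constants arising in the bounds \eqref{eq:nofnodesGk} and in the definition of $\Delta$ are truly absorbed by the $\eps$-slack uniformly in $k$. This is straightforward since the constants are of the form $2^{\mathrm{poly}(k)}$ and can be swallowed by any positive exponent loss once $\delta$ (equivalently $n$ or $\Delta$) is taken large enough as a function of $k$ and $\eps$.
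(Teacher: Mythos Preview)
Your proposal is correct and follows essentially the same approach as the paper: derive $\Omega(\delta)$ from Lemmas~\ref{lemma:labrand} and~\ref{lemma:nofnodes}, then invert \eqref{eq:nofnodesGk} and the formula $\Delta=\delta_{k+1}=2^{k(k+1)/2}\delta^{k+1}$ to express $\delta$ in terms of $n$ and $\Delta$, absorbing the $2^{\mathrm{poly}(k)}$ prefactors into the $\eps$-slack by taking $\delta$ large. The only cosmetic slip is the phrase ``all edges incident to $C_0$ can be covered by $C_1$'': nodes in $C_0$ also have edges to $C_2$, so the cleaner justification for $|OPT|\le n-n_0$ is simply that every edge of $G_k$ runs between two distinct clusters and hence has an endpoint outside $C_0$.
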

\begin{proof}
  We have seen that the size of an optimal vertex cover is
    at most $n-n_0$. For $\delta\geq 2(k+1)$, based on
  \Cref{lemma:labrand,lemma:nofnodes}, the approximation ratio of any
  $k$-round algorithm is therefore at least
  $\Omega(\delta/k)$. We thus need to bound $\delta$ as a
  function of the number of nodes $n$ and the largest degree $\Delta$
  of $G_k$.

  Assuming $\delta\geq2(k+1)$, by 
    \Cref{lemma:nofnodes}, we have $\Delta=\delta^{k+1}$ and
    $n=O(\delta^{4k^2(1+o(1))})$ and we thus have to choose $k$ such
    that $\Delta\leq (2(k+1))^{k+1}$ and
    $n = O\big((2(k+1))^{4k^2(1+o(1))}\big)$. If we choose $k$ such
    that $\Delta=\Theta((2(k+1))^{k+1})$ or
    $n=\Theta\big((2(k+1))^{4k^2(1+o(1))}\big)$, both claimed lower
    bounds simplify to $\Omega(1)$ and they thus trivially hold for
    such $k$ and also for larger $k$. If $k$ is chosen such that
    $\Delta\leq (2(k+1))^{k+1}$ and
    $n = O\big((2(k+1))^{4k^2(1+o(1))}\big)$, the lower bounds follow
    directly because $\delta=\Delta^{1/(k+1)}$ and
    $\delta=n^{(1-o(1))/4k^2}$ and because in this case, the
    (expected) approximation ratio of any (possibly randomized)
    $k$-round algorithm is at least $\Omega(\delta/k)$.
\hspace*{\fill}\end{proof}

\begin{theorem}\label{thm:log}
  In order to obtain a constant or polylogarithmic approximation ratio, even in the LOCAL
  model, every distributed algorithm for the MVC problem requires at
  least $\Omega\left(\sqrt{\log{n}/\log\log n}\right)$ and
  $\Omega\left(\log\Delta/\log\log\Delta\right)$ communication rounds.
\end{theorem}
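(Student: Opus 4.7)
The plan is to derive Theorem \ref{thm:log} as a direct corollary of Theorem \ref{thm:mvc-lowerbound} by inverting the approximation-ratio bounds to obtain bounds on the number of rounds. There is nothing new to construct here: the graph family $G_k$, the equality-of-views argument, and the counting of $|C_0|$ vs.\ $|OPT|$ have already done all the heavy lifting.

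First I would fix an arbitrary small constant $\eps>0$ (say $\eps = 1/8$) and invoke Theorem \ref{thm:mvc-lowerbound}. That gives, for every $k$, a graph $G_k$ on which any $k$-round distributed MVC algorithm has approximation ratio at least $\Omega\!\left(n^{(1/4-\eps)/k^2}\right)$ and simultaneously at least $\Omega\!\left(\Delta^{(1-\eps)/(k+1)}\right)$. Now suppose toward contradiction that some distributed algorithm $\mathcal{A}$ achieves a constant approximation ratio $\rho = O(1)$ in $k$ rounds on every graph. Applied to $G_k$, this forces both $n^{(1/4-\eps)/k^2} = O(1)$ and $\Delta^{(1-\eps)/(k+1)} = O(1)$.

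Taking logarithms in the first inequality yields $(1/4-\eps)\log n \le O(k^2)$, which rearranges to $k = \Omega(\sqrt{\log n})$. Taking logarithms in the second yields $(1-\eps)\log\Delta \le O(k+1)$, which rearranges to $k = \Omega(\log \Delta)$. Combining the two gives the two claimed lower bounds.

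The only mild subtlety is to make sure the bounds of Theorem \ref{thm:mvc-lowerbound} hold on a \emph{single} family of graphs with $n$ and $\Delta$ growing in a consistent way; this is fine because the graph $G_k$ produced in Subsection \ref{sec:analysis} simultaneously controls $n$ and $\Delta$ via the parameter $\delta$, and the bound on $n$ in \eqref{eq:nofnodesGk} together with $\Delta = \delta_{k+1}$ shows that both parameters can be made arbitrarily large while retaining the approximation lower bound. There is no real obstacle; the argument is a one-line rearrangement of Theorem \ref{thm:mvc-lowerbound}, and the only care needed is to keep the constants $\eps$ hidden inside the $\Omega(\cdot)$ so that the final bounds are stated cleanly.
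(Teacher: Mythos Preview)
Your proposal is correct and matches the paper's approach exactly: the paper's proof of Theorem~\ref{thm:log} is the single line ``Follows directly from Theorem~\ref{thm:mvc-lowerbound},'' and your argument is precisely the intended unpacking of that sentence---fix a small $\eps$, apply the approximation lower bounds from Theorem~\ref{thm:mvc-lowerbound}, and solve for $k$.
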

\begin{proof}
  Follows directly from \Cref{thm:mvc-lowerbound}.
\hspace*{\fill}\end{proof}

\paragraph*{Remark} Note that the lower bounds of 
\Cref{thm:mvc-lowerbound,thm:log} hold even in the LOCAL
model (i.e., even if message size and local computations are not
bounded). Further, both lower bounds also hold even if the identifiers
of the nodes are $\set{1,\dots,n}$ and even if all nodes know the
exact topology of the network graph (i.e., in particular, every node
knows the exact values of $\Delta$ and $n$).


\newcommand{\subtr}{\mathsf{sub}}

\section{Locality-Preserving Reductions}\label{sec:reductions}

Using the MVC lower bound, we can now derive lower bounds for
several of the other classical graph problems defined in
\Cref{sec:problems}. Interestingly, the \emph{hardness of
distributed approximation} lower bound on the MVC problem also gives
raise to local computability lower bounds for two of the most
fundamental exact problems in distributed computing: MIS and MM.

Specifically, we use the notion of \emph{locality preserving
reductions} to show that a number of other problems can be reduce to
MVC with regard to their local computability/approximability. This
implies that, like MVC, these problems fall into the polylog-local
class of problems. \Cref{fig:red} shows the hierarchy of locality preserving reductions derived in this section.

\begin{figure}[t]
  \begin{center}
    \includegraphics[width=0.95\columnwidth]{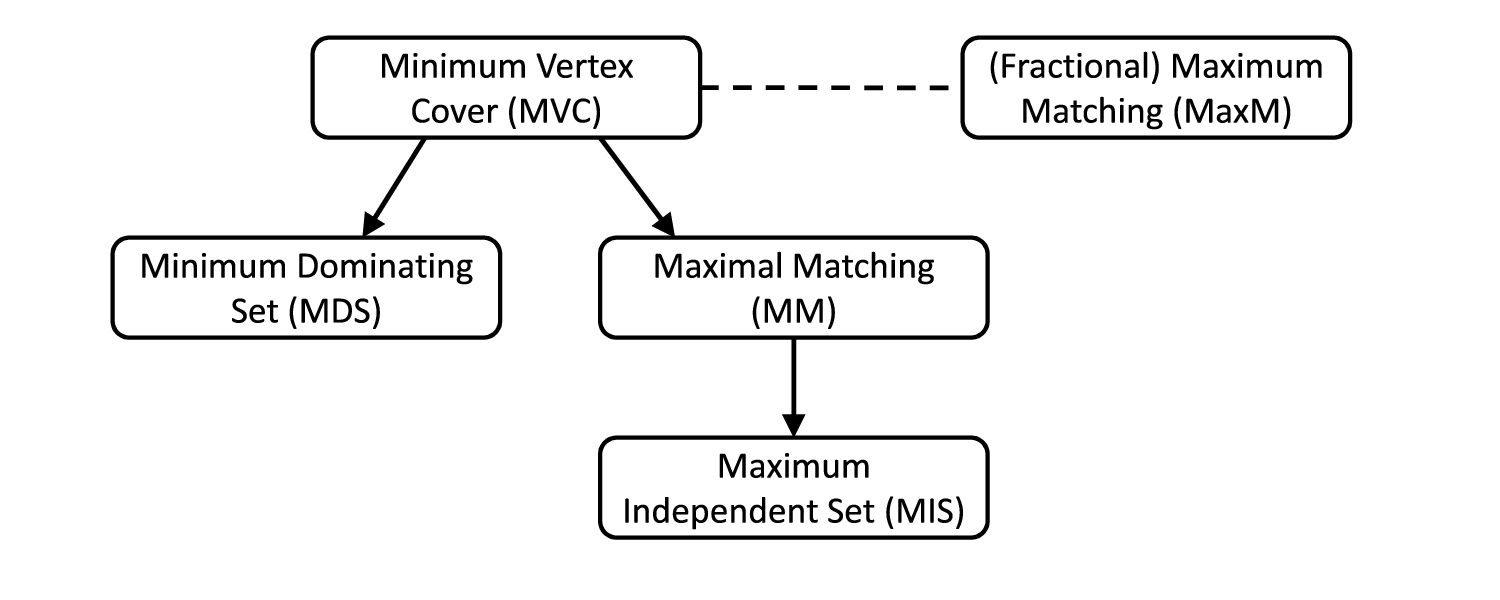}
    \caption{Locality Preserving Reductions. The dotted line between
      MVC and MaxM implies that we do not know of a direct locality
      preserving reduction between the covering and packing problem,
      but the respective lower bound constructions are based on a common cluster tree structure.  }\label{fig:red}
  \end{center}
\end{figure}

\subsection{Lower Bounds for Minimum Dominating Set}

In a non-distributed setting, MDS in equivalent to the general
minimum set cover problem, whereas MVC is a special case of set
cover which can be approximated much better. It is therefore not
surprising that also in a distributed environment, MDS is harder
than MVC. In the following, we formalize this intuition giving a
locality-preserving reduction from MVC to MDS.
\begin{theorem}\label{thm:mds-lowerbound}
  For every integer $k>0$, there are graphs $G$, such that in $k$
  communication rounds in the LOCAL model, every (possibly randomized)
  distributed algorithm for the minimum dominating set problem on $G$
  has approximation ratios at least
    \[
  \Omega\left(\frac{n^{\frac{1-o(1)}{4k^2}}}{k}\right)\quad \text{and}\quad
  \Omega\left(\frac{\Delta^{\frac{1}{k+1}}}{k}\right),
  \]
  where $n$ and $\Delta$ denote
  the number of nodes and the highest degree in $G$, respectively.
\end{theorem}
\begin{proof}
  To obtain a lower bound for MDS, we consider the line graph $L(G_k)$
  of $G_k$. The nodes of a line graph $L(G)$ of $G$ are the edges of
  $G$. Two nodes in $L(G)$ are connected by an edge whenever the two
  corresponding edges in $G$ are incident to the same node. Assuming
  that initially each node knows all its incident edges, a $k$-round
  computation on the line graph of $G$ can be simulated in $k$ round
  on $G$, i.e., in particular $G_k$ and $L(G_k)$ have the same
  locality properties.

  A dominating set of the line graph of a graph $G=(V,E)$ is a subset
  $E'\subseteq E$ of the edges such that for every $\set{u,v}\in E$,
  there is an edge $\set{u',v'}\in E'$ such that
  $\set{u,v}\cap\set{u',v'}\neq\emptyset$. Hence, for every edge
  dominating set $E'$, the node set $S=\bigcup_{\set{u,v}\in
    E'}\set{u,v}$ is a vertex cover of $G$ of size $|S|\leq 2|E'|$. In
  the other direction, given a vertex cover $S$ of $G$, we obtain a
  dominating set of $E'$ of $L(G)$ of the same size $|E'|=|S|$ simply
  by adding some edge $\set{u,v}$ to $E'$ for every node $u\in
  S$. Therefore, up to a factor of at most $2$ in the approximation
  ratio, the two problems are equivalent and thus the claim of the
  theorem follows.
\hspace*{\fill}\end{proof}

\paragraph*{Remark}
Using the same locality-preserving reduction as from MVC to MDS, it
can also be shown that solving the fractional version of MDS is at
least as hard as the fractional version of MVC. Since
\Cref{thm:mvc-lowerbound} also holds for fractional MVC (the
integrality gap of MVC is at most $2$),
\Cref{thm:mds-lowerbound} and
\Cref{cor:mds-lowerbound} can equally be stated for
fractional MDS, that is, for the standard linear programming
relaxation of MDS.

\begin{corollary}\label{cor:mds-lowerbound}
  In order to obtain a constant or polylogarithmic approximation ratio for minimum
  dominating set or fractional minimum dominating set, there are
  graphs on which even in the LOCAL model, every distributed algorithm
  requires time
  \[
  \Omega\left(\sqrt{\frac{\log n}{\log\log n}}\right)
  \;\ \text{and}\quad
  \Omega\left(\frac{\log\Delta}{\log\log\Delta}\right).
  \]
\end{corollary}
\begin{proof}
  The corollary follows directly from \Cref{thm:mds-lowerbound}.
\hspace*{\fill}\end{proof}

\paragraph*{Remark}
The MDS problem on the line graph of $G$ is also known as the minimum
edge dominating set problem of $G$ (an edge dominating set is a set of
edges that 'covers' all edges). Hence, the above reduction shows that
also the minimum edge dominating set problem is hard to approximate
locally.

\subsection{Lower Bounds for Maximum Matching}

While MVC and MDS are standard covering problems, the lower bound can
also be extended to \emph{packing problems}. Unfortunately, we are not
aware of a simple locality-preserving reduction from MVC to a packing
problem, but we can derive the result by appropriately adjusting the
cluster graph from \Cref{sec:construction}. In fact, we prove
the result for the \emph{fractional maximum matching problem} in which
edges may be selected fractionally, and the sum of these fractional
values incident at a single node must not exceed $1$. Let $E(v)$
denotes the set of edges incident to node~$v$.

The basic idea of the lower bound follows along the lines of the MVC
lower bound in \Cref{sec:lower}. The view of an edge $e=(u,v)$
can be defined as the union of its incident nodes' views along with a
specification, which one edge $e$ is in both node views. In 
\Cref{lemma:edgeview}, we first show that if the graph has large girth
so that all node views are trees, the topology of edge view is
uniquely defined by the topologies of the views of the two nodes. The
common edge does not have to be specified explicitly in this case. In
other words, in graphs with large girth, two edges $(u,v)$ and
$(u',v')$ have the same view if $\mathcal{V}_{u,k}=\mathcal{V}_{u',k}$
and $\mathcal{V}_{v,k}=\mathcal{V}_{v',k}$.

The idea is to construct a graph $H_k$ which contains a large set
$E'\subset E$ of edges with equal view up to distance $k$. This
implies that, in expectation, the fractional values $y_e$ assigned to
the edges in $E'$ must be equal. $H_k$ is constructed in such a way,
that there are edges in $E'$ that are incident to many other edges in
$E'$. Further, the edges in $E'$ also contain a large matching of
$H_k$ and all large matchings of $H_k$ predominantly consist of edges
in $E'$. As every distributed $k$-local algorithm assigns equal
fractional values $y_e$ to all edges in $E'$ in expectation, in order
to keep the feasibility at the nodes incident to many edges in $E'$,
this fractional value must be rather small. Together with the fact
that all large matchings have to consist of a large number of edges
from $E'$, this will lead to the sub-optimality captured in
\Cref{thm:matching-lowerbound}.

\begin{figure}[t]
  \begin{center}
    \includegraphics[width=0.7\columnwidth]{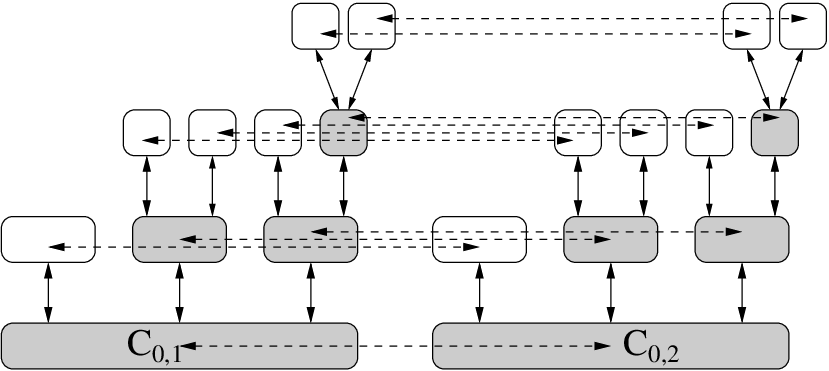}
    \caption{The structure of lower-bound graph $H_k$.}\label{fig:matching}
  \end{center}
\end{figure}

The construction of $H_k$ uses the lower-bound graph $G_k$ of the MVC
lower bound. Essentially, we take two identical copies $G_{k,1}$ and
$G_{k,2}$ of the MVC lower bound graph defined in
\Cref{sec:lower} and we connect $G_{k,1}$ and $G_{k,2}$ to each
other by using a perfect matching. Formally, in order to obtain a
graph $H_k$ with large girth, we start with two copies of $G_{k,1}'$
and $G_{k,2}'$ with low girth (and fewer nodes). We then obtain a
graph $H_k'$ by adding an edge between each node in $G_{k,1}'$ and its
corresponding edge in $G_{k,2}'$ (i.e., the edges connecting
$G_{k,1}'$ and $G_{k,2}'$ form a perfect matching of $H_k'$). Clearly,
the graph $H_k'$ has cycles of length $4$ (formed by two corresponding
edges in $G_{k,1}'$ and $G_{k,2}'$ and by two of the edges connecting
$G_{k,1}'$ and $G_{k,2}'$). In order to obtain a copy of $H_k'$ with
large girth, we compute a lift $H_k$ of $H_k'$ by applying the same
construction as in \Cref{lemma:largegirth} from 
\Cref{sec:construction}. As a result, we obtain a graph $H_k$ with
girth at least $2k+2$ such that $H_k$ consists of two (large-girth)
copies $G_{k,1}$ and $G_{k,2}$ of the MVC lower bound graph, where
corresponding nodes in $G_{k,1}$ and $G_{k,2}$ are connected by an
edge. Hence, also in $H_k$, the edges connecting $G_{k,1}$ and
$G_{k,2}$ for a perfect matching of the graph. In the following, we
use $C_{i,1}$ and $C_{i,2}$ to denote the copies of cluster $C_i$ in
graphs $G_{k,1}$ and $G_{k,2}$. Furthermore, we use the abbreviations
$S_0:=C_0\cup C'_0$ and $S_1:=C_1\cup C'_1$.  The structure of $H_k$
is illustrated in \Cref{fig:matching}. We start by showing that
all edges between clusters $C_{0,1}$, $C_{1,1}$, $C_{0,2}$, and
$C_{1,2}$ have the same view up to distance $k$. As stated above, we
first show that in trees (and thus in graphs of sufficiently large
girth), the views of two edges are identical if all four nodes have
the same view.

Recall that we use $\mathcal{T}_{v,k}$ to denote the topology of the
$k$-hop view of node $v$ in a given graph $G$. If $G$ has girth at
least $2k+1$, $\mathcal{T}_{v,k}$ is a tree of depth at most $k$ (it
is the tree induced by all nodes at distance at most $k$ from
$v$). For an edge $e=\set{u,v}$, we define the $k$-hop topology
$\mathcal{T}_{e,k}$ as the ``union'' of $\mathcal{T}_{u,k}$ and
$\mathcal{T}_{v,k}$. Hence if $G$ has girth at least $2k+2$,
$\mathcal{T}_{e,k}$ is the tree induced by all nodes at distance at
most $k$ from $u$ or $v$.

\begin{lemma}\label{lemma:edgeview}
  Let $G=(V,E)$ be a graph and let $u$, $v$, $x$, and $y$ be $4$ nodes
  such that $e=\set{u,v}\in E$, $e'=\set{x,y}\in E$, and for some
  $k\geq 1$,
  $\mathcal{T}_{u,k}=\mathcal{T}_{v,k}=\mathcal{T}_{x,k}=\mathcal{T}_{y,k}$. If
  the girth of $G$ is at least $2k+2$, the $k$-hop topologies of $e$
  and $e'$ are identical, i.e.,
  $\mathcal{T}_{e,k}=\mathcal{T}_{e',k}$.
\end{lemma}
\begin{proof}
  First note that from the girth assumption, it follows that all the
  considered $k$-hop topologies $\mathcal{T}_{u,k}$,
  $\mathcal{T}_{v,k}$, $\mathcal{T}_{x,k}$, $\mathcal{T}_{y,k}$,
  $\mathcal{T}_{e,k}$, and $\mathcal{T}_{e',k}$ are trees. Further,
  the assumption that the $k$-hop topology of the four nodes $u$, $v$,
  $x$, and $y$ are identical implies that
  $\mathcal{T}_{u,i}=\mathcal{T}_{v,i}=\mathcal{T}_{x,i}=\mathcal{T}_{y,i}$
  for all $i\in\set{0,\dots,k}$.

  We show that $\mathcal{T}_{e,i}=\mathcal{T}_{e',i}$ for all
  $i\in\set{0,\dots,k}$ by induction on $i$. To show this, for an
  unlabeled tree $T$ and a node $u\in T$ and a neighbor $v\in T$, let
  $\subtr_v(T,u)$ be the unlabeled subtree of node $u$ rooted at node
  $v$. Further, let $\subtr(T,u)$ be the multiset containing
  $\subtr_v(T,u)$ for all neighbors $v$ of $u$ in $T$. Note that if
  the topology $\mathcal{T}_{u,k}$ of the $k$-hop view of a node $u$
  is a tree, $\mathcal{T}_{u,k}$ is uniquely described by
  $\subtr(\mathcal{T}_{u,k},u)$. To prove the lemma, we show by
  induction on $i$ that for all $i\in\set{1,\dots,k}$ and for the
  nodes $u$, $v$, $x$, and $y$ of $G$, we have
  \begin{equation}\label{eq:subtreeequivalence}
    \subtr_v(\mathcal{T}_{u,i},u) =
    \subtr_u(\mathcal{T}_{v,i},v) =
    \subtr_y(\mathcal{T}_{x,i},x) =
    \subtr_x(\mathcal{T}_{y,i},y).
  \end{equation}
  Note that because the nodes $u$, $v$, and $x$, and $y$ are assumed
  to have identical $k$-hop views, for all $i\in\set{1,\dots,k}$ we
  also clearly have
  \begin{equation}\label{eq:samesubtrees}
    \subtr(\mathcal{T}_{u,i},u) =
    \subtr(\mathcal{T}_{v,i},v) =
    \subtr(\mathcal{T}_{x,i},x) =
    \subtr(\mathcal{T}_{y,i},y).
  \end{equation}
  \Cref{eq:subtreeequivalence} holds for $i=1$ because all
  the four subtrees are single nodes. For example, for
  $\mathcal{T}{u,1}$ is a star with center $u$ and the subtree rooted
  at neighbor $v$ is the node $v$ itself. For the induction step, let
  us assume that \cref{eq:subtreeequivalence} holds for $i=i_0<k$ and
  we want to show that it also holds for $i=i_0+1$. Let us first
  construct $\subtr_v(\mathcal{T}_{u,i_0+1},u)$. The subtree of $u$
  rooted at $v$ in $\mathcal{T}_{u,i_0+1}$ is uniquely determined by
  the $i_0$-hop view of node $v$. It consists of root node $v$ with
  subtrees $\subtr(\mathcal{T}_{v,i_0},v)\setminus
  \subtr_u(\mathcal{T}_{v,i_0},v)$. By the assumption that 
  \cref{eq:subtreeequivalence,eq:samesubtrees} hold for
  $i=i_0$, we can then conclude that
  $\subtr_v(\mathcal{T}_{u,i_0+1},u)=\subtr_u(\mathcal{T}_{v,i_0+1},v)$
  and by symmetry also that \cref{eq:subtreeequivalence} holds
  for $i=i_0+1$. This proves the claim of the lemma.
  \hspace*{\fill}%
\end{proof}

By the construction of $H_k$ and the structural properties proven
in \Cref{thm:equality}, the following lemma now follows
in a straightforward way.
\begin{lemma}\label{lm:matching-equal}
  Let $\set{u,v}$ and $\set{u',v'}$ be two edges of $H_k$ such that
  $u$, $v$, $u'$, and $v'$ are four nodes in $S_0\cup S_1$. Then, the
  two edges see the same topology up to distance $k$.
\end{lemma}
\begin{proof}
  Let $E'$ be the set of edges connecting $G_{k,1}$ and $G_{k,2}$.  As
  the girth of $H_k$ is at least $2k+2$, the $k$-hop views of all four
  nodes and also the $k$-hop views of the two edges are trees.  By
 \Cref{thm:equality}, when removing the edges in $E'$, all
  four nodes have the same $k$-hop view. As each node in $w\in S_0\cup
  S_1$ is incident to exactly one edge in $E'$, connecting $w$ to a
  node $w'\in S_0\cup S_1$, also after adding the edges in $E'$, all
  four nodes have the same $k$-hop view. By 
  \Cref{lemma:edgeview}, also the two edges have the same $k$-hop view
  and therefore the lemma follows.
\end{proof}

\Cref{lm:matching-equal} implies that no distributed $k$-local
algorithm can distinguish between edges connecting two nodes in
$S_0\cup S_1$. In particular, this means that edges between $C_{0,i}$
and $C_{0,i}$ (for $i\in\set{1,2}$) cannot be distinguished from edges
between $C_{0,1}$ and $C_{0,2}$. In the sequel, let $OPT$ be the value
of the optimal solution for fractional maximum matching and let $ALG$
be the value of the solution computed by any algorithm.
\begin{lemma}\label{lm:matching-labrand}
  When applied to $H_k$, any distributed, possibly randomized
  algorithm which runs for at most $k$ rounds computes, in
  expectation, a solution of at most
  $ALG \le\ |S_0|/(2\delta)+(|V|-|S_0|)$.
\end{lemma}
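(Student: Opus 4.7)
The plan is to split $ALG = \sum_e y_e$ into the weight $A$ on edges whose two endpoints both lie in $S_0$ and the weight on the remaining edges, then bound the two pieces separately. For the remainder, summing the fractional-matching feasibility $\sum_{e \ni v} y_e \le 1$ over $v \in V \setminus S_0$ gives $\sum_{e} y_e \cdot |e \cap (V \setminus S_0)| \le |V|-|S_0|$; since every edge not entirely inside $S_0$ contributes at least once on the left-hand side, we obtain $\sum_{e \not\subseteq S_0} y_e \le |V|-|S_0|$. It therefore suffices to prove $A \le |S_0|/(2\delta)$.

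To bound $A$, I would combine Lemma~\ref{lm:matching-equal} with a Yao-type symmetrization over a uniformly random labeling of $H_k$, paralleling the argument in Lemma~\ref{lemma:labrand}. Because every node in $S_0 \cup S_1$ has the same $k$-hop topology, any randomized $k$-round algorithm must, in expectation, assign the same fractional weight $y$ to every edge both of whose endpoints lie in $S_0 \cup S_1$. Now fix any $v \in C_1$: in $H_k$, $v$ is incident to $\delta$ edges going into $C_0$ and one cross edge going to $\phi(v) \in C'_1$, and all $\delta+1$ of these edges are internal to $S_0 \cup S_1$. Feasibility at $v$, together with non-negativity of the remaining edge weights at $v$, yields $(\delta+1)\,y \le \sum_{e \ni v} y_e \le 1$, so $y \le 1/(\delta+1) \le 1/\delta$.

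Finally, since $C_0$ and $C'_0$ are each independent sets in $H_k$, the only edges with both endpoints in $S_0 = C_0 \cup C'_0$ are the $|C_0| = n_0$ cross edges $\{(v,\phi(v)) : v \in C_0\}$, each of which is itself an $(S_0 \cup S_1)$-internal edge. Hence $A \le n_0\, y \le n_0/\delta = |S_0|/(2\delta)$, and adding the two bounds gives $ALG \le |S_0|/(2\delta) + (|V|-|S_0|)$ as claimed. The main delicate step is the Yao-minimax argument that upgrades Lemma~\ref{lm:matching-equal}'s topological indistinguishability of \emph{nodes} to identical expected weights on \emph{edges}: under a uniformly random labeling of $H_k$ the $k$-views at the two endpoints of any two $(S_0 \cup S_1)$-internal edges have the same joint distribution, forcing a deterministic algorithm to assign them the same expected weight, and Yao's principle then carries the conclusion to randomized algorithms on the worst-case labeling.
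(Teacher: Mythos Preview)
Your proposal is correct and follows essentially the same approach as the paper's proof: both split $ALG$ into the contribution of edges inside $S_0$ versus the rest, bound the latter by $|V|-|S_0|$ via the feasibility constraints at nodes of $V\setminus S_0$, and bound the former by arguing (via Lemma~\ref{lm:matching-equal} plus random labeling and Yao) that every edge with both endpoints in $S_0\cup S_1$ receives the same expected weight, then invoking feasibility at a node of $S_1$ with its $\delta$ neighbors in $S_0$. The only cosmetic differences are that you count $\delta+1$ indistinguishable edges at $v\in C_1$ (including the cross edge to $\phi(v)$) rather than just $\delta$, and you phrase the $|V|-|S_0|$ bound as a direct summation rather than as a dual vertex-cover argument; neither changes the substance.
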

\begin{proof}
  First, consider deterministic algorithms. The decision of which
  value $y_e$ is assigned to edge $e=(v,v)$ depends only on the view
  the topologies $\mathcal{T}_{u,k}$ and $\mathcal{T}_{v,k}$ and the
  labelings $\mathcal{L}(\mathcal{T}_{u,k})$ and
  $\mathcal{L}(\mathcal{T}_{v,k})$, which $u$ and $v$ can collect
  during the $k$ communication rounds. Assume that the labeling of
  $H_k$ is chosen uniformly at random. In this case, the labeling
  $\mathcal{L}(\mathcal{T}_{u,k})$ for any node $u\in V$ is also
  chosen uniformly at random.

  All edges connecting nodes in $S_0$ and $S_1$ see the same
  topology. If the node's labels are distributed uniformly at random,
  it follows that the \emph{distribution of the views} (and therefore
  the distribution of the $y_e$) is the same for all edges connecting
  nodes in $S_0$ and $S_1$. We denote the random variables describing
  the distribution of the $y_e$ by $Y_e$. Every node $u\in S_1$ has
  $\delta_1=\delta$ neighbors in $S_0$. Therefore, for edges $e$
  between nodes in $S_0$ and $S_1$, it follows by linearity of
  expectation that $E[Y_e]\le 1/\delta$ because otherwise, there
  exists at least one labeling for which the computed solution is not
  feasible. On the other hand, consider an edge $e'$ having both
  end-points in $S_0$. By \Cref{lm:matching-equal}, these edges
  have the same view as edges $e$ between $S_0$ and $S_1$. Hence, for
  $y_e'$ of $e'$, it must equally hold that $E[Y_e']\le
  1/\delta$. Because there are $|S_0|/2$ such edges, the expected
  total value contributed to the objective function by edges between
  two nodes in $S_0$ is at most $|S_0|/(2\delta)$.

  Next, consider all edges which do not connect two nodes in $S_0$.
  Every such edge has at least one end-point in $V\setminus S_0$. In
  order to obtain a feasible solution, the total value of all edges
  incident to a set of nodes $V'$, can be at most $|V'|=|V\setminus
  S_0|$. This can be seen by considering the dual problem, a kind of
  minimum vertex cover where some edges only have one incident node.
  Taking all nodes of $V'$ (assigning 1 to the respective variables)
  yields a feasible solution for this vertex cover problem. This
  concludes the proof for deterministic algorithms.

  For probabilistic algorithms, we can apply an identical argument
  based on Yao's minimax principle as in the MVC lower bound
  (cf.~\Cref{lemma:labrand}).
\hspace*{\fill}\end{proof}

\Cref{lm:matching-labrand} yields an upper bound on the
objective value achieved by any $k$-local fractional maximum
matching algorithm. On the other hand, it is clear that choosing
all edges connecting corresponding nodes of $G_k$ and $G'_k$ is
feasible and hence, $OPT\geq n/2 \geq |S_0|/2$. Let $\alpha$
denote the approximation ratio achieved by any $k$-local
distributed algorithm, and assume---as in the MVC proof---that
$k+1\leq \delta/2$. Using the relationship between $n$, $|S_0|$,
$\delta$, and $k$ proven in \Cref{lemma:nofnodes} and
combining it with the bound on $ALG$ gives raise to the following
theorem.

\begin{theorem}\label{thm:matching-lowerbound}
  For every integer $k>0$, there are graphs $G$, such that in $k$
  communication rounds in the LOCAL model, every (possibly randomized)
  distributed algorithm for the (fractional) maximum matching problem
  on $G$ has approximation ratios at least
  \[
  \Omega\left(\frac{n^{\frac{1-o(1)}{4k^2}}}{k}\right)\quad \text{and}\quad
  \Omega\left(\frac{\Delta^{\frac{1}{k+1}}}{k}\right),
  \]
  where $n$ and $\Delta$ denote the number
  of nodes and the highest degree in $G$, respectively.
\end{theorem}

\begin{proof}
  By \Cref{lm:matching-labrand,lemma:nofnodes}, on $H_k$, the approximation ratio of any,
  possibly randomized, (fractional) maximum matching algorithm is
  $\Omega(\delta)$. Because asymptotically, the relations between
  $\delta$ and the largest degree $\Delta$ and the number of nodes $n$
  is the same in the MVC lower bound graph $G_k$ and in $H_k$, the
  lower bounds follow in the same way as the lower bounds in 
  \Cref{thm:mvc-lowerbound}.
\hspace*{\fill}\end{proof}

\begin{corollary}\label{cor:matching-log}
  In order to obtain a constant or polylogarithmic
  approximation ratio, even in the LOCAL model, every distributed
  algorithm for the (fractional) maximum matching problem requires at
  least
  \[
  \Omega\left(\sqrt{\log{n}/\log\log n}\right) \;\
  \text{and}\quad \Omega\left(\log\Delta/\log\log \Delta\right)
  \]
  communication rounds.
\end{corollary}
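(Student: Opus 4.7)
The plan is to derive Corollary~\ref{cor:matching-log} as an immediate consequence of Theorem~\ref{thm:matching-lowerbound} by inverting the two approximation-ratio lower bounds: one views them as inequalities of the form (approximation ratio) $\geq f(n,k)$ or $\geq g(\Delta,k)$, fixes the left-hand side to a constant, and solves for $k$.

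First, I would fix any constant $\varepsilon \in (0, 1/4)$ and apply Theorem~\ref{thm:matching-lowerbound} with this $\varepsilon$. Suppose, for contradiction, that some distributed algorithm achieves a constant approximation ratio $\alpha = O(1)$ for fractional maximum matching in $k$ rounds on every graph. Then on the family $H_k$ constructed in Section~\ref{sec:reductions}, the theorem gives $\alpha = \Omega\bigl(n^{(1/4-\varepsilon)/k^2}\bigr)$. Since $\alpha$ is constant, taking logarithms yields $(1/4 - \varepsilon)\log n / k^2 = O(1)$, hence $k^2 = \Omega(\log n)$ and therefore $k = \Omega(\sqrt{\log n})$. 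This establishes the first bound.

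For the $\Delta$-bound, I would apply the same approach to the second inequality in Theorem~\ref{thm:matching-lowerbound}. A constant approximation ratio $\alpha$ satisfies $\alpha = \Omega\bigl(\Delta^{(1-\varepsilon)/(k+1)}\bigr)$, and taking logarithms gives $(1-\varepsilon)\log \Delta / (k+1) = O(1)$, i.e.\ $k+1 = \Omega(\log \Delta)$, hence $k = \Omega(\log \Delta)$. Both bounds hold for randomized algorithms as well, since Theorem~\ref{thm:matching-lowerbound} itself is stated for possibly randomized algorithms via the application of Yao's principle in Lemma~\ref{lm:matching-labrand}.

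There is essentially no obstacle here: the corollary is a one-line algebraic consequence of the theorem, and all the substantive work (the construction of $H_k$, the view-equality argument inherited from Theorem~\ref{thm:equality}, the randomized reduction via Yao's principle, and the counting of nodes via Lemma~\ref{lemma:nofnodes}) has already been done in proving Theorem~\ref{thm:matching-lowerbound}. The only care needed is to note that $\varepsilon$ can be chosen as an arbitrary small positive constant, so the constants hidden in the two $\Omega(\cdot)$ expressions for $k$ are legitimate constants independent of $n$ and $\Delta$.
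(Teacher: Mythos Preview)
Your proposal is correct and is exactly the intended derivation: the paper treats this corollary as an immediate consequence of Theorem~\ref{thm:matching-lowerbound} (just as Theorem~\ref{thm:log} is stated to follow directly from Theorem~\ref{thm:mvc-lowerbound}), and the inversion of the two bounds by taking logarithms is precisely how one obtains the claimed round complexities.
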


\subsection{Lower Bounds for Maximal Matching}

A maximal matching M of a graph $G$ is a maximal set of edges
which do not share common end-points. Hence, a maximal matching is
a set of non-adjacent edges $M$ of $G$ such that all edges in
$E(G)\setminus M$ have a common end-point with an edge in M. The
best known lower bound for the distributed computation of a
maximal matching is $\Omega(\log^*\!n)$ which holds for rings
\cite{linial92}.

\begin{theorem}\label{thm:mm-lowerbound}
  There are graphs $G$ on which every distributed, possibly randomized
  algorithm in expectation requires time
  \[
  \Omega\left(\sqrt{\log{n}/\log\log n}\right)
  \;\ \text{and}\quad
  \Omega\left(\log\Delta/\log\log\Delta\right)
  \]
  to compute a maximal matching. This bound holds even in the LOCAL
  model, i.e. even if message size is unlimited and nodes have unique
  identifiers.
\end{theorem}
\begin{proof}
  It is well known that the set of all end-points of the edges of a
  maximal matching form a 2-approximation for MVC. This simple
  2-approximation algorithm is commonly attributed to Gavril and
  Yannakakis. For deterministic algorithms, the lower bound for the
  construction of a maximal matching in
  \Cref{thm:mm-lowerbound} therefore directly follows from
  \Cref{thm:log}.

  Generalizing this result to randomized algorithms, however, still
  requires some work. The problem is that
  \Cref{thm:mvc-lowerbound} lower bounds the achievable
  approximation ratio by distributed algorithms whose time complexity
  is exactly $k$. That is, it does not provide a lower bound for
  randomized algorithms whose time complexity is at most $k$ in
  expectation or with a certain probability. As stated in the theorem,
  however, we consider distributed algorithms that always compute a
  feasible solution, i.e., only the time complexity depends on
  randomness. In other words, \Cref{thm:mvc-lowerbound} yields
  a bound on Monte Carlo type algorithms, whereas in the case of
  maximal matching, we are primarily interested in Las Vegas type
  algorithms.

  In order to generalize the theorem to randomized algorithms, we give
  a transformation from an arbitrary distributed maximal matching
  algorithm $\mathcal{A}_M$ with expected time complexity $T$ into a
  distributed vertex cover algorithm $\mathcal{A}_{\mathrm{VC}}$ with fixed
  time complexity $2T+1$ and expected approximation ratio
  $11$.

  We first define an algorithm $\mathcal{A}'_{\mathrm{VC}}$. In a first phase,
  $\mathcal{A}'_{\mathrm{VC}}$ simulates $\mathcal{A}_M$ for exactly $2T$
  rounds. Let $E_M\subseteq E$ be the set of edges selected after
  these rounds. In the second phase, every node $v$ checks whether it
  has at most one incident edge in $E_{\mathrm{VC}}$. If a node has more than
  one incident edge in $E_{\mathrm{VC}}$, it removes all these edges from
  $E_{\mathrm{VC}}$. Hence, $E_{\mathrm{VC}}$ forms a feasible matching, although not
  necessarily a maximal one.

  It follows from Markov's inequality that when running
  $\mathcal{A}_M$ for $2T$ rounds, the probability for obtaining a
  feasible maximal matching is at least $1/2$. Therefore, algorithm
  $\mathcal{A}'_{\mathrm{VC}}$ outputs a matching that is maximal with
  probability at least $1/2$. Let $V_{\mathrm{VC}}\subseteq V$ denote the set
  of all nodes incident to an edge in $E_{\mathrm{VC}}$. If $E_{\mathrm{VC}}$ is a
  maximal matching, $V_{\mathrm{VC}}$ is a feasible vertex cover (with
  probability at least $1/2$). In any case, the construction of
  $\mathcal{A}'_{\mathrm{VC}}$ guarantees that $|V_{\mathrm{VC}}|$ is at most twice the
  size of an optimal vertex cover.

  Algorithm $\mathcal{A}_{\mathrm{VC}}$ executes $c\cdot\ln\Delta$ independent runs
  of $\mathcal{A}'_{\mathrm{VC}}$ in parallel for a sufficiently large constant
  $c$. Let $V_{\mathrm{VC},i}$ be the node set $V_{\mathrm{VC}}$ constructed by the
  $i^{\mathit{th}}$ of the $c\cdot\ln\Delta$ runs of Algorithm
  $\mathcal{A}_{\mathrm{VC}}$. For each node $u\in V$, we define
  \[
  x_u := 6\cdot\frac{\left|\set{i:u\in V_{\mathrm{VC},i}}\right|}{c\cdot\ln\Delta}.
  \]
  Algorithm $\mathcal{A}_{\mathrm{VC}}$ computes a vertex cover $S$ as
  follows. All nodes with $x_u\geq1$ join the initial set $S$. In one
  additional round, nodes that have an uncovered edge also join $S$ to
  guarantee that $S$ is a vertex cover.

  Let $\mathit{OPT}_{\mathrm{VC}}$ be the size of an optimal vertex
  cover. Because for each $i$, $|V_{\mathrm{VC},i}|\leq 2\mathit{OPT}_{\mathrm{VC}}$, we
  get $\sum_{u\in V} x_u \leq 12\cdot \mathit{OPT}_{\mathrm{VC}}$. For every
  edge $\set{u,v}$, in each run of $\mathcal{A}'_{\mathrm{VC}}$ that ends with
  a vertex cover, the set $\set{i:u\in V_{\mathrm{VC},i}}$ contains at least
  one of the two nodes $\set{u,v}$. Hence, if at least $1/3$ of the
  runs of $\mathcal{A}'_{\mathrm{VC}}$ produces a vertex cover, we have
  $x_u+x_v\geq2$ for every edge $\set{u,v}$. Thus, in this case,
  taking all nodes $u$ for which $x_u\geq 1$ gives a valid vertex
  cover of size at most $\sum_{u\in V} x_u$. Let $X$ be the number of
  runs of $\mathcal{A}'_{\mathrm{VC}}$ that result in a vertex cover. Because
  the runs are independent and since each of them gives a vertex cover
  with probability at least $1/2$, we can bound the number of
  successful runs using a Chernoff bound:
  \[
  \Pr\left[X < \frac{c\ln\Delta}{3}\right] =
  \Pr\left[X <
    \left(1-\frac{1}{3}\right)\cdot\frac{c\ln\Delta}{2}\right] \leq
    e^{-\frac{c}{36}\ln\Delta} = \frac{1}{\Delta^{c/36}}.
  \]
  For $c\geq36$, the probability that the nodes $u$ with $x_u\geq1$ do
  not form a vertex cover is at most $1/\Delta$. Thus, with
  probability at least $1-1/\Delta$, the algorithm computes a vertex
  cover of size at most $10\mathit{OPT}_{\mathrm{VC}}$. With probability at
  most $1/\Delta$, the vertex cover has size at most
  $n\leq\Delta\mathit{OPT}_{\mathrm{VC}}$. The expected size of the computed
  vertex cover therefore is at most $11\mathit{OPT}_{\mathrm{VC}}$. The theorem
  now follows from \Cref{thm:log}.
\hspace*{\fill}\end{proof}

\subsection{Lower Bounds for Maximal Independent Set (MIS)}

As in the case of a maximal matching, the best currently known lower
bound on the distributed complexity of an MIS has been Linial's
$\Omega(\log^*\!n)$ lower bound. Using a locality-preserving
reduction from MM to MIS, we can strengthen this lower bound on
general graphs as formalized in the following theorem.
\begin{theorem}\label{thm:mmmis}
  There are graphs $G$ on which every distributed, possibly randomized
  algorithm in expectation requires time
  \[
  \Omega\left(\sqrt{\log{n}/\log\log n}\right)
  \;\ \text{and}\quad
  \Omega\left(\log\Delta/\log\log\Delta\right)
  \]
  to compute a maximal independent set (MIS). This bound holds even in
  the LOCAL model, i.e., even if message size is unlimited and nodes
  have unique identifiers.
\end{theorem}
\begin{proof}
  For the MIS problem, we again consider the line graph $L(G_k)$ of
  $G_k$, i.e., the graph induced by the edges of $G_k$.
  The MM problem on a graph $G$ is equivalent to the MIS problem on
  $L(G)$. Further, if the real network graph is $G$, $k$ communication
  rounds on $L(G)$ can be simulated in $k+\Oh(1)$ communication
  rounds on $G$.  Therefore, the times $t$ to compute an MIS on
  $L(G_k)$ and $t'$ to compute an MM on $G_k$ can only differ by a
  constant, $t\ge t'-O(1)$. Let $n'$ and $\Delta'$ denote the number
  of nodes and the maximum degree of $G_k$, respectively. The number
  of nodes $n$ of $L(G_k)$ is less than $n'^2/2$, the maximum degree
  $\Delta$ of $G_k$ is less than $2\Delta'$. Because $n'$ only appears
  as $\log n'$, the power of $2$ does not hurt and the theorem holds
  ($\log n = \Theta(\log n')$).
\hspace*{\fill}\end{proof}

\subsection{Connected Dominating Set Lower Bound}
\label{sec:cdslower}

In this section, we extend our lower bound to the minimum connected
dominating set problem (MCDS). First first start with a simple
technical lemma that relates the respective sizes of an optimal
dominating set and an optimal connected dominating set in a graph.
\begin{lemma}\label{lemma:distcds}
  Let $G=(V,E)$ be a connected graph and let $DS_{OPT}$ and $CDS_{OPT}$ be the sizes
of optimal dominating and connected dominating sets of G. It holds
that $CDS_{OPT} < 3 \cdot DS_{OPT}$. Moreover, every dominating set
$D$ of $G$ can be turned into a connected dominating set
$D'\supseteqD$ of size $|D'| < 3|D|$.
\end{lemma}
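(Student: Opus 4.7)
The plan is to prove the second (stronger) statement, from which the first follows by taking $D$ to be an optimal dominating set. Given any dominating set $D$ of a connected graph $G$, I will add at most $2(|D|-1)$ ``Steiner'' vertices to $D$ in order to make the induced subgraph connected, yielding $|D'|\le 3|D|-2<3|D|$.

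First I would introduce an auxiliary graph $H$ on vertex set $D$, in which two dominators $u,v\in D$ are adjacent exactly when $d_G(u,v)\le 3$. The crucial intermediate claim is that $H$ is connected. To see this, pick any two dominators $u,v\in D$ and consider a shortest path $u=x_0,x_1,\dots,x_\ell=v$ in $G$ (which exists since $G$ is connected). For each $i$, since $D$ is dominating, there is some $d_i\in D$ with $d_G(x_i,d_i)\le 1$ (take $d_i=x_i$ when $x_i\in D$, which in particular forces $d_0=u$, $d_\ell=v$). Then for consecutive indices,
\[
d_G(d_i,d_{i+1})\le d_G(d_i,x_i)+d_G(x_i,x_{i+1})+d_G(x_{i+1},d_{i+1})\le 1+1+1=3,
\]
so $\{d_i,d_{i+1}\}$ is an edge of $H$ (or $d_i=d_{i+1}$). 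Hence $d_0,d_1,\dots,d_\ell$ is a walk in $H$ from $u$ to $v$, proving connectivity of $H$.

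Next I would take any spanning tree $T$ of $H$; it has exactly $|D|-1$ edges. For every tree edge $\{u,v\}$, by definition of $H$ there exists a $u$-$v$ path in $G$ of length at most $3$, which uses at most $2$ internal vertices outside $D$. Let $S$ be the union of all these internal vertices over all edges of $T$, and set $D':=D\cup S$. Then $|D'|\le |D|+2(|D|-1)=3|D|-2$. The subgraph of $G$ induced by $D'$ contains, for every tree edge of $T$, a connecting path in $G$, so it is connected; and since $D\subseteq D'$ and $D$ was dominating, $D'$ is a connected dominating set containing $D$.

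Applying this construction to an optimal dominating set $D$ of size $DS_{\mathrm{OPT}}$ immediately yields $CDS_{\mathrm{OPT}}\le |D'|\le 3\,DS_{\mathrm{OPT}}-2<3\,DS_{\mathrm{OPT}}$. There is no real obstacle here: the only point requiring care is the connectivity argument for $H$, which hinges on the simple triangle-inequality observation that an edge of $G$ links two dominators whose distance is at most $3$. Everything else is a counting of Steiner vertices inserted along a spanning tree.
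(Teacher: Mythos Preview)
Your proof is correct and follows essentially the same approach as the paper: define the auxiliary graph on $D$ with edges between dominators at $G$-distance at most $3$, prove it connected, take a spanning tree, and bridge each tree edge with at most two extra vertices to obtain $|D'|\le 3|D|-2$. The only minor difference is in the connectivity sub-argument---you prove it directly by dominating the vertices of a $u$--$v$ path and applying the triangle inequality, whereas the paper argues by contradiction via a cut $(S,T)$ and an undominated midpoint; your version is arguably cleaner.
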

\begin{proof}
Given $G$ and $D$, we define a graph $G_D = (V_D,E_D)$ as follows.
$V_D = D$ and there is an edge $(u,v)\in E_D$ between $u,v\in D$ if
and only if $d_G(u, v)\leq 3$. We first show that $G_D$ is
connected. For the sake of contradiction assume that $G_D$ is not
connected. Then there is a cut $(S,T)$ with $S\subseteq D, T = D
\setminus S$, and $S,T \neq \emptyset$ such that 
\begin{equation}\label{Eq:cds1}
\forall u \in S, \forall v \in T \; : \;d_G(u, v)\geq 4.
\end{equation}
Let $u\in S$ and $v\in T$ be such that
\begin{equation}\label{Eq:cds2}
d_G(u, v) = \min_{u\in S,v\in T}{(d_G(u, v))}.
\end{equation}
By \Cref{Eq:cds1}, there is a node $w\in V$ with
$d_G(u,w)\geq 2$ and $d_G(v,w)\geq 2$ on each shortest path
connecting $u$ and $v$. Because of \Cref{Eq:cds2}, we have
that
\begin{equation}\nonumber
\forall u \in S, \forall v \in T \; : \; d_G(u,w)\geq 2 \wedge
d_G(v,w)\geq 2.
\end{equation}
However this is a contradiction to the assumption that $D=S\cup T$
is a dominating set of $G$.

We can now construct a connected dominating set $D'$ as follows. We
first compute a spanning tree of $G_D$. For each edge $(u,v)$ of the
spanning tree, we add at most two nodes such that $u$ and $v$ become
connected. Because the number of edges of the spanning tree is
$|D|-1$, this results in a connected dominating set of size at most
$3|D|-2$.
\hspace*{\fill}\end{proof}

Using this lemma, we can now derive the lower bound on the local
approximability of the MCDS problem.

\begin{figure}[t]
  \centering
  \includegraphics[width=0.92\textwidth]{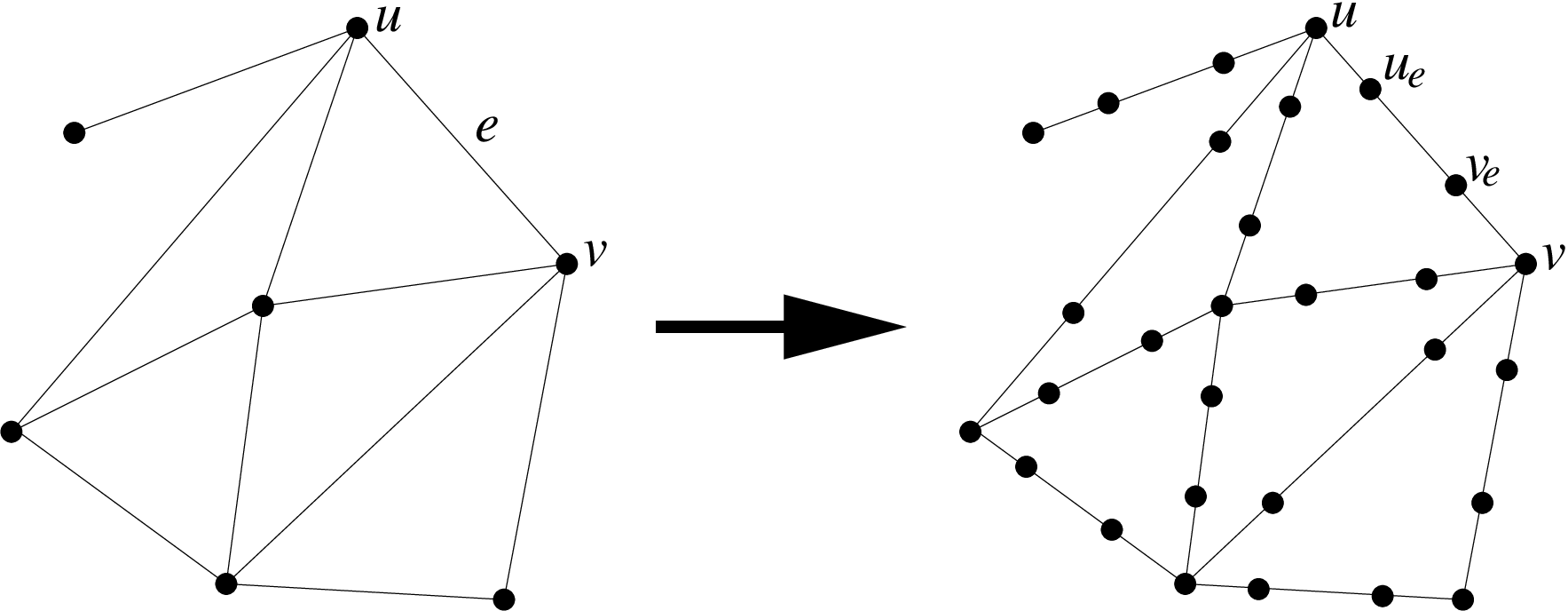}
  \caption{Graph transformation used for the distributed minimum connected
    dominating set lower bound}
  \label{fig:cdslowerbound}
\end{figure}

\begin{theorem}
  Consider a (possibly randomized) $k$-round algorithm for the MCDS
  problem. There are graphs for which every such algorithm computes a
  connected dominating set $S$ of size at least
  \[
  |S|\ \ge\ n^{\Omega(1/k)}\cdot\mathrm{CDS_{OPT}},
  \]
  where $\mathrm{CDS_{OPT}}$ denotes the size of an optimal connected
  dominating set.
\end{theorem}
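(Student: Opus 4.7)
The plan is to combine a high-girth graph construction with the observation (from Lemma~\ref{lemma:distcds}) that connected dominating sets cost at most a factor of $3$ more than dominating sets, while an indistinguishability argument forces any local algorithm's output to be $\Omega(n)$. The key gain over the MDS lower bound comes from the fact that the relevant graph has $n$ only polynomial in $\Delta^k$ (rather than $\delta^{\Theta(k^2)}$ as in the MVC construction), which is exactly what a girth argument buys us.

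First, I would take a $d$-regular graph $G$ on $n$ vertices with girth strictly greater than $2k$, where $d = n^{\Theta(1/k)}$; such graphs are obtained either probabilistically or from explicit constructions such as LPS Ramanujan graphs (or the $D(r,q)$-based construction already used in Section~\ref{sec:construction}). The high-girth property guarantees that in $k$ rounds every node sees a rooted tree of depth $k$ with branching factor $d-1$, hence all nodes are locally indistinguishable.

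Second, I would bound $\mathrm{CDS_{OPT}}$ from above. A standard greedy/probabilistic argument yields a dominating set of size $O((n\log d)/d)$, and Lemma~\ref{lemma:distcds} upgrades it to a connected dominating set of size $O((n\log d)/d) = \tilde{O}(n^{1-1/k})$.

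Third---and this is the step I expect to be the main obstacle---I would show that every $k$-round algorithm outputs a CDS of expected size $\Omega(n)$. Applying Yao's minimax principle together with a uniformly random labeling, I would argue by symmetry (as in Lemma~\ref{lemma:labrand}) that each node is included with the same probability $p$. To conclude that $p = \Omega(1)$, one uses the connectivity requirement on the output: in a graph of girth $>2k$, the subgraph induced by an inclusion-probability-$p$ subset looks locally tree-like, so being both dominating and connected forces $p$ to be a constant; an inclusion probability as low as $\Theta(1/d)$ suffices to dominate but typically yields many disconnected components in the locally tree-like host, which contradicts the requirement that every realization is a valid CDS. Making this rigorous is the delicate part of the proof and is the reason the statement is qualified to hold only for an appropriately chosen graph family where the symmetry cuts can be concentrated on adjacent pairs of equally-qualified nodes.

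Putting the three steps together gives an approximation ratio of at least $\Omega(n)/\tilde{O}(n^{1-1/k}) = n^{\Omega(1/k)}$, which is the claimed bound. As a by-product, setting $k = \Theta(\log n / \log\log n)$ shows that any polylogarithmic-approximation MCDS algorithm requires $\Omega(\log n/\log\log n)$ rounds, matching the statement made in Section~\ref{sec:intro}.
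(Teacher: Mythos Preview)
Your overall plan---high-girth host graph, small $\mathrm{CDS_{OPT}}$, large algorithmic output---is the right shape, and steps one and two are fine. Step three, however, has a genuine gap. After symmetrising over labelings you correctly obtain that every node is selected with the same marginal probability $p$, but the conclusion ``$p=\Omega(1)$'' does not follow from your percolation heuristic. The algorithm's output is not an i.i.d.\ $p$-subset: the decisions at different nodes are highly correlated through the labeling, so the fact that a \emph{random} $p$-fraction of a locally tree-like graph is typically disconnected says nothing about the specific sets the algorithm outputs. In principle the algorithm could, for each labeling, output a particular connected dominating set of size $O((n\log d)/d)$; averaged over random labelings this gives $p=O((\log d)/d)$ with every realization a perfectly valid CDS. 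The obstruction to this is not percolation---it is locality, and your argument never invokes locality beyond the symmetry step.

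The paper's proof replaces this step by a direct indistinguishability-from-a-tree argument, and in fact gets $p=1$ rather than merely $p=\Omega(1)$. Concretely, the paper starts from a graph $G$ with girth at least $(2k+1)/3$ and $|E|=n^{1+\Omega(1/k)}$, and replaces each edge by a path of length $3$, obtaining $G'$ with girth at least $2k+1$ and $N=\Theta(|E|)$ nodes. The original vertex set $V$ is a dominating set of $G'$, so Lemma~\ref{lemma:distcds} gives $\mathrm{CDS_{OPT}}<3|V|$. For the algorithmic lower bound one observes that every node of $G'$ has degree at least $2$ and sees only a tree within $k$ hops; since on a genuine tree every internal node must lie in any connected dominating set, and since the algorithm must be correct on trees as well, no node of $G'$ can safely exclude itself. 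Hence every $k$-round algorithm outputs all of $V\cup W$, and the ratio is $N/(3|V|)=n^{\Omega(1/k)}=N^{\Omega(1/k)}$. Note that the same tree-indistinguishability argument would also rescue your regular-graph construction (every node has degree $d\ge 2$, hence is ``internal''), so the missing idea is precisely this: use correctness on trees, not random-subset intuition, to force every node into the output.
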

\begin{proof}
  It follows from a well-known theorem (see e.g.~\cite{bollobas-book}) that there exist graphs $G=(V,E)$ with
  girth $g(G)\ge (2k+1)/3$ and number of edges $|E|=n^{1+\Omega(1/k)}$. From any such graph $G$, we construct a graph $G'$ as follows. For
  every edge $e=(u,v)\in E$, we generate additional nodes $u_e$ and
  $v_e$. In $G'$, there is an edge between $u$ and $u_e$, between
  $u_e$ and $v_e$, and between $v$ and $v_e$. Note that there is no
  edge between $u$ and $v$ anymore. The described transformation is
  illustrated in \Cref{fig:cdslowerbound}. We denote the set of
  all new nodes by $W$ and the number of nodes of $G'$ by $N=|V\cup
  W|$.

  By the definition of $G'$, the nodes in $V$ form a dominating set of
  $G'$. Hence, an optimal connected dominating set on $G'$ has size
  less than $3|V|$ by \Cref{lemma:distcds}. Note that the
  construction described in \Cref{lemma:distcds} actually
  computes a spanning tree $T$ on $G$ and adds all nodes $u_e,v_e\in
  W$ to the dominating set for which $(u,v)$ is an edge of $T$. To
  bound the number of nodes in the connected dominating set of a
  distributed algorithm, we have a closer look at the locality
  properties of $G'$. Because $g(G)\ge(2k+1)/3$, the girth of $G'$ is
  $g(G')\ge2k+1$. This means that in $k$ communication rounds it is
  not possible to detect a cycle of $G'$. Hence, no node can locally
  distinguish $G'$ from a tree. However, since on a tree all edges are
  needed to keep a connected graph, a $k$-round algorithm removing an
  edge from $G'$ cannot guarantee that the resulting topology remains
  connected. This means that the connected dominating set of every
  $k$-round algorithm must contain all nodes $V\cup W$ of $G'$. The
  approximation ratio of every distributed $k$-round MCDS algorithm on
  $G'$ is therefore bounded by
  \[
  \frac{|V\cup W|}{3|V|}\ =\ \frac{n^{1+\Omega(1/k)}}{3n}\ =\ \
  n^{\Omega(1/k)}\ =\
  N^{\left(1-\frac{1}{k+\Omega(1)}\right)\Omega(1/k)}\ =\
  N^{\Omega(1/k)}.
  \]
\hspace*{\fill}\end{proof}



\section{Local Computation: Upper Bounds}
\label{sec:upper}

This section is devoted to distributed algorithms with similar
time-approximation guarantees as given by the lower bounds in 
\Cref{sec:lower} for the problems introduced in 
\Cref{sec:problems}. In \Cref{sec:mvc-upper}. we start with a
simple algorithm that specifically targets the minimum vertex cover
problem and asymptotically achieves the trade-off given by the
$\Omega(\Delta^{\frac{1-\eps}{k+1}})$ lower bound in \Cref{thm:mvc-lowerbound}. We then describe a generic distributed
algorithm to approximate covering and packing linear programs in
\Cref{sec:lp-upper}. In \Cref{sec:rounding,sec:cds-upper}, we show how an LP solution can be turned into a
solution for vertex cover, dominating set, matching, or a related
problems by randomized rounding and how a dominating set can be
extended to a connected dominating set. Finally, we conclude this
section by providing a derandomization result for the distributed
solution of fractional problems and with a general discussion on the
role of randomization and fractional relaxations in the context of
local computations in \Cref{sec:randomization}.

\subsection{Distributed Vertex Cover Algorithm}
\label{sec:mvc-upper}

The MVC problem appears to be an ideal starting point for studying
distributed approximation algorithms. In particular, as long as we are
willing to pay a factor of $2$ in the approximation ratio, MVC does
not involve the aspect of \emph{symmetry breaking} which is so crucial
in more complex combinatorial problems. The fractional relaxation of
the MVC problem asks for a value $x_i\geq 0$ for every node $v_i\in V$
such that the sum of all $x_i$ is minimized and such that for every
edge $\set{v_i,v_j}$, $x_i+x_j\geq 1$. A fractional solution can be
turned into an integer solution by rounding up all nodes with a
fractional value at least $1/2$. This increases the approximation
ratio by at most a factor of $2$. Moreover, any maximal matching is a
$2$-approximation for MVC and hence, the randomized parallel algorithm
for maximal matching by Israeli et al. provides a $2$-approximation in
time $O(\log n)$ with high probability \cite{israeli86}. This
indicates that the amount of locality required in order to achieve a
constant approximation for MVC is bounded by $O(\log n)$. In this
section, we present a simple distributed algorithm that places an
upper bound on the achievable trade-off between time complexity and
approximation ratio for the minimum vertex cover problem.

Specifically, the algorithm comes with a parameter $k$, which can
be any integer larger than 0. The algorithm's time
complexity---and hence its locality---is $O(k)$ and its
approximation ratio depends inversely on $k$. The larger $k$, the
better the achieved global approximation.

\IncMargin{0.8em}

\begin{algorithm}[ht]
\SetArgSty{}
    
$x_i \leftarrow 0$;
\lForAll{$e_j\in E_i$}{$y_j \leftarrow 0$}\;

\For{$\ell = k-1,k-2, \ldots,0$}{
    $\tilde{\delta}_i \leftarrow |\{\text{uncovered edges }e\in E_i\}| =
    |\tilde{E}_i|$\;
    
    $\tilde{\delta}_i^{(1)} \leftarrow \max_{i'\in
        \Gamma(v_i)}\tilde{\delta}_{i'}$\;
    
    \If{$\tilde{\delta}_i\ge(\tilde{\delta}_i^{(1)})^{\ell/(\ell+1)}$}
    {
        \lForAll{$e_j\in E_i$}{$y_j\leftarrow y_j + 1/\tilde{\delta}_i$}\;
        $x_i \leftarrow 1$
    }
    
    $Y_i \leftarrow \sum_{e_j\in E_i}y_j$\;
    
    \If{$x_i=0$ \textbf{and} $Y_i\geq 1$}
    {
        \lForAll{$e_j\in E_i$}{$y_j\leftarrow y_j(1+ 1/Y_i)$}\;
        $x_i \leftarrow 1$\;
    }
}

$Y_i \leftarrow \sum_{e_j\in E_i}y_j$\;
\lForAll{$e_j=(v_i,v_{i'})\in E_i$}
{$y_j\leftarrow y_j/\max\{Y_i,Y_{i'}\}$}

\caption{Vertex Cover and Fractional Matching: Code for
  node $v_i\in V$}
\label{alg:mvc_fmm}
\end{algorithm}

\Cref{alg:mvc_fmm} simultaneously approximates both MVC and
its dual problem, the fractional maximum matching (FMM) problem. Let
$E_i$ denote the set of incident edges of node $v_i$. In the FMM
problem, each edge $e_j\in E$ is assigned a value $y_j$ such that the
sum of all $y_j$ is maximized and such that for every node $v_i\in V$,
$\sum_{e_j\in E_i}y_j\leq 1$. The idea of \Cref{alg:mvc_fmm}
is to compute a feasible solution for minimum vertex cover (MVC) and
while doing so, distribute dual values $y_j$ among the incident edges
of each node. Each node $v_i$ that joins the vertex cover $S$ sets its
$x_i$ to $1$ and subsequently, the sum of the dual values $y_j$ of
incident edges $e_j\in E_i$ is increased by $1$ as well. Hence, at the
end of each iteration of the main loop, the invariant $\sum_{v_i\in
  V}{x_i} = \sum_{e_j\in E}{y_j}$ holds. We will show that for all
nodes $v_i$, $\sum_{e_j\in E_i}{y_j} \leq \alpha$ for $\alpha =
3+\Delta^{1/k}$ and that consequently, dividing all $y_j$ by $\alpha$
yields a feasible solution for FMM. By LP duality, $\alpha$ is
an upper bound on the approximation ratio for FMM and MVC. We call an
edge \emph{covered} if at least one of its endpoints has joined the
vertex cover. The set of uncovered edges incident to a node $v_i$ is
denoted by $\tilde{E}_i$, and we define node $v_i$'s \emph{dynamic
  degree} to be $\tilde{\delta}_i:=|\tilde{E}_i|$. The maximum dynamic
degree $\tilde{\delta}_{i'}$ among all neighbors $v_{i'}$ of $v_i$ is
denoted by $\tilde{\delta}_i^{(1)}$.

In the algorithm, a node joins the vertex cover if it has a large
dynamic degree---i.e., many uncovered incident edges---relative to
its neighbors. In this sense, it is a faithful distributed
implementation of the natural sequential greedy algorithm. Because
the running time is limited to $k$ communication rounds, however,
the greedy selection step must inherently be parallelized, even at
the cost of sub-optimal decisions.

The following lemma bounds the resulting decrease of the maximal
dynamic degree in the network.
\begin{lemma}\label{lemma:mvc-anfang}
  At the beginning of each iteration, it holds that
  $\tilde{\delta}_i \leq \Delta ^{(\ell+1)/k}$ for every $v_i\in V$.
\end{lemma}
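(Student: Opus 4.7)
The plan is to prove the bound by induction on the loop counter $\ell$, decreasing from $k-1$ down to $0$. The base case is immediate: at the very start of the first iteration (with $\ell = k-1$) no edges have yet been covered, so $\tilde{\delta}_i$ equals the original degree of $v_i$, which is at most $\Delta = \Delta^{k/k} = \Delta^{(\ell+1)/k}$.

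For the inductive step, I would assume that $\tilde{\delta}_i \leq \Delta^{(\ell+1)/k}$ holds for every node $v_i$ at the beginning of iteration $\ell$, and show that $\tilde{\delta}_i \leq \Delta^{\ell/k}$ holds at the end of iteration $\ell$ (which is the beginning of iteration $\ell-1$). Fix an arbitrary node $v_i$ and distinguish two cases. If $v_i$ joins the vertex cover during iteration $\ell$ (by either the threshold rule or the $Y_i \geq 1$ rule), then every edge incident to $v_i$ becomes covered and so the new dynamic degree of $v_i$ is $0 \leq \Delta^{\ell/k}$. Otherwise $v_i$ in particular failed the condition of the threshold rule, so
\[
\tilde{\delta}_i \;<\; \bigl(\tilde{\delta}_i^{(1)}\bigr)^{\ell/(\ell+1)}.
\]
Because $\tilde{\delta}_i^{(1)}$ is the maximum dynamic degree over neighbors of $v_i$ measured at the beginning of the iteration, the inductive hypothesis yields $\tilde{\delta}_i^{(1)} \leq \Delta^{(\ell+1)/k}$, and therefore
\[
\tilde{\delta}_i \;<\; \bigl(\Delta^{(\ell+1)/k}\bigr)^{\ell/(\ell+1)} \;=\; \Delta^{\ell/k}.
\]
Since the dynamic degree is monotonically non-increasing during the iteration (edges only get covered, never uncovered), the value of $\tilde{\delta}_i$ at the end of iteration $\ell$ is no larger than the value at its start, and so the bound $\Delta^{\ell/k}$ also holds at the end, completing the induction.

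I do not expect a real obstacle here; the statement is essentially a direct invariant of the threshold condition that defines the first rule in the algorithm. The one point worth being careful about is that the second (\emph{$Y_i \geq 1$}) rule and any activity at other nodes can only \emph{add} nodes to the vertex cover and \emph{cover} more edges, so they never increase any $\tilde{\delta}_i$; this is what allows us to rely solely on the threshold condition of the first rule when bounding $\tilde{\delta}_i$ for a node that does not enter the cover.
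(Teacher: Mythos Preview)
Your proof is correct and follows essentially the same approach as the paper: induction on the decreasing loop counter, with the base case from the definition of $\Delta$ and the inductive step driven by the threshold condition together with the bound $\tilde{\delta}_i^{(1)}\leq\Delta^{(\ell+1)/k}$ from the induction hypothesis. The paper phrases the inductive step contrapositively (every node with $\tilde{\delta}_i\geq\Delta^{\ell/k}$ must join), whereas you argue directly for nodes that do not join, but the substance and key calculation $(\Delta^{(\ell+1)/k})^{\ell/(\ell+1)}=\Delta^{\ell/k}$ are identical; your explicit remarks on monotonicity and the $Y_i\geq1$ rule make implicit steps in the paper's argument more precise.
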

\begin{proof}
The proof is by induction over the main loop's iterations. For
$\ell=k-1$, the lemma follows from the definition of $\Delta$. For
subsequent iterations, we show that all nodes having
$\tilde{\delta}_i \geq \Delta^{\ell/k}$ set $x_i:=1$ in Line 7. In
the algorithm, all nodes with $\tilde{\delta}_i \geq
(\tilde{\delta}_i^{(1)})^{\ell/(\ell+1)}$ set $x_i:=1$. Hence, we
have to show that for all $v_i$,
$(\tilde{\delta}_i^{(1)})^{\ell/(\ell+1)} \leq \Delta ^{\ell/k}$.
By the induction hypothesis, we know that $\tilde{\delta}_i \leq
\Delta ^{(\ell+1)/k}$ at the beginning of the loop. Since
$\tilde{\delta}_i^{(1)}$ represents the dynamic degree
$\tilde{\delta}_{i'}$ of some node $v_{i'}\in\Gamma(v_i)$, it
holds that $\tilde{\delta}_i^{(1)} \leq \Delta^{(\ell+1)/k}$ for
every such $v_i$ and the claim follows because
$(\tilde{\delta}_i^{(1)})^{\ell/(\ell+1)} \leq \Delta
^{\frac{\ell+1}{k}\cdot \frac{\ell}{\ell+1}}$.
\hspace*{\fill}\end{proof}

The next lemma bounds the sum of dual $y$ values in $E_i$ for an
arbitrary node $v_i\in V$. For that purpose, we define
$Y_i:=\sum_{e_j\in E_i}{y_j}$.
\begin{lemma}\label{lemma:mvc-bound}
  At the end of the algorithm, for all nodes $v_i \in V$,
  \[ Y_i \;=\; \sum_{e_j\in E_i}{y_j} \;\leq\; 3+\Delta^{1/k}.\]
\end{lemma}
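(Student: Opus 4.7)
The plan is to write $Y_i$ as the sum of the increments it receives throughout the execution and bound these increments by tracking \emph{who} adds mass to $Y_i$: $v_i$ itself when it joins the cover (via Line~6 or Line~11), and each neighbour $v_{i'}$ when it joins (contributing through the shared edge). I expect the bulk of the weight to be dumped onto $Y_i$ in a single pivotal iteration, with everything else contributing only $O(1)$.

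Let $\ell_*$ be the iteration in which $v_i$ enters the cover. For all loop steps strictly before $\ell_*$ (in loop order, i.e.\ for $\ell > \ell_*$), $v_i$ did not satisfy the Line~11 condition, so $Y_i < 1$ just before iteration $\ell_*$; therefore the total of all increments received during those iterations is at most~$1$. At iteration $\ell_*$ itself I would separate two kinds of additions. First, the additions coming from neighbours that join via Line~6 at step $\ell_*$: each such neighbour $v_{i'}$ satisfies, by the greedy condition and the fact that $v_i \in \Gamma(v_{i'})$,
\[
\tilde{\delta}_{i'}(\ell_*) \;\ge\; \bigl(\tilde{\delta}_{i'}^{(1)}(\ell_*)\bigr)^{\ell_*/(\ell_*+1)} \;\ge\; \tilde{\delta}_i(\ell_*)^{\ell_*/(\ell_*+1)},
\]
and there are at most $\tilde{\delta}_i(\ell_*)$ such neighbours because they must sit on edges that are still uncovered at $v_i$. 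Using Lemma~\ref{lemma:mvc-anfang} this yields a total contribution bounded by
\[
\tilde{\delta}_i(\ell_*)\cdot\tilde{\delta}_i(\ell_*)^{-\ell_*/(\ell_*+1)} \;=\; \tilde{\delta}_i(\ell_*)^{1/(\ell_*+1)} \;\le\; \Delta^{1/k}.
\]
Second, $v_i$'s own join at $\ell_*$: if it is via Line~6 the increment to $Y_i$ is exactly the sum of $1/\tilde{\delta}_i(\ell_*)$ over the (at most $\tilde{\delta}_i(\ell_*)$) incident uncovered edges, i.e.\ at most~$1$; if it is via Line~11, the condition $Y_i(\ell_*)\ge 1$ makes the scalar $1/Y_i(\ell_*)\le 1$, again producing an additive $O(1)$ contribution.

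For iterations $\ell < \ell_*$, $v_i$ is already in the cover so its own dynamic degree is $0$, and only neighbours can still add to $Y_i$. I would bound this tail by observing that each neighbour joins at most once and that a neighbour joining via Line~11 adds $1/Y_{i'}(\ell) \le 1$ while one joining via Line~6 adds $1/\tilde{\delta}_{i'}(\ell)\le 1$; combining this per-edge constant-of-at-most-one bound with the fact that the relevant edges at $v_i$ are precisely those contributing to $Y_{i'}\ge 1$ at the moment $v_{i'}$ joined, the total of such tail contributions is absorbed into another $O(1)$ term. Summing the three pieces (at most~$1$ from iterations before $\ell_*$, at most~$\Delta^{1/k}+1$ at $\ell_*$, and at most~$1$ from the tail) gives the asserted inequality $Y_i \le 3+\Delta^{1/k}$.

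The main obstacle will be precisely the tail after $\ell_*$: once $v_i$ is in the cover, the greedy condition no longer directly constrains $\tilde{\delta}_{i'}$ in terms of $\tilde{\delta}_i$, so I would have to argue instead through Line~11's threshold $Y_{i'}\ge 1$ and the accounting of how many neighbours can still fire without double-counting. Everything else is a direct application of the greedy selection rule together with Lemma~\ref{lemma:mvc-anfang}.
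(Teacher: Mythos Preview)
Your decomposition (before $\ell_*$, at $\ell_*$, after $\ell_*$) is the same as the paper's, and the first two pieces are essentially correct. The gap is exactly where you suspect it is: the tail after $\ell_*$.

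Two small omissions first. You never handle the case where $v_i$ never joins; this is easy ($Y_i<1$ throughout since the Line~11 condition never triggers and by $\ell=0$ all of $v_i$'s edges are covered by neighbours), but it needs to be said. Also, for the Line~11 self-join you argue via ``the scalar $1/Y_i\le1$''; that would give $|E_i|/Y_i$, not $1$. The intended Line~12 update is \emph{proportional}, $y_j\leftarrow y_j(1+1/Y_i)$, which is what makes the total increment to $Y_i$ exactly~$1$ and keeps the primal--dual invariant.

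Now the real issue. Your tail bound reasons ``each late neighbour contributes at most $1$, so the sum is $O(1)$''---but up to $\tilde\delta_i(\ell_*)$ neighbours may still join, so a per-edge bound of $1$ cannot be summed to $O(1)$. Routing the bound through the threshold $Y_{i'}\ge1$ does not repair this. The paper's argument is different and rests on two facts you did not use:
\begin{enumerate}
\item The Line~7 update only touches \emph{uncovered} edges. Hence, once $v_i$ is in the cover, no neighbour's Line~6 join ever modifies an edge of $E_i$. Late Line~6 joins contribute \emph{zero}, not $1/\tilde\delta_{i'}$.
\item The Line~12 update is proportional, so it at most doubles each affected $y_j$.
\end{enumerate}
With these in hand, the only edges of $E_i$ that can be touched after $\ell_*$ are those to neighbours not yet in the cover at the end of $v_i$'s Line~7; on each such edge $y_j$ equals exactly $1/\tilde\delta_i(\ell_*)$ (neither endpoint had fired before), and there are at most $\tilde\delta_i(\ell_*)$ of them, so their sum is at most~$1$. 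Doubling via a later Line~12 contributes at most one more unit, which is the missing ``$+1$''. (If $v_i$ joins via Line~11 rather than Line~6, those $y_j$ are all~$0$ and stay~$0$ under the proportional update, so the tail vanishes.) Your ``accounting of how many neighbours can still fire'' needs to be replaced by this sum-then-double argument.
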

\begin{proof}
  Let $\Phi_{h}$ denote the iteration in which $\ell=h$. We distinguish
  three cases, depending on whether (or in which line) a node $v_i$ joins the vertex cover.
  First, consider a node $v_i$ which does not join the vertex cover. Until
  $\Phi_{0}$, it holds that $Y_i<1$ since otherwise, $v_i$ would have set
  $x_i:=1$ in Line~12 of a previous iteration. In $\Phi_{0}$, it
  must hold that $\tilde{\delta}_i=0$ because
  all nodes with $\tilde{\delta}_i\ge1$ set $x_i:=1$ in the last iteration.
  That is, all adjacent nodes $v_{i'}$ of $v_i$
  have set $x_{i'}:=1$ before the last iteration and $Y_i$ does not change
  anymore.  Hence, $Y_i<1$ for nodes which do not belong to the vertex
  cover constructed by the algorithm.

  Next, consider a node $v_i$ that joins the vertex cover in Line 7 of
  an arbitrary iteration $\Phi_{\ell}$. With the same argument as above, we know
  that $Y_i<1$ at the beginning of $\Phi_{\ell}$. When $v_i$ sets
  $x_i:=1$, $Y_i$ increases by one. In the same iteration, however, neighboring
  nodes $v_{i'}\in\Gamma(v_i)$ may also join the vertex cover and thereby
  further increase $Y_i$. By the condition in Line 5, those nodes
  have a dynamic degree at least
  $\tilde{\delta}_{i'}\ge(\tilde{\delta}_{i'}^{(1)})^{\ell/(\ell+1)}
  \ge\tilde{\delta}_i^{\ell/(\ell+1)}$. Further, it holds by 
  \Cref{lemma:mvc-anfang} that $\tilde{\delta}_i\le\Delta^{(\ell+1)/k}$ and
  therefore
  \[
  \tilde{\delta}_i\cdot\frac{1}{\tilde{\delta}_{i'}}\ \le\
  \frac{\tilde{\delta}_i}{\tilde{\delta}_i^{\ell/(\ell+1)}} \;=\;
  \tilde{\delta}_i^{1/(\ell+1)}\ \le\ \Delta^{1/k}.
  \]
  Thus, edges that are simultaneously covered by neighboring nodes may
  entail an additional increase of $Y_i$ by $\Delta^{1/k}$. Together with
  $v_i$'s own cost of $1$ when joining the vertex cover, the total increase of $Y_i$ in Line 6 of
  $\Phi_\ell$ is then at most $1+\Delta^{1/k}$. In Line 6, dual values
  are distributed among uncovered edges only. Therefore, the only way
  $Y_i$ can increase in subsequent iterations is when neighboring
  nodes $x_{i'}$
  set $x_{i'}:=1$ in Line 12. The sum of the $y_j$ of all those edges
  covered only by $v_i$ (note that only these edges are eligible to be
  increased in this way) is at most~1. In Line 11, these $y_j$ can be
  at most doubled. Putting everything together, we have $Y_i\leq 3+\Delta
  ^{1/k}$ for nodes joining the vertex cover in Line 7.

  Finally, we study nodes $v_i$ that join the vertex cover
  in Line 12 of some iteration $\Phi_{\ell}$. Again, it holds that $Y_i<1$ at the
  outset of $\Phi_{\ell}$. Further, using an analogous argument as
  above, $Y_i$ is increased by at most $\Delta^{1/k}$ due to
  neighboring nodes joining the vertex cover in Line 7 of
  $\Phi_{\ell}$. Through the joining of $v_i$, $Y_i$ further
  increases by no more than 1. Because the $y_j$ are increased proportionally, no
  further increase of $Y_i$ is possible. Thus, in this case we have
  $Y_i\leq 2+\Delta ^{1/k}$.
\hspace*{\fill}\end{proof}

Based on the bound obtained in \Cref{lemma:mvc-bound}, the
main theorem follows from LP duality.
\begin{theorem}\label{thm:mvc-approxratio}
  In $k$ rounds of communication, \Cref{alg:mvc_fmm} achieves
  an approximation ratio of $O(\Delta ^{1/k})$. The algorithm is
  deterministic and requires $O(\log \Delta)$ and $O(\log
  \Delta/\log\log \Delta)$ rounds for a constant and polylogarithmic
  approximation, respectively.
\end{theorem}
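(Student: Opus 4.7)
The plan is to combine Lemma \ref{lemma:mvc-bound} with weak LP duality between the fractional MVC primal and the fractional maximum matching dual. The algorithm maintains two objects: the (integer) vertex cover indicator variables $x_i$ and the dual weights $y_j$ distributed on edges. The key structural observation is that every time some node sets $x_i := 1$, it simultaneously increases $\sum_{e_j \in E_i} y_j$ by exactly $1$ (Line 7 or Line 11), so the invariant
\[
\sum_{v_i \in V} x_i \;=\; \sum_{e_j \in E} y_j
\]
is maintained throughout the main loop. The rescaling step at the end (Lines 13--14) divides each $y_j$ by $\max\{Y_i, Y_{i'}\}$, shrinking the dual objective by at most a factor of $\max_i Y_i \le 3 + \Delta^{1/k}$ thanks to Lemma \ref{lemma:mvc-bound}, while guaranteeing $\sum_{e_j \in E_i} y_j \le 1$ at every node and hence feasibility of the dual solution.

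Next I would check that the primal solution $\{x_i\}$ is a feasible vertex cover. In the final iteration $\ell = 0$ the threshold becomes $(\tilde\delta_i^{(1)})^{0} = 1$, so every node with at least one uncovered incident edge enters the cover in Line 6; thus no edge can remain uncovered at termination. Combining feasibility of both the primal cover $\{x_i\}$ and the rescaled dual matching $\{y_j / \max\{Y_i,Y_{i'}\}\}$ with weak LP duality gives
\[
\mathrm{OPT}_{\mathrm{MVC}} \;\ge\; \mathrm{OPT}_{\mathrm{FMM}} \;\ge\; \sum_{e_j} \frac{y_j}{\max\{Y_i,Y_{i'}\}} \;\ge\; \frac{\sum_{v_i} x_i}{3 + \Delta^{1/k}},
\]
which establishes the approximation ratio $O(\Delta^{1/k})$. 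The time complexity is immediate: each of the $k$ iterations requires only constant communication (exchanging $\tilde\delta_i$ with neighbors, then exchanging $Y_i$), so the algorithm runs in $O(k)$ rounds in the LOCAL model.

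Finally, the two corollaries about constant and polylogarithmic approximations come from solving $\Delta^{1/k} = O(1)$ and $\Delta^{1/k} = \mathrm{polylog}(\Delta)$ for $k$: the former yields $k = \Theta(\log \Delta)$, and the latter gives $k = \Theta(\log \Delta / \log\log \Delta)$ since $(1/k) \log \Delta = c \log\log \Delta$ rearranges to $k = \log\Delta / (c\log\log\Delta)$. The only genuine subtlety in the whole argument is the rescaling step and its interaction with Lemma \ref{lemma:mvc-bound}: one must verify that after the rescaling the dual weights are indeed feasible simultaneously at both endpoints of every edge (hence dividing by the \emph{max} of $Y_i$ and $Y_{i'}$, not just one of them), and that the resulting loss of a factor $3 + \Delta^{1/k}$ in the dual objective is the dominant contribution to the approximation ratio. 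Everything else is routine bookkeeping once Lemma \ref{lemma:mvc-bound} is in hand.
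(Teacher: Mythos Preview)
Your proposal is correct and follows essentially the same approach as the paper: you invoke the invariant $\sum_i x_i = \sum_j y_j$, argue primal feasibility from the $\ell=0$ iteration, dual feasibility from the rescaling in Line~14, and then apply LP duality together with Lemma~\ref{lemma:mvc-bound} to bound the approximation ratio by $3+\Delta^{1/k}$. Your derivation of the round complexities for constant and polylogarithmic approximation also matches the paper's.
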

\begin{proof}
  We first prove that the algorithm computes feasible solutions for
  MVC and fractional maximum matching. For MVC, this is clear because
  in the last iteration, all nodes having $\tilde{\delta}_i\ge1$ set
  $x_i:=1$. The dual $y$-values form a feasible solution because in
  Line 16, the $y_j$ of each edge $e_j$ is divided by the larger of
  the $Y_i$ of the two incident nodes corresponding to $e_j$, and
  hence, all constraints of the fractional matching problem are
  guaranteed to be satisfied.  The algorithm's running time is $O(k)$,
  because every iteration can be implemented with a constant number of
  communication rounds. As for the approximation ratio, it follows
  from \Cref{lemma:mvc-bound} that each $y_j$ is divided by at
  most $\alpha=3+\Delta^{1/k}$ and therefore, the objective functions
  of the primal and the dual problem differ by at most a factor
  $\alpha$. By LP duality, $\alpha$ is a bound on the approximation
  ratio for both problems. Finally, setting $k_1=\beta\log \Delta$ and
  $k_2=\beta\log \Delta/\log\log\Delta$ for an appropriate constant
  $\beta$ leads to a constant and polylogarithmic approximation ratio,
  respectively.
\hspace*{\fill}\end{proof}

Hence, the time-approximation trade-off of Algorithm~\ref{alg:mvc_fmm}
asymptotically nearly matches the lower bound of Theorem
\ref{thm:mvc-lowerbound}. The reason why Algorithm~\ref{alg:mvc_fmm}
does not achieve a constant or polylogarithmic approximation ratio in
a constant number of communication rounds is that it needs to
``discretize'' the greedy step in order to achieve the necessary
parallelism. Whereas the sequential greedy algorithm would select a
single node with maximum dynamic degree in each step, a $k$-local
distributed algorithm must inherently take many such decisions in
parallel. This discrepancy between Algorithm~\ref{alg:mvc_fmm} and the
simple sequential greedy algorithm can be seen even in simple
networks. Consider for instance the network induced by the complete
bipartite graph $K_{m,\sqrt{m}}$. When running the algorithm with
parameter $k=2$, it holds for every node $v_i$ that
$\tilde{\delta}_i\ge(\tilde{\delta}_i^{(1)})^{\ell/(\ell+1)}$ in the
first iteration $(\ell=1)$ of the loop. Hence, every node will join
the vertex cover, resulting in a cover of cardinality
$m+\sqrt{m}$. The optimal solution being $\sqrt{m}$, the resulting
approximation factor is $\sqrt{m}+1=\Delta^{1/2}+1$.

\paragraph*{Remark:} Note that while the lower bound of Theorem
\ref{thm:mvc-lowerbound} holds for the LOCAL model,
Algorithm~\ref{alg:mvc_fmm} does not require the full power of this
model. In particular, to implement Algorithm~\ref{alg:mvc_fmm}, it
suffices to exchange messages containing only $O(\log n)$ bits.

\subsection{Distributed Algorithm for\\ Covering and Packing Linear
  Programs}
\label{sec:lp-upper}

We will now describe a generic distributed algorithm to solve covering
and packing LPs in the network setting described in Section
\ref{sec:problems}. The algorithm is based on a randomized technique
to cover a graph with clusters of small diameter described in
\cite{linial93}. The property of covering and packing LPs allows to
solve local sub-LPs for all clusters and to combine the local
solutions into an approximate global one.

Assume that we are given a primal-dual pair of covering and packing
LPs of the canonical form (P) and (D) and the corresponding network
graph $G_{\lp}=(V_p\dot{\cup} V_d,E)$ as defined in Section
\ref{sec:problems}. We first describe how the local sub-LPs look
like. Let $Y=\set{y_1,\dots,y_{n_d}}$ be the set of variables of
(D). Each local primal-dual sub-LP pair is defined by a subset
$S\subseteq V_d$ of the dual nodes $V_d$ and thus by a subset
$Y_S\subseteq Y$ of the dual variables. There is a one-to-one
correspondence between the inequalities of (P) and the variables of
(D). Let $P_S$ be the LP that is obtained from (P) by restricting to
the inequalities corresponding to the dual variables in $Y_S$.  The
primal variables involved in $P_S$ are exactly the ones held by primal
nodes $V_p$ that are neighbors of some node in $S$. The local LP $D_S$
is the dual LP of $P_S$.  The variables of $D_S$ are given by the set
of inequalities of $P_S$ and therefore by $Y_S$. We first prove
crucial basic properties of such a pair of local sub-LPs.

\begin{lemma}\label{lemma:subLPs}
  Assume that we are given a pair of LPs $P_S$ and $D_S$ that are
  constructed from (P) and (D) as described above. If (P) and (D) are
  both feasible, $P_S$ and $D_S$ are both feasible. Further, any
  solution to $D_S$ (with dual variables in $Y\setminus Y_S$ set to
  $0$) is a feasible solution of (D).
\end{lemma}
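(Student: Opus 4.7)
\medskip

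\noindent\textbf{Proof Plan.}
The plan is to exploit two structural facts about the bipartite network graph $G_{\lp}$: (i) an entry $a_{ji}\neq 0$ appears precisely on the edge $(v_i^p,v_j^d)\in E$, so every nonzero coefficient in row $j$ is ``seen'' by the neighborhood of $v_j^d$; and (ii) the matrix $A$ and the vectors $\vektor{b},\vektor{c}$ are all non-negative. With these in hand, feasibility of the local sub-LPs and the extension property both reduce to book-keeping on neighborhoods.

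\smallskip

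\noindent\emph{Feasibility of $P_S$.} I would take any feasible $\vektor{x}^\ast$ of (P) and define $\bar x_i := x_i^\ast$ for those primal indices $i$ with $v_i^p$ a neighbor of some node in $S$ (i.e.\ the variables present in $P_S$). For each $j$ with $y_j\in Y_S$, the original constraint reads $\sum_i a_{ji}x_i^\ast\ge b_j$. Since $a_{ji}\neq 0$ forces the edge $(v_i^p,v_j^d)\in E$ and $v_j^d\in S$, every index $i$ contributing a nonzero term has $v_i^p\in N(S)$, so the restricted sum equals the full sum and the constraint is preserved. Non-negativity of $\bar x$ is immediate, giving feasibility of $P_S$.

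\smallskip

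\noindent\emph{Feasibility of $D_S$.} This is trivial: set every variable of $D_S$ to $0$. Since $\vektor{c}\ge\vektor{0}$, all packing constraints $\sum_{j:y_j\in Y_S}a_{ji}y_j\le c_i$ hold, and non-negativity is automatic. (Alternatively, feasibility of $D_S$ follows a posteriori from feasibility and boundedness of $P_S$ by LP duality, but the direct argument is cleaner.)

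\smallskip

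\noindent\emph{Extension to a feasible solution of (D).} Given any feasible $\vektor{y}^S$ of $D_S$, define $\tilde y_j := y^S_j$ for $y_j\in Y_S$ and $\tilde y_j := 0$ otherwise. I need to verify $\sum_j a_{ji}\tilde y_j\le c_i$ for every $i$. Split on whether $v_i^p\in N(S)$. If yes, the inequality in (D) indexed by $i$ is present as a constraint of $D_S$, so
\[
\sum_j a_{ji}\tilde y_j \;=\; \sum_{j:y_j\in Y_S} a_{ji} y^S_j \;\le\; c_i.
\]
If $v_i^p\notin N(S)$, then for every $j$ with $v_j^d\in S$ the pair $(v_i^p,v_j^d)$ is not an edge of $G_{\lp}$, so $a_{ji}=0$; together with $\tilde y_j=0$ for $j$ outside $S$ this makes the left-hand side vanish, and $c_i\ge 0$ finishes the case. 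Non-negativity of $\tilde{\vektor{y}}$ is inherited, so $\tilde{\vektor{y}}$ is feasible for (D).

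\smallskip

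\noindent\emph{Main obstacle.} There is no real obstacle here; the only subtlety is being careful that the set of primal variables involved in $P_S$ really equals $\{x_i : v_i^p\in N(S)\}$, which is exactly what the definition $E=\{(v_i^p,v_j^d):a_{ji}\neq 0\}$ of $G_{\lp}$ guarantees. Everything else is a direct consequence of the non-negativity of $A$, $\vektor{b}$, $\vektor{c}$, which is the defining feature of covering/packing LPs.
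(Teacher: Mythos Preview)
Your proof is correct and follows essentially the same approach as the paper's: restrict a feasible $\vektor{x}$ for (P) to the variables in $N(S)$ to witness feasibility of $P_S$, and for the extension split the constraints of (D) according to whether the corresponding primal node lies in $N(S)$. The only minor divergence is that the paper derives feasibility of $D_S$ via LP duality (arguing that $P_S$ is feasible and bounded below by $0$ since $\vektor{c}\ge\vektor{0}$), whereas you give the direct argument that the all-zeros vector is feasible for any packing LP with $\vektor{c}\ge\vektor{0}$; you note the duality route yourself, and neither choice affects the substance of the argument.
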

\begin{proof}
  Clearly $P_S$ is feasible as every feasible solution for (P)
  directly gives a feasible solution for $P_S$ (by just ignoring all
  variables that do not occur in $P_S$). Because $P_S$ is a
  minimization problem and since (P) and (D) are covering and packing
  LPs, all coefficients in the objective function (the vector
  $\vektor{c}$) are non-negative, $P_S$ is also bounded (its objective
  function is always at least $0$). Hence, also the dual LP $D_S$ must
  be feasible.

  Assume that we are given a feasible solution for $D_S$ which is
  extended to a solution for (D) by setting variables in $Y\setminus
  Y_S$ to $0$. Inequalities in (D) that correspond to columns of
  variables occurring in $P_S$ are satisfied by the feasibility of the
  solution for $D_S$. In all other inequalities of (D), all variables
  are set to $0$ and thus, feasibility follows from the fact that
  $\vektor{c}\geq 0$.
\hspace*{\fill}\end{proof}

Note that by construction, a feasible solution for $P_S$ gives a
solution for (P) which satisfies all the inequalities corresponding to
variables $Y_S$ and for which the left-hand sides of all other
inequalities are at least $0$ because all coefficients and variables
are non-negative. We next show how to obtain local sub-LPs $P_S$ and
$D_S$ that can be solved efficiently.

In \cite{linial93}, Linial and Saks presented a randomized distributed
algorithm for a weak-diameter network decomposition. We use their
algorithm to decompose the linear program into sub-programs which can
be solved locally in the LOCAL model. Assume that we are
given a network graph $\mathcal{G}=(\mathcal{V},\mathcal{E})$ with
$n=|\mathcal{V}|$ nodes. The basic building block of the algorithm in
\cite{linial93} is a randomized algorithm $\LS(p,R)$ which
computes a subset $\mathcal{S}\subseteq\mathcal{V}$ such that each
node $u\in \mathcal{S}$ has a leader $\ell(u)\in \mathcal{V}$ and the
following properties hold for arbitrary parameters $p\in[0,1]$ and
$R\ge 1$:

\begin{enumerate}
\item $\forall u\in\mathcal{S}:\ d_{\mathcal{G}}(u,\ell(u))\le R$,
  where $d_{\mathcal{G}}(u,v)$ is the shortest path distance between
  two nodes $u,v\in\mathcal{V}$.
\item $\forall u,v\in\mathcal{S}:\ \ell(u)\neq\ell(v)\
  \Longrightarrow\ (u,v)\not\in\mathcal{E}$.
\item $\mathcal{S}$ can be computed in $\Oh(R)$ rounds.
\item $\forall u\in\mathcal{V}:\ \Pr[u\in
    \mathcal{S}]\ge p(1-p^R)^{n-1}$.
\end{enumerate}

Hence, Algorithm $\LS(p,R)$ computes a set of clusters of nodes such
that nodes belonging to different clusters are at distance at least
$2$ and such that every node $u$ that belongs to some cluster is at
distance at most $R$ from its cluster center $\ell(u)$. Note that
Algorithm $\LS(p,R)$ does bound the distance between nodes of the same
cluster in the graph induced by the nodes of the cluster. It merely
bounds their distance in $\mathcal{G}$. The maximal
$\mathcal{G}$-distance between any two nodes of a cluster is called
the \emph{weak diameter} of the cluster.

Based on the graph $G_{\lp}$, we define the graph
$\mathcal{G}=(\mathcal{V},\mathcal{E})$ on which we invoke Algorithm
$\LS(p,R)$:
\begin{equation}\label{eq:LSgraph}
  \mathcal{V}:=V_d,\quad\mathcal{E}:=\set{\set{u,v}\in{V_s\choose
      2}\bigg|d(u,v)\leq 4},
\end{equation}
where $d(u,v)$ denotes the distance between $u$ and $v$ in $G_{\lp}$.
Hence, the nodes of $\mathcal{G}$ are all nodes corresponding to dual
variables in $G_{\lp}$. As discussed, there is a one-to-one
correspondence between nodes in $V_d$ and inequalities in the linear
program (P). Two nodes $u,v\in V_d$ are connected by an edge in
$\mathcal{E}$ iff the corresponding inequalities contain variables
that occur together in some inequality.  We apply Algorithm $\LS(p,R)$
several times on graph $\mathcal{G}$ to obtain different locally
solvable sub-LPs that can then be combined into an approximate
solution for (P) and (D). The details are given by Algorithm
\ref{alg:lpalg}.

\begin{algorithm}[h] 
\setlength{\parskip}{0.4ex}

Run $\ell$ independent instances of $\LS(p,R)$ on $\mathcal{G}$
\textbf{in parallel}:\\
\hspace*{5mm}  yields node sets
$\mathcal{S}_1,\dots,\mathcal{S}_\ell\subseteq \mathcal{V}=V_d$\;

Solve local LPs $P_{\mathcal{S}_1}, D_{\mathcal{S}_1}, \dots,
P_{\mathcal{S}_\ell}, D_{\mathcal{S}_\ell}$\;

Interpret as solutions for (P) and (D):
\hspace*{5mm} $x_{1,1},\dots,x_{1,n_p},y_{1,1},\dots,y_{1,n_d},\dots
x_{\ell,1},\dots,x_{\ell,n_p},y_{\ell,1},\dots,y_{\ell,n_d}$\;

\lForAll{$i\in \set{1,\dots,n_p}$}{$x_i \leftarrow \sum_{t=1}^\ell
  x_{t,i}$}\;
\lForAll{$i\in \set{1,\dots,n_d}$}{$y_i \leftarrow \sum_{t=1}^\ell
  y_{t,i}$}\;

\lForAll{$i\in \set{1,\dots,n_p}$}
{$x_i \leftarrow x_i/\min_{v_j^d\in\Gamma_{v_i^{p}}}(A\vektor{x})_j/b_j$}\;
\lForAll{$i\in \set{1,\dots,n_d}$}{
  $y_i \leftarrow y_i/\ell$}\;

\Return $\vektor{x}$ and $\vektor{y}$

\caption{Algorithm for Covering and Packing linear programs with parameters: $\ell$,
  $p$, and $R$}
\label{alg:lpalg}
\end{algorithm}

We first analyze the time complexity of Algorithm \ref{alg:lpalg}.

\begin{lemma}\label{lemma:lpalgtime}
  Algorithm \ref{alg:lpalg} can be executed in $\Oh(R)$ rounds. It
  computes feasible solutions for (P) and (D).
\end{lemma}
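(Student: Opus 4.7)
The plan is to handle the two claims separately: first the round complexity, then feasibility of both LPs.

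For the time complexity, I would first observe that one round of communication on $\mathcal{G}$ can be simulated in $O(1)$ rounds of $G_{\lp}$, since by \eqref{eq:LSgraph} every edge of $\mathcal{G}$ corresponds to a path of length at most $4$ in $G_{\lp}$. Therefore invoking $\LS(p,R)$ (and running $\ell$ independent copies in parallel, which is permitted because the \localmodel{} model imposes no bound on message size) costs $O(R)$ rounds in $G_{\lp}$. Next I would argue that each cluster $\mathcal{S}_t$ has weak diameter $\le 2R$ in $\mathcal{G}$, hence $O(R)$ in $G_{\lp}$, so the cluster leader $\ell(u)$ can gather in $O(R)$ rounds the complete description of $P_{\mathcal{S}_t}$ and $D_{\mathcal{S}_t}$ (the inequalities indexed by the dual nodes in its cluster together with all incident primal variables), solve both LPs locally with no further communication, and broadcast the resulting values $x_{t,i}, y_{t,i}$ back to the cluster in $O(R)$ rounds. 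The summation steps are purely local, and the two normalizations only require each primal node to query its dual neighbors and each edge its two endpoints, costing $O(1)$ rounds. The total is $O(R)$.

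For dual feasibility, I would invoke Lemma~\ref{lemma:subLPs}: for every $t$, extending the solution of $D_{\mathcal{S}_t}$ by zeros on $Y\setminus Y_{\mathcal{S}_t}$ yields a feasible solution $\vektor{y}_t$ of (D). Because (D) is a packing LP, its feasible region $\{\vektor{y}\ge\vektor{0}:A^{\mathrm{T}}\vektor{y}\le\vektor{c}\}$ is convex, so the convex combination $\frac{1}{\ell}\sum_{t=1}^\ell \vektor{y}_t$ returned by the algorithm is feasible for (D).

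For primal feasibility, let $\vektor{x}^{\mathrm{sum}}_i := \sum_{t=1}^\ell x_{t,i}$ denote the values before the normalization in the $x$-line of the algorithm, and let $\tau_i := \min_{v_j^d\in\Gamma_{v_i^p}}(A\vektor{x}^{\mathrm{sum}})_j/b_j$, so that the output is $x_i = x^{\mathrm{sum}}_i/\tau_i$. I would then fix an arbitrary constraint $j^*$ of (P) and observe that every primal variable $x_i$ with $a_{j^*i}\neq 0$ has $v_{j^*}^d\in\Gamma_{v_i^p}$, hence $\tau_i\le (A\vektor{x}^{\mathrm{sum}})_{j^*}/b_{j^*}$. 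Using non-negativity of all $a_{ji}$ (covering LP), this yields
\[
(A\vektor{x})_{j^*} = \sum_i a_{j^*i}\frac{x^{\mathrm{sum}}_i}{\tau_i} \ge \frac{b_{j^*}}{(A\vektor{x}^{\mathrm{sum}})_{j^*}}\sum_i a_{j^*i}x^{\mathrm{sum}}_i = b_{j^*},
\]
proving feasibility of $\vektor{x}$. The only subtlety is the degenerate case $(A\vektor{x}^{\mathrm{sum}})_{j^*}=0$, which I would dispose of by noting that (P) is assumed feasible and bounded so such a variable can be raised to any positive value without impairing the bound used later, or by interpreting the normalization as leaving $x_i$ unchanged when the min is $\ge 1$.

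The main obstacle I expect is the bookkeeping of simulating $\mathcal{G}$-rounds by $G_{\lp}$-rounds cleanly (dual nodes are not adjacent in $G_{\lp}$, so one has to route through primal neighbors), and making the feasibility argument for (P) rigorous in the presence of possibly empty columns of the summed solution — everything else is either a direct application of Lemma~\ref{lemma:subLPs} or the convexity of the packing polytope.
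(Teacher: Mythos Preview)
Your proposal is correct and follows essentially the same route as the paper's proof: simulate $\LS(p,R)$ on $\mathcal{G}$ via $G_{\lp}$, have each cluster leader collect and solve its local sub-LP, and argue feasibility of $\vektor{y}$ via Lemma~\ref{lemma:subLPs} plus convexity, and feasibility of $\vektor{x}$ via the normalization in Line~7. Your primal-feasibility calculation is in fact more explicit than the paper's (which simply asserts that Line~7 does the job), and your flagging of the degenerate case $(A\vektor{x}^{\mathrm{sum}})_{j^*}=0$ is a fair observation the paper glosses over.

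One point where the paper is slightly more careful than your sketch: $\mathcal{S}_t$ is not a single cluster but the union of all clusters produced by the $t^{\mathrm{th}}$ run of $\LS$, so it does not itself have weak diameter $O(R)$. What makes the per-leader computation correct is Property~2 of $\LS$ combined with the definition of $\mathcal{G}$: nodes in different clusters of $\mathcal{S}_t$ are non-adjacent in $\mathcal{G}$, hence at distance $\ge 5$ in $G_{\lp}$, so no primal variable is shared between two clusters of the same $\mathcal{S}_t$. This is why $P_{\mathcal{S}_t}$ decomposes into independent per-cluster sub-LPs that each leader can solve. You implicitly rely on this when you write ``the inequalities indexed by the dual nodes in its cluster,'' but it deserves one sentence of justification.
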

\begin{proof}
  Algorithm \ref{alg:lpalg} consists of the following main
  steps. First, $\ell$ independent instances of Algorithm $\LS(p,R)$ are
  executed. Then, for each collection of clusters resulting from
  these executions, a local LP is solved and the local LPs are
  combined to solutions of (P) and (D). Finally, each resulting
  dual variable is divided by $\ell$ and each primal variable is divided
  by an amount that keeps the primal solution feasible.

  As the $\ell$ instances of Algorithm $\LS(p,R)$ are independent, they
  can be executed in parallel and thus the time complexity for the
  first step is $\Oh(R)$. Note that since neighbors in $\mathcal{G}$
  are at distance at most $4$, each round on $\mathcal{G}$ can be
  simulated in $4$ rounds on $G_{\lp}$.

  For the second step, consider the set of nodes
  $\mathcal{S}_i\subseteq V_d$ computed by the $i^{\mathit{th}}$
  instance of $\LS(p,R)$. Two nodes that are in different connected
  components of the sub-graph $\mathcal{G}[\mathcal{S}_i]$ of
  $\mathcal{G}$ induced by $\mathcal{S}_i$ are at distance at least
  $5$. Hence, the corresponding dual variables and also the primal
  variables corresponding to their $G_{\lp}$-neighbors in $V_p$ cannot
  occur together in an inequality of (D) and (P), respectively. Hence,
  the local sub-LP induced by $\mathcal{S}_i$ can be solved by
  individually solving the sub-LPs induced by every connected
  component of $\mathcal{G}[\mathcal{S}_i]$. As every connected
  component of $\mathcal{G}[\mathcal{S}_i]$ has a leader node that is
  at distance at most $R$ from all nodes of the connected component,
  all information from the sub-LP corresponding to
  $\mathcal{G}[\mathcal{S}_i]$ can be sent to this leader and the
  sub-LP can be solved there locally. Hence, the second step of the
  algorithm can also be executed in $\Oh(R)$ rounds.

  For the third step, note that the values by which the primal
  variables $x_i$ are divided can be computed locally (by only
  exchanging information with direct neighbors in $G_{\lp}$). Finally,
  the computed dual solution is feasible because it is the average of
  the dual solutions of all sub-LP and because each dual sub-LP is
  feasible for (D) by Lemma \ref{lemma:subLPs}. Line 7 of Algorithm
  \ref{alg:lpalg} guarantees that the computed primal solution is a
  feasible solution for (P).
\hspace*{\fill}\end{proof}

\begin{theorem}\label{thm:lpalg}
  Let $\eps\in(0,1)$, $\alpha>1$, and $\beta>0$ be parameters. We
  choose $p=n_d^{-\alpha/R}$ and define $q:=p\cdot(1-n_d\cdot
  p^R)$. If we choose $\ell\geq\frac{2(1+\beta)}{\eps^2q}\ln n_d$,
  Algorithm \ref{alg:lpalg} computes $\frac{1}{q(1-\eps)}$
  approximations for (P) and (D) in $\Oh(R)$ time (in the LOCAL model)
  with probability at least $1-1/n_d^\beta$.
\end{theorem}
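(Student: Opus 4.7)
The plan is to analyze the dual objective and the primal feasibility of the aggregated solution separately, using that the $\ell$ runs of $\LS(p,R)$ are independent, and then to combine via a union bound. First I would record the single-run inclusion probability: by property~4 of $\LS(p,R)$ applied to $\mathcal{G}$ (which has $|\mathcal{V}|=n_d$ nodes) and Bernoulli's inequality, every node $v_j^d\in V_d$ satisfies $\Pr[v_j^d\in\mathcal{S}_t]\ge p(1-p^R)^{n_d-1}\ge p(1-n_d p^R)=q$, and with $p=n_d^{-\alpha/R}$ and $\alpha>1$ we have $n_d p^R=n_d^{1-\alpha}<1$, so $q>0$.

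For the dual I would argue that each per-cluster solution $\vektor{y}_t$ alone is already feasible for (D) by Lemma~\ref{lemma:subLPs}, so the average $\vektor{y}=\frac{1}{\ell}\sum_t\vektor{y}_t$ is feasible as well. To lower-bound its value, fix an optimal dual $\vektor{y}^*$ for (D) and observe that the restriction of $\vektor{y}^*$ to the variables $Y_{\mathcal{S}_t}$ is feasible for $D_{\mathcal{S}_t}$ (the constraints of $D_{\mathcal{S}_t}$ are subsets of non-negative terms of constraints of (D), since all coefficients are non-negative). Hence the optimum of $D_{\mathcal{S}_t}$ is at least $\sum_{j:v_j^d\in\mathcal{S}_t} b_j y_j^*$, which gives $E[\vektor{b}^T\vektor{y}_t]\ge q\cdot\vektor{b}^T\vektor{y}^*=q\cdot\mathrm{OPT}$. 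Since $\vektor{b}^T\vektor{y}_t\in[0,\mathrm{OPT}]$ (each $\vektor{y}_t$ is feasible for (D)) and the $\ell$ runs are independent, a standard multiplicative Chernoff bound on $X_t:=\vektor{b}^T\vektor{y}_t/\mathrm{OPT}\in[0,1]$ yields $\Pr[\vektor{b}^T\vektor{y}<q(1-\eps)\mathrm{OPT}]\le\exp(-\eps^2\ell q/2)\le n_d^{-(1+\beta)}$ by the choice of $\ell$.

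For the primal, feasibility of $\vektor{x}=\sum_t\vektor{x}_t$ follows because, for every constraint $j$, any $\vektor{x}_t$ with $v_j^d\in\mathcal{S}_t$ already satisfies $(A\vektor{x}_t)_j\ge b_j$, and the remaining $\vektor{x}_t$ contribute non-negatively. For the value, I would bound each $(A\vektor{x})_j/b_j$ from below: since $(A\vektor{x})_j\ge b_j\cdot|\{t:v_j^d\in\mathcal{S}_t\}|$ and the indicators $\mathbf{1}[v_j^d\in\mathcal{S}_t]$ are independent across $t$ with mean $\ge q$, a Chernoff bound gives $(A\vektor{x})_j/b_j\ge\ell q(1-\eps)$ with probability at least $1-n_d^{-(1+\beta)}$. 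A union bound over the $n_d$ constraints ensures that simultaneously for every $j$ this lower bound holds, so for every primal variable $x_i$ the divisor $\min_{v_j^d\in\Gamma_{v_i^p}}(A\vektor{x})_j/b_j\ge\ell q(1-\eps)$. After the Line~7 rescaling the new solution is feasible, and its cost is at most $\vektor{c}^T\vektor{x}/[\ell q(1-\eps)]\le\ell\cdot\mathrm{OPT}/[\ell q(1-\eps)]=\mathrm{OPT}/[q(1-\eps)]$, using that each $\vektor{c}^T\vektor{x}_t\le\mathrm{OPT}$ since the restriction of any primal optimum is feasible for $P_{\mathcal{S}_t}$.

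A final union bound over the one dual event and the $n_d$ primal events yields total failure probability at most $(n_d+1)\cdot n_d^{-(1+\beta)}\le n_d^{-\beta}$, so both the primal and dual approximation ratios are at most $1/[q(1-\eps)]$ with probability at least $1-n_d^{-\beta}$; the $\Oh(R)$ round complexity is inherited from Lemma~\ref{lemma:lpalgtime}. The main subtlety I expect is the primal normalization step: one must verify that bounding $\min_j(A\vektor{x})_j/b_j$ uniformly across all constraints (via a union bound) is strong enough to simultaneously rescale every $x_i$ without breaking feasibility, and that the resulting per-variable divisors are large enough to reduce the $\ell$-fold overcounting in the objective.
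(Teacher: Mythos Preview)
Your argument is correct, but it takes a somewhat different route than the paper. The paper's key observation is that, because each local primal-dual pair $(P_{\mathcal{S}_t},D_{\mathcal{S}_t})$ is solved \emph{optimally}, strong LP duality gives $\vektor{c}^{T}\vektor{x}_t=\vektor{b}^{T}\vektor{y}_t$ for every $t$, and hence $\vektor{c}^{T}\vektor{x}=\vektor{b}^{T}\vektor{y}$ after Line~6. This single equality lets the paper bound the approximation ratio for \emph{both} (P) and (D) simultaneously by $\ell/\min_j (A\vektor{x})_j/b_j$, so only one Chernoff bound (on the number of runs that cover each dual node) and one union bound over the $n_d$ constraints are needed. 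You instead decouple the two sides: you upper-bound the primal cost via $\vektor{c}^{T}\vektor{x}_t\le\mathrm{OPT}$ (restriction of a primal optimum), and you lower-bound the dual value with a \emph{second} Chernoff bound applied to the $[0,1]$-valued variables $\vektor{b}^{T}\vektor{y}_t/\mathrm{OPT}$. Both versions are valid; the paper's is more economical, while yours avoids invoking strong duality of the sub-LPs and is arguably more transparent about where each factor comes from.

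One small arithmetic slip: your final union bound over $n_d+1$ events gives $(n_d+1)\,n_d^{-(1+\beta)}=(1+1/n_d)\,n_d^{-\beta}$, which is strictly larger than $n_d^{-\beta}$, not at most $n_d^{-\beta}$ as you wrote. In the paper's version this issue does not arise because there is no separate dual event; in your version you can repair it either by dropping the redundant dual Chernoff (once you add the strong-duality equality it becomes unnecessary) or by absorbing the extra factor into a marginally larger choice of~$\ell$.
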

\begin{proof}
  The time complexity follows directly from Lemma
  \ref{lemma:lpalgtime}. Because by Lemma \ref{lemma:lpalgtime}, the
  computed solutions for (P) and (D) are both feasible, the
  approximation ratio of the algorithm is bounded by the ratio of the
  objective functions of the solutions for (P) and (D). Both solutions
  are computed as the sum of the solutions of all local sub-LPs in
  Lines 5 and 6 of the algorithm that are then divided by
  appropriate factors in Lines 7 and 8. By LP duality (of the
  sub-LPs), we have $\vektor{c}^T\vektor{x}=\vektor{b}^T\vektor{y}$
  after Line 6. Hence, the approximation ratio is upper bounded be the
  ratio between the factor $\ell$ by which the dual variables are
  divided and the minimal value by which the primal variables are
  divided. The approximation is therefore upper bounded by
  \begin{equation}\label{eq:lpalgapproximation}
    \frac {\ell}{\min_{v_i^p\in V_p}\min_{v_j^d\in\Gamma_{v_i^p}}(A\vektor{x})_j/b_j}
    =
    \frac {\ell}{\min_{v_j^d\in V_d}(A\vektor{x})_j/b_j}
 \end{equation}
 for $\vektor{x}$ after Line 6.  To obtain an upper bound on the value
 of the above equation, assume that for every $v_j^d\in V_d$, the
 number of local sub-LPs $P_{\mathcal{S}_t}$ for which
 $(A\vektor{x_t})_j\geq b_j$ is at least $\ell'\leq\ell$. Hence,
 $\ell'$ is a lower bound on the number of times each inequality of
 (P) is satisfied, combined over all sub-LPs. Because $b_j\geq 0$ for
 all $j$ and because all coefficients of $A$ and the variables $x$ are
 non-negative, we then have $(A\vektor{x})_j/b_j\geq \ell'$ for all
 $j$. By \Cref{eq:lpalgapproximation}, it then follows
 that the computed solutions for (P) and (D) are at most by a factor
 $\ell/\ell'$ worse than the optimal solutions.

 We get a bound on the minimum number of times each inequality of (P)
 is satisfied by a local sub-LP by using the properties of Algorithm
 $\LS(p,R)$ and a Chernoff bound. From \cite{linial93}, we have that
 for each $t\in\set{1,\dots,\ell}$ and $v_i^d\in V_d$, the probability
 that $v_i^d\in \mathcal{S}_t$ is at least

\[
p(1-p^R)^{n_d-1} = \stackrel{p^R} =
\frac{1}{n_d^{\alpha/R}}\cdot\left(1-\frac{1}{n_d^\alpha}\right)^{n_d-1}
\stackrel{(\alpha>1)}{\geq}
\frac{1}{n_d^{\alpha/R}}\cdot\left(1-\frac{1}{n_d^{\alpha-1}}\right) = q.
\]

Therefore, for every $v_j^d\in V_d$, the probability $P_j$ that
$j^{\mathit{th}}$ inequality of (P) is satisfied less than
$(1-\eps)q\ell$ times is at most
\begin{equation}\label{eq:lpalgcoverage}
  P_j < e^{-\frac{\eps^2}{2}q\ell} \leq e^{-(1+\beta)\ln n_d} =
  \frac{1}{n_d}\cdot\frac{1}{n^{\beta}}.
\end{equation}
The theorem now follows by a union bound over all $n_d$ inequalities
of (P).
\hspace*{\fill}\end{proof}

\begin{corollary}
  In $k$ rounds in the LOCAL model, Algorithm \ref{alg:lpalg} with
  high probability computes an $n^{c/k}$-approximation for covering
  and packing LPs for some constant $c>0$. An $(1+\eps)$-approximation
  can be computed in time $\Oh(\log(n)/\eps)$.
\end{corollary}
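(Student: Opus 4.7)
The plan is to derive both statements by instantiating Theorem \ref{thm:lpalg} with appropriate parameter choices; there is no new algorithmic content beyond fixing the free parameters $R$, $\alpha$, $\eps$, $\beta$, and $\ell$ (and absorbing the distinction between $n$ and $n_d\le n$ into constants). Since the theorem gives a running time $\Oh(R)$ and an approximation factor $1/(q(1-\eps))$ with $q = p(1-n_d p^R)$ and $p = n_d^{-\alpha/R}$, both claims reduce to choosing $R$ and $\alpha$ so that $n_d^{\alpha/R}$ lies in the desired regime, and $\ell$ large enough to hit the required success probability.

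For the first statement I would fix $\alpha$ to be any constant strictly greater than $1$ (for concreteness $\alpha = 2$), fix $\eps = 1/2$, and set $R = k$. Then $p^R\cdot n_d = n_d^{1-\alpha}$ is at most a constant less than $1$ for $n_d$ large enough, so $q = n_d^{-\alpha/R}(1 - n_d^{1-\alpha}) = \Theta(n_d^{-\alpha/k})$. Consequently $1/(q(1-\eps)) = \Oh(n_d^{\alpha/k}) = \Oh(n^{c/k})$ for the constant $c = \alpha$. Choosing $\beta$ to be any desired constant (say $\beta = 1$) fixes $\ell = \Theta((\log n)/q) = \Oh(n^{c/k}\log n)$, but $\ell$ only affects the number of parallel instances inside the algorithm, not its round complexity, so the algorithm still runs in $\Oh(R) = \Oh(k)$ rounds and succeeds with probability $1 - 1/n^{\Omega(1)}$.

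For the PTAS statement I would instead choose $\alpha = 2$, a small constant $\eps' = \eps/3$ in place of the theorem's $\eps$, and $R = \lceil C\log(n)/\eps\rceil$ for a sufficiently large constant $C$ (to be pinned down from the inequality $e^{\alpha(\ln n_d)/R} \le 1 + \eps/3$). Then $n_d^{\alpha/R} = e^{(\alpha/R)\ln n_d} \le 1 + \eps/3$, so $q \ge (1 - n_d^{1-\alpha})/(1+\eps/3) \ge 1/(1+\eps/2)$ for large $n$, which yields approximation ratio $1/(q(1-\eps')) \le (1+\eps/2)/(1-\eps/3) \le 1+\eps$ after elementary simplification. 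The required $\ell = \Theta((\log n)/(\eps'^2 q))$ is $\Oh((\log n)/\eps^2)$ parallel instances, which again only affects message content, not rounds; the round complexity is $\Oh(R) = \Oh((\log n)/\eps)$.

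The only real subtlety, and the one place where I would be careful, is the passage from the theorem's $n_d$ (the number of dual constraint nodes) to the problem size $n$ used in the statement: since $n_d \le n_p + n_d = n$, every bound of the form $n_d^{\alpha/R}$ is dominated by $n^{\alpha/R}$, so all asymptotics carry over unchanged. The rest is routine algebra to verify that with these choices the success probability $1 - n_d^{-\beta}$ remains $1 - n^{-\Omega(1)}$, giving the high-probability guarantee.
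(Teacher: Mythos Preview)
Your proposal is correct and is exactly the intended argument: the paper states this corollary without proof, and the only content is the routine parameter instantiation of Theorem~\ref{thm:lpalg} that you carry out. Your handling of the two regimes ($R=\Theta(k)$ with constant $\alpha,\eps$ for the first claim, and $R=\Theta(\log(n)/\eps)$ for the second) and your remark that $\ell$ affects only the parallel work and not the round count are all accurate.
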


\subsection{Randomized Rounding}
\label{sec:rounding}

We next show how to use the algorithm of the last section to solve the MDS
problem or another combinatorial covering or packing problem. Hence,
we show how to turn a fractional covering or packing solution into an
integer one by a distributed rounding algorithm. In particular, we
give an algorithm for integer covering and packing problems of the
forms

\hspace*{\fill}\parbox{45mm}{
  \begin{align*}
    \min && \vektor{c}^\mathrm{T}\vektor{x}' & \\
    \text{s.\ t.} && A\cdot\vektor{x}' & \ge \vektor{b} \\
    && x_i' & \in \mathbb{N}.
  \end{align*}}\hfill
\parbox{1cm}{\hfill(P$_I$)}\ \
\hspace*{\fill}\parbox{45mm}{
  \begin{align*}
    \min && \vektor{b}^\mathrm{T}\vektor{y}' & \\
    \text{s.\ t.} && A^{\mathrm{T}}\cdot\vektor{y}' & \le \vektor{c} \\
    && y_i' & \in \mathbb{N}.
  \end{align*}}\hfill
\parbox{1cm}{\hfill(D$_I$)}

\noindent with matrix elements $a_{ij}\in\{0,1\}$. LPs (P) and (D) are
the fractional relaxations of (P$_I$) and (D$_I$). Not that we denote
the solution vectors for the integer program by $\vektor{x}'$ and
$\vektor{y}'$ whereas the solution vectors for the corresponding LPs
are called $\vektor{x}$ and $\vektor{y}$.

We start with covering problems (problems of the form of (P)).
Because the $a_{ij}$ and the $x_i$ are restricted to integer values,
w.l.o.g.\ we can round up all $b_j$ to the next integer value. After
solving/approximating the LP, each primal node $v_i^{p}$ executes
Algorithm \ref{alg:roundcover}. The value of the parameter $\lambda$
will be determined later.

\begin{algorithm}[H]
  \SetArgSty{}

  \eIf{$x_i\geq 1(\lambda\ln\Delta_p)$}
  {
    $x_i' \leftarrow \lceil x_i\rceil$
  }
  {
    $p_i \leftarrow x_i\cdot\lambda\ln\Delta_p$\;
    $x_i' \leftarrow 1$ with probability $p_i$ and $x_i'\leftarrow 0$ otherwise
  }

  \caption{Distributed Randomized Rouding: Covering Problems}
  \label{alg:roundcover}
\end{algorithm}

The expected value of the objective function is
$\mathrm{E}[\vektor{c}^\mathrm{T}\vektor{x}']\le\lambda\ln\Delta_p\cdot
\vektor{c}^\mathrm{T}\vektor{x}$. Yet regardless of how we choose
$\lambda$, there remains a non-zero probability that the obtained
integer solution is not feasible. To overcome this, we have to
increase some of the $x_i'$. Assume that the $j^\mathrm{th}$
constraint is not satisfied. Let $\vektor{a}_j$ be the row vector
representing the $j^\mathrm{th}$ row of the matrix $A$ and let
$b_j':=b_j-\vektor{a}_i\vektor{x}'$ be the missing weight to make the
$j^\mathrm{th}$ row feasible. Further, let $\ijmin$ be the index of
the minimum $c_i$ for which $a_{ji}=1$.  We set
$x_{\ijmin}':=x_{\ijmin}'+b_j'$. Applied to all non-satisfied primal
constraints, this gives a feasible solution for the considered integer
covering problem.

\begin{theorem}\label{thm:intcovering}
  Consider an integer covering problem (P$_I$) with $a_{ij}=\{0,1\}$
  and $b_j\in\mathbb{N}$.  Furthermore, let $\vektor{x}$ be an
  $\alpha$-approximate solution for the LP relaxation (P) of
  (P$_I$). The above described algorithm computes an
  $\Oh(\alpha\log\Delta_p)$-approximation $\vektor{x}'$ for (P$_I$) in
  a constant number of rounds.
\end{theorem}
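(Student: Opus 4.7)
My plan is to bound separately (i)~the expected cost of the randomly rounded vector $\vektor{x}^{(0)}$ produced before the fix-up step, and (ii)~the expected cost of the fix-up. In the end I will combine the two bounds and observe that the whole procedure runs in a constant number of rounds.

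\textbf{Step 1 (cost of the initial rounding).} I split the primal indices into ``large'' ones with $x_i\ge 1/(\lambda\ln\Delta_p)$ and ``small'' ones. For a large index, $x_i^{(0)}=\lceil x_i\rceil\le x_i+1\le (1+\lambda\ln\Delta_p)\,x_i$; for a small index, $\mathbb{E}[x_i^{(0)}]=\lambda\ln\Delta_p\cdot x_i$. Summing gives $\mathbb{E}[\vektor{c}^T\vektor{x}^{(0)}]\le 2\lambda\ln\Delta_p\cdot\vektor{c}^T\vektor{x}$, which is $O(\alpha\log\Delta_p)\cdot\mathrm{OPT}_{\lp}$.

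\textbf{Step 2 (per-constraint failure probability).} Fix a constraint $j$ with support $S_j=\{i:a_{ji}=1\}$ and decompose $S_j=L_j\cup R_j$ into large and small variables. Write $\tilde b_j:=b_j-\sum_{i\in L_j}\lceil x_i\rceil$. If $\tilde b_j\le 0$, the constraint is deterministically satisfied. Otherwise LP-feasibility gives $\sum_{i\in R_j}x_i\ge\tilde b_j$, so the random sum $X:=\sum_{i\in R_j}X_i$ (with the Bernoullis $X_i$ being independent across $i$, and independent of the rounding for other constraints) has $\mu:=\mathbb{E}[X]\ge\lambda\ln\Delta_p\cdot\tilde b_j$. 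A standard multiplicative Chernoff bound then gives
\[
\Pr[X<\tilde b_j]\ \le\ \exp\!\bigl(-c\mu\bigr)\ \le\ \Delta_p^{-c\lambda\tilde b_j}\ \le\ \Delta_p^{-c\lambda}
\]
for an absolute constant $c>0$ (using $\tilde b_j\ge 1$ since $b_j\in\mathbb{N}$ and $\lceil x_i\rceil$ is integer).

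\textbf{Step 3 (cost of the fix-up).} When constraint $j$ fails with residual $b_j'\le b_j$, we add $b_j'$ to $x_{\ijmin}'$, incurring cost $c_{\ijmin}\,b_j'$. Since $\ijmin$ minimises $c_i$ over $S_j$ and LP-feasibility gives $b_j\le\sum_{i\in S_j}x_i$,
\[
c_{\ijmin}\,b_j'\ \le\ c_{\ijmin}\sum_{i\in S_j}x_i\ \le\ \sum_{i\in S_j}c_i x_i.
\]
Taking expectations and swapping the order of summation, the expected total fix-up cost is at most
\[
\sum_{j}\Pr[\text{constraint }j\text{ fails}]\sum_{i\in S_j}c_i x_i\ =\ \sum_i c_i x_i\sum_{j:a_{ji}=1}\Pr[\text{fail}_j]\ \le\ \Delta_p\cdot\Delta_p^{-c\lambda}\cdot\vektor{c}^T\vektor{x},
\]
since each $i$ appears in at most $\Delta_p$ constraints. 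Choosing $\lambda$ a large enough constant makes this at most $\vektor{c}^T\vektor{x}$, which is absorbed into the bound from Step~1.

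\textbf{Step 4 (combining and locality).} Adding the two expected costs gives $\mathbb{E}[\vektor{c}^T\vektor{x}']=O(\log\Delta_p)\,\vektor{c}^T\vektor{x}=O(\alpha\log\Delta_p)\,\mathrm{OPT}_{(P_I)}$, where the last step uses LP-relaxation and the $\alpha$-approximation assumption. For the round complexity: the rounding decision at $v_i^p$ is purely local; the residual $b_j'$ at $v_j^d$ is learned in one round from its primal neighbours; and designating $\ijmin$ and informing that primal node takes one further round. The whole procedure therefore finishes in $O(1)$ rounds.

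The main obstacle I expect is the Chernoff step in Step~2: one has to make sure that ``large'' variables are absorbed cleanly, that the random deviation threshold $\tilde b_j$ is at least $1$ (which uses the integrality of $b_j$ and $\lceil x_i\rceil$), and that the failure probability decays fast enough in $\Delta_p$ to survive the $\Delta_p$-factor loss in the union-style bound of Step~3. Tuning the constant $\lambda$ to simultaneously control both Steps~1 and 3 is where the argument has to be careful, but everything else is routine.
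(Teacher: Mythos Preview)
Your argument is correct, and the Chernoff step and constant-round claim match the paper's proof closely. The one place where you genuinely diverge from the paper is Step~3, the bound on the expected fix-up cost. You charge the fix-up at constraint $j$ to the primal LP solution via
\[
c_{\ijmin}\,b_j'\ \le\ c_{\ijmin}\sum_{i\in S_j}x_i\ \le\ \sum_{i\in S_j}c_i x_i,
\]
and then double-count over the at most $\Delta_p$ constraints containing each $i$, obtaining a bound of $\Delta_p\cdot\Delta_p^{-c\lambda}\cdot\vektor{c}^T\vektor{x}$. The paper instead defines auxiliary dual values $\tilde{y}_j:=b_j'c_{\ijmin}/b_j$, shows that $\mathbb{E}[\tilde{y}_j]\le c_{\ijmin}/\Delta_p$, and concludes that $\mathbb{E}[\tilde{\vektor{y}}]$ is feasible for (D); by weak LP duality the expected fix-up is then at most $\mathrm{OPT}_{\lp}$ rather than $\vektor{c}^T\vektor{x}$. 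So the paper's fix-up bound is tighter by a factor $\alpha$ (it is absolute, not relative to the given LP solution), but since the Step~1 term already carries an $\alpha\log\Delta_p$ factor, your weaker bound is absorbed and the final $\Oh(\alpha\log\Delta_p)$ conclusion is identical. Your primal double-counting argument is arguably more elementary because it avoids invoking LP duality a second time; the paper's dual argument is cleaner in that it needs only $q_j\le 1/\Delta_p$ rather than $q_j\le\Delta_p^{-c\lambda}$ with $c\lambda>1$.

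One small inaccuracy worth cleaning up: in Step~2 you remark that the Bernoullis are ``independent of the rounding for other constraints.'' This is not true---different constraints share primal variables---but you never use it: Step~3 relies only on linearity of expectation, not on independence across constraints. Just drop the phrase.
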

\begin{proof}
  As stated above, the expected approximation ratio of the first part
  of the algorithm is $\lambda\ln\Delta_p$. In order to bound the
  additional weight of the second part, where $x_{\ijmin}'$ is
  increased by $b_j'$, we define dual variables
  $\tilde{y}_j:=b_j'c_{\ijmin}/b_j$. For each unsatisfied primal
  constraint, the increase $c_{\ijmin}b_j'$ of the primal objective
  function is equal to the increase $b_j\tilde{y}_j$ of the dual
  objective function. If the $j^\mathrm{th}$ constraint is not
  satisfied, we have $b_j'\ge1$. Therefore,
  $\mathrm{E}[\tilde{y}_j]\le q_j c_{\ijmin}$, where $q_j$ is the
  probability that the $j^\mathrm{th}$ primal inequality is not
  fulfilled.

  In order to get an upper bound on the probability $q_j$, we have to
  look at the sum of the $x_i'$ before the randomized rounding step in
  Line 5 of the algorithm. Let $\beta_j:=b_i-\vektor{a}_i\vektor{x'}$
  be the missing weight in row $j$ before Line 5. Because the
  $x$-values correspond to a feasible solution for the LP, the sum of
  the $p_i$ involved in row $j$ is at least
  $\beta_j\lambda\ln\Delta_p$. For the following analysis, we assume
  that $\ln\Delta_p\ge1$. If $\ln\Delta_p<1$, applying only the last
  step of the described algorithm gives a simple distributed
  2-approximation for the considered integer program. Using a
  Chernoff bound, we can bound $q_j$ as
  \begin{equation*}
    q_j <
    e^{-\frac{1}{2}\beta_j\lambda\ln\Delta_p
      (1-\frac{1}{\lambda\ln\Delta_p})^2}
    \le \left(\frac{1}{\Delta_p}\right)^
    {\frac{1}{2}\lambda(1-\frac{1}{\lambda})^2} \!\!\!\!\!\!\le
    \frac{1}{\Delta_p}.
  \end{equation*}
  In the second inequality, we use that $\beta_j\ge1$.  For the last
  inequality, we have to choose $\lambda$ such that
  $\lambda(1-1/\lambda)^2/2\ge1$ (i.e., $\lambda\ge2+\sqrt{3}$). Thus,
  the expected value of $\tilde{y}_j$ is $\mathrm{E}[\tilde{y}_j]\le
  c_{\ijmin}/\Delta_p$.  Hence, by definition of $c_{\ijmin}$, in
  expectation the $\tilde{y}$-values form a feasible solution for
  (D). Therefore, the expected increase of the objective function
  $\vektor{c}^\mathrm{T}\vektor{x}'$ in the last step after the
  randomized rounding is upper-bounded by the objective function of an
  optimal solution for (P).
\hspace*{\fill}\end{proof}

Combining Algorithms \ref{alg:lpalg} and \ref{alg:roundcover}, we
obtain an $\Oh(\log\Delta)$-approximation for MDS in $\Oh(\log n)$
rounds.

We now turn our attention to integer packing problems. We have an
integer program of the form of (D$_I$) where all $a_{ij}\in\{0,1\}$ and
where $\vektor{y}'\in\mathbb{N}^n$. We can w.l.o.g.\ assume that the
$c_j$ are integers because rounding down each $c_j$ to the next
integer has no influence on the feasible region. Each dual node
$v_i^{d}$ applies Algorithm \ref{alg:roundpacking}.

\begin{algorithm}[H]
  \SetArgSty{}

  \eIf{$y_i\geq 1$}
  {
    $y_i' \leftarrow \lfloor y_i\rfloor$
  }
  {
    $p_i \leftarrow 1/(2e\Delta_d)$\;
    $y_i' \leftarrow 1$ with probability $p_i$ and $y_i'\leftarrow 0$ otherwise
  }

  \If{$y_i'\in$ `non-satisfied constraint'}
  {
    $y_i' \leftarrow \lfloor y_i\rfloor$
  }

  \caption{Distributed Randomized Rouding: Packing Problems}
  \label{alg:roundpacking}
\end{algorithm}

Clearly, the yields a feasible solution for the problem. The
approximation ratio of the algorithm is given by the next theorem.
\begin{theorem}\label{thm:intpacking}
  Let (D$_I$) be an integer covering problem with $a_{ij}=\{0,1\}$ and
  $c_j\in\mathbb{N}$.  Furthermore, let $\vektor{y}$ be an
  $\alpha$-approximate solution for the LP relaxation of
  (D$_I$). Algorithm \ref{alg:roundpacking} computes an
  $\Oh(\alpha\Delta_d)$-approximation $\vektor{y}'$ for (D$_I$) in a
  constant number of rounds.
\end{theorem}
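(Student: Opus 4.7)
My plan is to prove three things about Algorithm~\ref{alg:roundpacking}: feasibility of $\vektor{y}'$, an expected objective value of $\Omega(\vektor{b}^T\vektor{y}/\Delta_d)$, and the conversion of this into the advertised $O(\alpha\Delta_d)$-approximation via weak LP duality combined with the $\alpha$-approximation quality of $\vektor{y}$. The round complexity is $O(1)$ since every line only requires a node to know its $G_{\lp}$-neighbours' current values.

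Feasibility will follow from the reset step. The first two branches produce $y_i'=\lfloor y_i\rfloor$ whenever $y_i\geq 1$ and $y_i'\in\{0,1\}$ otherwise, so only the randomly-rounded $y_i'=1$ values can ever overflow a constraint. The reset step zeroes the contribution of every such $y_i'$ sitting inside an overflowing constraint, and afterwards any constraint $j$ is satisfied because the integer parts alone sum to at most $\sum_{i'\in N(j)}y_{i'}\leq c_j$ and the surviving Bernoulli-rounded terms are precisely those that did not push their incident constraints past $c_j$.

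For the approximation ratio I decompose the expected objective by the size of $y_i$. For $i$ with $y_i\geq 1$ the deterministic rounding contributes $b_i\lfloor y_i\rfloor\geq b_iy_i/2$, which already matches the target. For $i$ with $y_i<1$ I write $E[y_i']=p\cdot\Pr[\text{no reset}\mid X_i=1]$ with $p=1/(2e\Delta_d)$, where $X_i$ is the Bernoulli$(p)$ indicator of the initial rounding, and bound the conditional reset probability by a union bound over the at most $\Delta_d$ constraints incident to~$i$. Conditioning on $X_i=1$, constraint $j\in N(i)$ overflows only when the Bernoulli sum $Z_j'=\sum_{i'\in N(j)\setminus\{i\},\,y_{i'}<1}X_{i'}$ exceeds $c_j-I_j-1$, where $I_j$ is the integer-part contribution of the other nodes; LP feasibility combined with $y_i>0$ and $c_j\in\mathbb{N}$ forces the integer $I_j\leq c_j-1$, so the threshold is nonnegative and Markov's inequality on $Z_j'$ gives a small per-constraint bound. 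Aggregating the expected loss across all nodes—grouping by constraint so that each constraint contributes only once to the accounting rather than $\Delta_d$ times—controls the total loss by a constant fraction of the gain $p\sum_{i:y_i<1}b_i$, yielding $E[\vektor{b}^T\vektor{y}']=\Omega(\vektor{b}^T\vektor{y}/\Delta_d)$. Combining with $\vektor{b}^T\vektor{y}\geq\vektor{b}^T\vektor{y}^{\mathrm{LP}\ast}/\alpha\geq\vektor{b}^T\vektor{y}^{\mathrm{IP}\ast}/\alpha$ (since the LP optimum upper-bounds the integer optimum for a packing problem) then delivers the stated $O(\alpha\Delta_d)$ bound.

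The main obstacle is exactly this aggregation step. A naive per-node union bound inflates the reset probability by a factor of $\Delta_d$ and forces the analysis to depend on the primal degree $\Delta_p$, which can be much larger than the constant one would want. Exchanging the order of summation so that the loss attributable to constraint $j$ is weighed by the combined $b$-weight of $N(j)\cap\{y_{i'}<1\}$, and exploiting the LP slack (each constraint's total fractional weight is strictly less than $c_j$, so the expected Bernoulli mass $E[Z_j]$ on that constraint is at most $1/(2e)$) are what keep the global loss bounded by a constant fraction of the gain. This is the quantitative heart of the theorem; the remaining calculations are routine.
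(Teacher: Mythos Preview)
There is a genuine gap in the analysis. Your claim that ``the expected Bernoulli mass $E[Z_j]$ on that constraint is at most $1/(2e)$'' is false, and the Markov/aggregation route does not recover from this.

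The point is that in Algorithm~\ref{alg:roundpacking} the rounding probability is \emph{uniform}: every $i$ with $y_i<1$ sets $X_i=1$ with probability $p=1/(2e\Delta_d)$, independently of $y_i$. Hence
\[
E[Z_j]\;=\;\bigl|\{i\in N(j):y_i<1\}\bigr|\cdot p\;\le\;\frac{\Delta_p}{2e\Delta_d},
\]
and LP feasibility $\sum_{i\in N(j)}y_i\le c_j$ says nothing about $|N(j)|$ or $E[Z_j]$ because $p$ is not proportional to $y_i$. When $\Delta_p\gg\Delta_d$ this expectation is large, so Markov gives only $q_j^{(i)}\le \Delta_p p/t_j$. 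Plugging this into your double-counting step yields a total loss bounded by $p^2\sum_j(m_j/t_j)B_j\le p\cdot(\Delta_p/(2e))\cdot\sum_{y_i<1}b_i$, which exceeds the gain once $\Delta_p>2e$; the exchange of summation does not remove the $\Delta_p$ dependence, it only removes the extra $\Delta_d$ factor from the naive union bound.

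The paper's argument is structurally different and avoids this obstacle. It works per node: for each $i$ with $y_i<1$ and $X_i=1$, it bounds the survival probability $\Pr[\text{no reset}\mid X_i=1]$ by a constant directly. Two ingredients replace your Markov/union-bound combination: (i) a Chernoff bound is used on each $Z_j$ to obtain $q_j\le 1/\Delta_d$, and (ii) instead of a union bound over the $\le\Delta_d$ incident constraints, the paper invokes positive correlation (the FKG-type inequality of~\cite{srinivasan95}) to lower bound $\Pr[\text{all incident constraints satisfied}]$ by the product $\prod_j(1-q_j)\ge(1-1/\Delta_d)^{\Delta_d}\ge 1/4$. This gives $E[y_i']\ge p/4$ for every such $i$, and the $\Omega(\vektor{b}^T\vektor{y}/\Delta_d)$ bound follows immediately. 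Your proposal has neither of these ingredients, and the elementary substitutes you propose do not suffice.
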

\begin{proof}
  After Line 6, the expected value of the objective function is
  $\vektor{b}^\mathrm{T}\vektor{y}' \ge
  \vektor{b}^\mathrm{T}\vektor{y}/(2e\Delta_d)$. We will now show that
  a non-zero $y_i'$ stays non-zero with constant probability in Line
  8.  Let $q_j$ be the probability that the $j^\mathrm{th}$ constraint
  of the integer program is not satisfied given that $y_i'$ has been
  set to 1 in Line 5. For convenience, we define $Y_j':=\sum_i
  a_{ij}y_i'$. If $c_j\ge2$, we apply a Chernoff bound to obtain
  \begin{eqnarray*}
    q_j& =& \prob[Y_j'>c_j\,\big|\,y_i'=1]\ \le\
    \prob[Y_j'>c_j-1] \\
    &<& \left(\frac{e^{e\Delta_d-1}}
      {(e\Delta_c)^{e\Delta_d}}\right)^{c_j/(2e\Delta_d)}
    < \frac{1}{\Delta_d}.
  \end{eqnarray*}
  If $c_j=1$, we get
  \begin{eqnarray*}
    q_j &\le& 1-\prob[Y_j'=0]\ =\
    1-\!\!\!\!\prod_{v_i^d\in\Gamma(v_j^p)}\!\!\!\!\left(1-p_i\right)\\
    &\le& 1-\left(1-\frac{1}{2e\Delta_d}\right)\ =\ \frac{1}{2e\Delta_d}.
  \end{eqnarray*}
  The probability that all dual constraints containing $y_i'$ are
  satisfied is lower-bounded by the product of the probabilities for
  each constraint \cite{srinivasan95}. Therefore, under the natural
  assumption that $\Delta_d\ge2$:
  \[
  \prob[y_i'=1\text{ after Line 8}] \ge
  \left(1-\frac{1}{\Delta_d}\right)^{\Delta_d} \ge \frac{1}{4}.
  \]
  Thus the expected value of the objective function of the integer
  program (D$_I$) is
  \[
  \mathrm{E}[\vektor{b}^\mathrm{T}\vektor{y}']\
  \ge\ 8e\Delta_d\cdot\vektor{b}^\mathrm{T}\vektor{y}.
  \]
\hspace*{\fill}\end{proof}

\paragraph*{Remark} As stated, Algorithms \ref{alg:roundcover} and
\ref{alg:roundpacking} require the nodes to know the maximum primal
and dual degrees $\Delta_p$ and $\Delta_d$, respectively. In both
cases, it would be possible to replace the use of $\Delta_p$ and
$\Delta_d$ by local estimates of these quantities. In order to keep
the algorithms and the analysis as simple as possible, we decided to
state and analyze them in the present form.

\subsection{Connecting a Dominating Set}
\label{sec:cds-upper}

An important applications of dominating sets in networks is to obtain
clusterings in ad hoc or sensor networks. In particular, clustering
helps to improve information dissemination and routing algorithms in
such networks. However, for this purpose, one usually needs clusters
to be connected to each other and thus a connected dominating set as
underlying structure. Lemma \ref{lemma:distcds} in Section
\ref{sec:cdslower} shows that every dominating set $D$ can be extended
to a connected dominating set $D'$ of size $|D'|<3|D|$. In the
following, we described a simple distributed strategy to convert any
dominating set into a connected dominating set that is only slightly
larger. A similar strategy is also used in \cite{dubhashi03}.

Assume that we are given a dominating set $D$ of the network graph
$G$. As in the proof of Lemma \ref{lemma:distcds}, we define a graph
$G_D$ as follows. The node set of $G_D$ is $D$ and there is an edge
between $u,v\in D$ iff their distance in $G$ is at most $3$. We have
seen that $G_D$ is connected and thus, any spanning tree of
$G_D$induces a connected dominating set of size $\Oh(D)$.
Unfortunately, for a local, distributed algorithm, it is not possible
to compute a spanning tree of $G_D$.  Nevertheless, a similar approach
also works for distributed algorithms. Instead of computing a spanning
tree of $G_D$, it is sufficient to compute any sparse spanning
subgraph of $G_D$. If the number of edges of the subgraph of $G_D$ is
linear in the number of nodes $|D|$ of $G_D$, we obtain a connected
dominating set $S'$ which is only by a constant factor larger than
$D$.

We therefore need to solve the following problem. Given a graph
$G=(V,E)$ with $|V|=n$, we want to compute a spanning subgraph $G'$ of
$G$ with a minimal number of edges. For an arbitrary $k\ge1$, the
following Algorithm \ref{alg:distcds} shows how to compute such a
spanning subgraph in $k$ rounds. For the algorithm, we assume that all
edges $e=(u,v)$ of $G$ have a unique weight $w_e$ and that there is a
total order on all edge weights. If there are no natural edge weights,
a weight for $(u,v)$ can for example be constructed by taking the
ordered pair of the IDs of the nodes $u$ and $v$. Two weights can be
compared using lexicographic order.

\begin{algorithm}[H]
  \SetArgSty{}

  $G' \leftarrow G$\;
  \lForAll{$u\in V$}{ $u$ \textbf{collects} complete $k$-neighborhood}\;

  \ForAll{$e\in E$}{
    \If{weight $w_e$ of $e$ is largest in any cycle of length $\leq
      2k$}
    { remove $e$ from $G$}
  }
  \caption{Computing a sparse connected subgraph}
  \label{alg:distcds}
\end{algorithm}

The following lemma shows that Algorithm \ref{alg:distcds} indeed
computes a sparse connected subgraph $G'$ of $G$.

\begin{lemma}\label{lemma:nofedges}
  For every $n$-node connected graph $G=(V,E)$ and every $k$,
  Algorithm \ref{alg:distcds} computes a spanning subgraph
  $G'=(V,E')$ of $G$ for which the number of edges is bounded by
  $|E'|\le n^{1+O(1/k)}$.
\end{lemma}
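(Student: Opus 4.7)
The plan is to establish two properties of the output graph $G'=(V,E')$: (i) $G'$ is spanning and connected, and (ii) $G'$ has girth greater than $2k$. Then the edge bound will follow from a Moore-type counting argument.

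For (i), I would invoke the classical cycle property of minimum spanning trees: an edge that is strictly the largest weight on some cycle of $G$ never belongs to the (unique, since weights are distinct) MST of $G$. Every edge removed by Algorithm~\ref{alg:distcds} is, by definition, the largest-weight edge on some cycle of $G$ (of length at most $2k$), so no MST edge is ever removed. Therefore the MST of $G$ is contained in $E'$, and since $G$ is connected, $G'$ is a spanning connected subgraph of $G$.

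For (ii), I would argue by contradiction: suppose $G'$ contains a cycle $C$ of length at most $2k$. Since $E'\subseteq E$, the cycle $C$ also exists in $G$, and its unique maximum-weight edge $e^\star$ satisfies the removal condition of the algorithm. Hence $e^\star\notin E'$, contradicting $e^\star\in C\subseteq E'$. Thus every cycle of $G'$ has length strictly greater than $2k$.

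The edge bound then follows from a standard Moore-bound argument: in any graph of girth greater than $2k$, the ball of radius $\lfloor k/2\rfloor$ around any vertex is a tree, so if $d$ denotes the average degree of $G'$, then counting along a BFS tree from a vertex of degree at least $d$ gives $1+d\sum_{i=0}^{\lfloor k/2\rfloor-1}(d-1)^i \le n$, which yields $d\le n^{O(1/k)}$. Consequently $|E'|\le dn/2\le n^{1+O(1/k)}$.

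The only real subtlety is the counting step: to get the bound $n^{1+O(1/k)}$ cleanly for all $k$ (including small $k$, and when the graph is irregular), one should restrict attention to a high-degree vertex and use the tree structure of its $\lfloor k/2\rfloor$-neighborhood in $G'$, or invoke a standard form of the Moore bound (for example, Alon--Hoory--Linial). This is the only place where a careful calculation is needed; the structural claims of (i) and (ii) are immediate from the algorithm's removal rule.
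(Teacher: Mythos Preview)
Your proposal is correct and follows essentially the same approach as the paper: establish connectivity, show the girth of $G'$ exceeds $2k$, and then invoke the extremal bound $|E'|\le n^{1+O(1/k)}$ for high-girth graphs. The only difference is cosmetic: for connectivity you invoke the MST cycle property (so the MST survives in $E'$), whereas the paper argues directly via a minimum-weight cut edge; these are the same fact, and your packaging is arguably cleaner. The paper simply cites the high-girth edge bound (e.g., Bollob\'as) rather than sketching the Moore-type count, but that is exactly the result you are reproducing.
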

\begin{proof} 
  We first prove that the produced $G'$ is connected. For the sake of
  contradiction, assume that $G'$ is not connected. Then, there must
  be a cut $(S,T)$ with $S\subseteq V$, $T=V\setminus S$, and
  $S,T\neq\emptyset$ such that $S\times T\cap E'=\emptyset$. However,
  since $G$ is connected, there must be an edge $e\in S\times T\cap E$
  crossing the given cut. Let $e$ be the edge with minimal weight
  among all edges crossing the cut. Edge $e$ can only be removed by
  Algorithm \ref{alg:distcds} if it has the largest weight of all
  edges in some cycle. However, all cycles containing $e$ also contain
  another edge $e'$ crossing the $(S,T)$-cut. By definition of $e$,
  $w_{e'}>w_e$ and therefore, $e$ is not deleted by the algorithm.

  Let us now look at the number of edges of $G'$. Because in every
  cycle of length at most $2k$ at least one edge is removed by
  Algorithm \ref{alg:distcds}, $G'$ has girth $g(G')\ge2k+1$. It is
  well-known that therefore, $G'$ has at most $|V|^{1+\Oh(1/k)}$
  edges (see e.g.~\cite{bollobas-book}).
\hspace*{\fill}\end{proof}

We can therefore formulate a $k$-round MCDS algorithm consisting of
the following three phases. First, a fractional dominating set is
computed using Algorithm \ref{alg:lpalg}. Second, we use the
randomized rounding scheme given by Algorithm \ref{alg:roundcover} to
obtain a dominating set $D$. Finally, Algorithm \ref{alg:distcds} is
applied to $G_D$. For each edge $(u,v)$ of the produced spanning
subgraph of $G_D$, we add the nodes (at most $2$) of a shortest path
connecting $u$ and $v$ in $G$ to $D$. Note that a $k$-round algorithm
on $G_D$ needs at most $3k$ rounds when executed on $G$. The achieved
approximation ratio is given by the following theorem.
\begin{theorem}\label{thm:distcds}
  In $\Oh(k)$ rounds, the above described MCDS algorithm computes a
  connected dominating set of expected size
  \[
  \Oh\!\left(\mathrm{CDS_{OPT}}\cdot n^{\Oh(1/k)}\cdot\log\Delta\right).
  \]
\end{theorem}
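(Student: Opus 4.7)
The plan is to chain together the guarantees of the three phases and argue that the inflation factors compose cleanly. First I would bound the size of the dominating set $D$ produced by Phases~1 and~2. By the corollary following Theorem~\ref{thm:lpalg}, Algorithm~\ref{alg:lpalg} in $\Oh(k)$ rounds produces (w.h.p.) a fractional dominating set whose value is within a factor $n^{\Oh(1/k)}$ of the LP optimum. Since the LP optimum is at most the integer minimum dominating set value $\mathrm{DS_{OPT}}$, and $\mathrm{DS_{OPT}} \leq \mathrm{CDS_{OPT}}$ (every CDS is a DS), the fractional solution has value $\Oh(\mathrm{CDS_{OPT}} \cdot n^{\Oh(1/k)})$. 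Then Theorem~\ref{thm:intcovering} converts this, in $\Oh(1)$ rounds and with $\Delta_p = \Delta$, to an integer dominating set $D$ with
\[
\mathrm{E}[|D|] \;=\; \Oh\!\left(\mathrm{CDS_{OPT}} \cdot n^{\Oh(1/k)} \cdot \log\Delta\right).
\]

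Next I would analyze the connectivity step. Following the proof of Lemma~\ref{lemma:distcds}, I would run Algorithm~\ref{alg:distcds} on the auxiliary graph $G_D$ whose vertex set is $D$ and whose edges link pairs at distance at most $3$ in $G$; a single round on $G_D$ is simulated by at most three rounds on $G$, so this phase costs $\Oh(k)$ rounds overall. By Lemma~\ref{lemma:nofedges}, the resulting spanning subgraph of $G_D$ has at most $|D|^{1+\Oh(1/k)}$ edges, and for each such edge $(u,v)$ I insert at most two intermediate nodes of a shortest $u$-$v$ path in $G$. This yields a connected dominating set $D' \supseteq D$ with
\[
|D'| \;\leq\; |D| \;+\; 2\,|D|^{1+\Oh(1/k)}.
\]

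Finally I would combine the two bounds. Using $|D| \leq n$, we have $|D|^{1+\Oh(1/k)} \leq |D| \cdot n^{\Oh(1/k)}$, so $|D'| = \Oh(|D| \cdot n^{\Oh(1/k)})$, and substituting the expectation bound on $|D|$ gives
\[
\mathrm{E}[|D'|] \;=\; \Oh\!\left(\mathrm{CDS_{OPT}} \cdot n^{\Oh(1/k)} \cdot \log\Delta\right),
\]
as claimed, with total round complexity $\Oh(k) + \Oh(1) + \Oh(k) = \Oh(k)$.

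The main subtlety I would want to verify carefully is that the three independent $n^{\Oh(1/k)}$ slack factors — one from the LP approximation, one from absorbing $|D|^{1+\Oh(1/k)}$ into $|D| \cdot n^{\Oh(1/k)}$, and implicit simulation overheads between $G_{\lp}$, $G_D$, and $G$ — all collapse into a single $n^{\Oh(1/k)}$ factor rather than degrading the exponent. The bound $|D| \leq n$ is the essential lever that makes this absorption go through; everything else is routine bookkeeping built on the stated lemmas.
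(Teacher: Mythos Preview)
Your proposal is correct and follows essentially the same route as the paper: bound $\E[|D|]$ via Theorems~\ref{thm:lpalg} and~\ref{thm:intcovering}, use Lemma~\ref{lemma:nofedges} to get $|D'|\le \Oh(|D|^{1+\Oh(1/k)})\le \Oh(|D|\cdot n^{\Oh(1/k)})$ pointwise, then take expectations and absorb the accumulated $n^{\Oh(1/k)}$ factors. Your write-up is actually a bit more explicit than the paper's (you spell out $\mathrm{DS_{OPT}}\le\mathrm{CDS_{OPT}}$ and the simulation overhead for $G_D$), but there is no substantive difference in the argument.
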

\begin{proof}
  Given the dominating set $D$, by Lemma \ref{lemma:nofedges}, the
  number of nodes of the connected dominating set $D'$ can be bounded
  by
  \[
  |D'|\ \le\ 3|D|^{1+\Oh(1/k)}\le\ 3|D|n^{\Oh(1/k)}
  \]
  and therefore
  \begin{equation}\label{eq:sizeCDS}
    \E[|D'|]\ \le\ 3\E[|D|]n^{\Oh(1/k)}.
  \end{equation}
  Using Theorems \ref{thm:lpalg} and \ref{thm:intcovering}, it
  follows that the expected size of the dominating set $D$ is
  \begin{equation*}
  \E[|D|]\ \in\ \Oh\big(\mathrm{DS_{OPT}}n^{\Oh(1/k)}\log\Delta\big).
  \end{equation*}
  Plugging this into \Cref{eq:sizeCDS} completes the
  proof.
\hspace*{\fill}\end{proof}

\subsection{Role of Randomization and Distributed Derandomization}
\label{sec:randomization}

Randomization plays a crucial role in distributed algorithms. For
many problems such as computing a MIS, there are simple and
efficient randomized algorithms. For the same problems, the best
deterministic algorithms are much more complicated and usually
significantly slower. The most important use of randomization in
distributed algorithms is breaking symmetries. We have seen that in
certain cases, LP relaxation can be used to ``avoid'' symmetry
breaking.  The question is whether the use of randomness can also be
avoided in such cases?
In the following, we show that this indeed is the case, i.e., we
show that in the LOCAL model \emph{any distributed randomized
algorithm for solving a linear program can be derandomized.}

Assume that we are given a randomized distributed $k$-round algorithm
$\mathcal{A}$ which computes a solution for an arbitrary linear
program $P$. We assume that $\mathcal{A}$ explicitly solves $P$ such
that w.l.o.g.\ we can assume that each variable $x_i$ of $P$ is
associated with a node $v$ which computes $x_i$. We also assume that
$\mathcal{A}$ always terminates with a feasible solution. The
following theorem shows that $\mathcal{A}$ can be derandomized.

\begin{theorem}\label{thm:derand}
  Algorithm $\mathcal{A}$ can be
  transformed into a deterministic $k$-round algorithm $\mathcal{A}'$
  for solving $P$. The objective value of the solution produced by
  $\mathcal{A'}$ is equal to the expected objective value of the
  solution computed by $\mathcal{A}$.
\end{theorem}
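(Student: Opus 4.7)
The plan is to exploit two ingredients: (i) the feasible region of any LP is convex (in particular, closed under expectation of random feasible points), and (ii) in the LOCAL model the output $x_i$ computed by $\mathcal{A}$ at node $v$ depends only on $v$'s $k$-hop view $\mathcal{V}_{v,k}$ together with the random bits of the nodes in $\Gamma_k(v)$, since in $k$ rounds no information from outside $\Gamma_k(v)$ can reach $v$. Using these two facts, I would define $\mathcal{A}'$ to output, at the node $v$ holding $x_i$, the quantity $\bar{x}_i := \mathbb{E}[x_i]$, where the expectation is taken over $\mathcal{A}$'s random coins. The intuition is that averaging over the randomness commutes with every linear constraint, so $\bar{\mathbf{x}}$ is feasible and attains objective value exactly $\mathbb{E}[\mathbf{c}^T\mathbf{x}]$.

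The first step of the argument is to show that $\bar{x}_i$ is actually computable by $v$ in $k$ rounds without any communication beyond what $\mathcal{A}$ itself does. Node $v$ spends $k$ rounds gathering the deterministic part of $\mathcal{V}_{v,k}$ (topology and identifiers). Then, using unbounded local computation (permitted in LOCAL), $v$ enumerates / integrates over the product of the independent per-node random-bit distributions of the nodes in $\Gamma_k(v)$ and, for each outcome, simulates $\mathcal{A}$ to read off the value it would assign to $x_i$. Because the random bits are independent across nodes and because $x_i$ is measurable with respect to the $\sigma$-algebra generated by bits and IDs inside $\Gamma_k(v)$, the local average produced this way coincides with the true marginal expectation $\bar{x}_i$ in the global execution of $\mathcal{A}$.

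Next I would verify feasibility and the objective-value claim. Since $\mathcal{A}$ is assumed to always terminate with a feasible solution, for \emph{every} realization of the random bits the vector $\mathbf{x}$ satisfies $A\mathbf{x}\ge\mathbf{b}$ and $\mathbf{x}\ge\mathbf{0}$. Taking expectations coordinate-wise and applying linearity yields $A\bar{\mathbf{x}}=\mathbb{E}[A\mathbf{x}]\ge\mathbf{b}$ and $\bar{\mathbf{x}}\ge\mathbf{0}$, so $\bar{\mathbf{x}}$ is feasible for $P$. The same linearity gives $\mathbf{c}^T\bar{\mathbf{x}}=\mathbb{E}[\mathbf{c}^T\mathbf{x}]$, which is exactly the expected objective value of $\mathcal{A}$, as claimed. (This is where the LP relaxation, as opposed to an integer program, is essential: for an integer program the expectation is almost never integer.)

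The main ``obstacle'' is really a consistency check rather than a technical hurdle: one must be careful that the marginal $\bar{x}_i$ computed locally at $v$ is the same quantity that appears under the single global expectation $\mathbb{E}[\cdot]$ in the feasibility argument. This is precisely where $k$-locality of $\mathcal{A}$ and independence of per-node randomness combine — integrating out the bits of nodes outside $\Gamma_k(v)$ is trivial because $x_i$ does not depend on them. With that verified, convexity of the LP feasible polytope and linearity of expectation do all the remaining work, and the theorem follows.
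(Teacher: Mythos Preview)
Your proposal is correct and follows essentially the same approach as the paper: both define $\mathcal{A}'$ to output $\bar{x}_i=\E[x_i]$, argue that this is locally computable from the $k$-hop view by integrating over the neighborhood's random bits, and then use linearity of expectation for both the objective value and feasibility. The only cosmetic difference is that the paper phrases the feasibility step as a proof by contradiction (a violated inequality would imply a non-zero probability of infeasibility for $\mathcal{A}$), whereas you state the equivalent direct version via convexity; your added remark about the local marginal coinciding with the global expectation is a useful clarification that the paper glosses over.
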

\begin{proof}
  We first show that for the node computing the
  value of variable $x_i$, it is possible to deterministically compute
  the expected value $\E[x_i]$. We have seen that in the LOCAL
  model every deterministic $k$-round algorithm can be formulated as
  follows. First, every node collects all information up to distance
  $k$. Then, each node computes its output based on this information.
  The same technique can also be applied for randomized algorithms.
  First, every node computes all its random bits. Collecting the
  $k$-neighborhood then also includes collecting the random bits of
  all nodes in the $k$-neighborhood. However, instead of computing
  $x_i$ as a function of the collected information (including the
  random bits), we can also compute $\E[x_i]$ without even
  knowing the random bits.

  In algorithm $\mathcal{A'}$, the value of each variable is now set
  to the computed expected value. By linearity of expectation, the
  objective value of $\mathcal{A'}$'s solution is equal to the
  expected objective value of the solution of $\mathcal{A}$. It
  remains to prove that the computed solution is feasible. For the
  sake of contradiction, assume that this is not the case. Then, there
  must be an inequality of $P$ which is not satisfied. By linearity of
  expectation, this implies that this inequality is not satisfied in
  expectation for the randomized algorithm $\mathcal{A}$.  Therefore,
  there is a non-zero probability that $\mathcal{A}$ does not fulfill
  the given inequality, a contradiction to the assumption that
  $\mathcal{A}$ always computes a feasible solution.
\hspace*{\fill}\end{proof}

Theorem \ref{thm:derand} implies that the algorithm of Section
\ref{sec:lp-upper} could be derandomized to deterministically compute an
$(1+\eps)$-approximation for (P) and (D) in $\Oh(\log(n)/\eps)$
rounds.  It also means that in principle every distributed dominating
set algorithm (e.g.\ \cite{jia-podc01,rajagopalan98} could be turned into a
deterministic fractional dominating set algorithm with the same
approximation ratio. Hence, when solving integer linear programs in
the LOCAL model, randomization is only needed to break
symmetries. Note that this is really a property of the LOCAL
model and only true as long as there is no bound on message sizes and
local computations. The technique described in Theorem
\ref{thm:derand} can drastically increase message sizes and local
computations of a randomized distributed algorithm.



\section{Conclusions \& Future Work}\label{sec:conclusion}

\textbf{Lower Bounds: }Distributed systems is an area in computer
science with a strong lower bound culture. This is no coincidence as
lower bounds can be proved using \emph{indistinguishability}
arguments, i.e. that some nodes in the system cannot distinguish two
configurations, and therefore must make ``wrong'' decisions.

Indistinguishability arguments have also been used in locality. In
his seminal paper, Linial proved an $\Omega(\log^* \! n)$ lower
bound for coloring the ring topology
~\cite{linial92}. However, one cannot prove local inapproximability
bounds on the ring or other highly symmetric topologies, as they allow
for straight-forward purely local, constant approximation
solutions. Take for instance the minimum vertex cover problem (MVC):
In any $\delta$-regular graph, the algorithm which includes all nodes
in the vertex cover is already a 2-approximation. Each node will cover
at most $\delta$ edges, the graph has $n\delta/2$ edges, and therefore
at least $n/2$ nodes need to be in a vertex cover.

Further, also several natural asymmetric graph families enjoy
constant-time algorithms. For example, in a tree, choosing all inner
nodes yields a 2-approximation MVC. More generally, a similar
algorithm also yields a constant MVC approximation for arbitrary
graphs with bounded arboricity (i.e., graphs where all subgraphs are
sparse, includes minor-closed families such as planar graphs). On the
other extreme, also very dense graph classes have very efficient MVC
algorithms. Graphs from such families often have small diameter and in
addition, as each node can cover at most $n-1$ edges, in every graph
with $\Omega(n^2)$ edges, taking all the nodes leads to a trivial
constant MVC approximation. Thus, our lower bound construction in \Cref{sec:lower} requires the construction of a ``fractal'',
self-recursive graph that is neither too symmetric nor too asymmetric,
and has a variety of node degrees! To the best of our knowledge, not
many graphs with these ``non-properties'' are known in computer
science, where symmetry and regularity are often the key to a
solution.

\textbf{Upper Bounds: }It is interesting to compare the lower and
upper bounds for the various problems. The MVC algorithm presented
in \Cref{sec:mvc-upper} achieves an $O(\Delta^{1/k})$
approximation in $k$ communication rounds, and hence,
the lower and upper bounds achieved in
\Cref{thm:mvc-lowerbound,thm:mvc-approxratio} are almost
\emph{tight}. In particular, any distributed algorithm requires at
least $\Omega(\log\Delta/\log\log\Delta)$-hop neighborhood
information in order to achieve a
constant or polylogarithmic approximation ratio to the MVC problem, respectively, which is
exactly what our algorithm achieves for polylogarithmic
  approximation ratios. It has recently been shown that
  even a $(2+\eps)$-approximation for the MVC problem can be computed
  in time $O(\log\Delta/\log\log\Delta)$ and thus our lower bound is
  also tight for constant approximation ratios \cite{baryehuda16}.

Our bounds are not equally tight when expressed as a function of
$n$, rather than $\Delta$. In particular, the remaining gap between
our upper and lower bounds can be as large as $\Theta(\sqrt{\log
n/\log\log n})$. The
additional square-root in the lower bounds when formulated as a
function of $n$ follows inevitably from the high-girth construction
of $G_k$: In order to derive a lower-bound graph as described in
\Cref{sec:graph,sec:construction}, there must be
many ``bad'' nodes that have the same view as a few neighboring
``good'' nodes. If each bad node has a degree of $\delta_{bad}$ (in
$G_k$, this degree is $\delta_{bad}\in\Theta(n^{1/k})$) and if we
want to have girth at least $k$, the graph must contain at least
$n\geq \delta_{bad}^k$ nodes. Taking all good nodes and applying
\Cref{alg:mvc_fmm} of \Cref{sec:mvc-upper} to the
set of bad nodes, we obtain an approximation ratio of $\alpha\in
O(\delta_{bad}^{1/k})$ in $k$ communication rounds. Combining this
with the bound on the number of nodes in the graph, it follows that
there is no hope for a better lower bound than $\Omega(n^{1/k^2})$
with this technique. From this it follows that if we want to improve
the lower bound (i.e., by getting rid of its square-root), we either
need an entirely different proof technique, or we must handle graphs
with low girth in which nodes do not see trees in their $k$-hop
neighborhood, which would necessitate arguing about views containing
cycles.

\textbf{Future Work: }We believe that the study of local computation
and local approximation is relevant far beyond distributed
computing, and there remain numerous directions for future research.
Clearly, it is interesting to study the locality of other network
coordination problems that appear to be polylog-local, including for
example the \emph{maximum domatic partition problem} \cite{feige03},
the \emph{maximum unique coverage problem}~\cite{demaine-soda06},
or various coloring problems~\cite{coloringmonograph}.

Beyond these specific open problems, the most intriguing \emph{distant
  goal} of this line of research is to divide distributed problems
into \emph{complexity classes} according to the problems' local
nature. The existence of \emph{locality-preserving reductions} and the
fact that several of the problems discussed in this paper exhibit
similar characteristics with regard to local
computability/approximability raises the hope for something like a
\emph{locality hierarchy} of combinatorial optimization problems. It
would be particularly interesting to establish ties between such a
distributed hierarchy of complexity classes and the classic complexity
classes originating in the Turing model of
computation~\cite{Lenzen2008What}. A first step in this direction has
recently been done in \cite{distributeddecision},
where complexity classes for distributed decision problems were
defined. Note that unlike in standard sequential models, in a
distributed setting, the complexity of decision problems is often not
related to the complexity of the corresponding search problems.

Besides classifying computational problems, studying local computation
may also help in gaining a more profound understanding of the relative
strengths of the underlying \emph{network graph models} themselves. It
was shown in
~\cite{Schneider2008Log-Star}, for example, that a MIS can be
computed in unit disk graphs (as well as generalizations thereof) in
time $O(\log^* \! n)$, which---in view of Linial's lower bound on
the ring---is asymptotically optimal. Hence, in terms of local
computability, the vast family of unit disk graphs are equally hard
as a simple ring network. On the other hand, our lower bounds prove
that general graphs are strictly harder, thus separating these
network topologies.

\section{Acknowledgements}
We thank the anonymous reviewers of the paper for various helpful
comments. We also thank Mika G\"o\"os for bringing the common lifts
construction by \cite{angluin81} to our attention (used to simplify
the construction in \Cref{sec:construction}). We are also grateful to
Bar-Yehuda, Censor-Hillel, and Schwartzman \cite{baryehuda16} for
pointing out an error in an earlier draft \cite{previousversion} of
this paper.

\clearpage


\bibliographystyle{abbrv}
\bibliography{references}

\end{document}